\documentclass{article}
\usepackage{iclr2027_conference,times}
\usepackage{amsmath,amssymb,amsthm,mathtools,booktabs,graphicx,enumitem}
\usepackage{algorithm,algpseudocode}
\usepackage{hyperref}
\usepackage{url}
\newtheorem{theorem}{Theorem}[section]
\newtheorem{proposition}[theorem]{Proposition}
\newtheorem{lemma}[theorem]{Lemma}
\newtheorem{corollary}[theorem]{Corollary}
\newtheorem{definition}[theorem]{Definition}
\theoremstyle{remark}\newtheorem{remark}[theorem]{Remark}
\newcommand{\Hdot}{\dot H}
\makeatletter\newcommand{\tabcaption}{\def\@captype{table}\caption}\makeatother
\DeclareMathOperator{\KL}{KL}
\graphicspath{{figures/}}
\title{Backward Kolmogorov Transport:\\ Sampling Invariant Laws of Reversible\\ Diffusions from Trajectory Data}
\author{%
\parbox[t]{0.35\textwidth}{\centering\textbf{Yuanchao Xu}\thanks{Corresponding author: \texttt{xu.yuanchao.3a@kyoto-u.ac.jp}.}}%
\parbox[t]{0.35\textwidth}{\centering\textbf{Isao Ishikawa}}\\[5pt]
Center for Science Adventure and Collaborative Research Advancement (SACRA)\\
Graduate School of Science, Kyoto University, Kyoto 606-8502, Japan}

\iclrfinalcopy 

\makeatletter
\ificlrfinal
\def\@maketitle{\vbox{\hsize\textwidth
  {\centering\LARGE\sc\@title\par}\lhead{Preprint}%
  \vskip 0.25in
  {\centering\@author\par}%
  \vskip 0.2in}}
\fi
\makeatother

\begin{document}\maketitle

\begin{abstract}
We consider a reversible diffusion with unknown drift, observed through trajectories, and construct samplers of its invariant law. The trajectories contain samples of this law, but a sampler built from them by density or score estimation inherits the errors of these estimates, in particular on the barriers between metastable states. The trajectories also contain the dynamics, and for a reversible diffusion the ratio between the law of the process and its invariant law solves the backward Kolmogorov equation, so that the eigenpairs of the generator propagate it in closed form. Backward Kolmogorov transport (BKT) evaluates this ratio with eigenpairs estimated from the trajectories and coefficients computed once from the initial particles, and moves the particles along the Wasserstein gradient flow of the Kullback-Leibler divergence. The particles do not interact, BKT defines a transport map, and no density, score or drift is estimated. With exact eigenpairs BKT transports the spectral projection of the initial law exactly when this projection is positive. For estimated eigenpairs we prove a transport identity in which truncation enters only through the initial law and estimation only through the residual of the eigenpairs, and a local stability bound in the 2-Wasserstein distance. With coefficients computed from the transported particles, the sampling error of the retained modes cancels, and a central limit theorem identifies the asymptotic variance. Experiments on Ornstein-Uhlenbeck processes, multi-well potentials, separable products and alanine dipeptide, including comparisons with particle methods based on density estimates, agree with these results, and with an accurate spectrum the errors of BKT reach or fall below those of independent samples.
\end{abstract}

\section{Introduction}\label{sec:intro}

Consider a diffusion process with unknown drift that is observed through trajectories, as in molecular dynamics, stochastic partial differential equations and climate models, where the object of interest is the invariant law $\pi$ of the process. The theory of Koopman and transfer operators provides estimators of the generator from trajectories, with error bounds \citep{williams2015,klus2020,noe2013,wu2020,kostic2023,kostic2024,colbrook2023,kostic2022,meanti2023}. A stationary trajectory also contains samples of $\pi$, and one may ask what a sampler adds to them. The recorded samples are finite in number, correlated and fixed, whereas a sampler generates any number of new samples of $\pi$, from a prescribed initial law, along the relaxation path of the process and without further simulation, as generative models produce new samples from their training data \citep{song2021,lipman2023}. The question is how the trajectory data should be used for this purpose.

The Wasserstein gradient flow of the Kullback-Leibler (KL) divergence transports any initial law to $\pi$ along the Fokker-Planck path of the diffusion \citep{jko1998,otto2001,ambrosio2008}, and its velocity is the negative gradient of the logarithm of the ratio between the current law and $\pi$. Particle methods estimate this ratio statistically during the evolution, by kernel density estimation \citep{degond1990,russo1990,lions2001,carrillo2019,chertock2017,klebanov2024} or by a fitted score \citep{maoutsa2020,boffi2023,ilin2025,zhou2025}, and they require the drift or the score of the target, as does the weak generative sampler of \citet{cai2024}. From data, both the target and the current law of the particles must then be estimated from samples. Such estimates are biased by smoothing and noisy where the density is small, in particular on the barriers between metastable states through which the mass must be transported, and their errors do not decrease along the flow. A trajectory contains more than samples of $\pi$, namely the dynamics of the process. For a reversible diffusion the ratio between the current law and $\pi$ solves the backward Kolmogorov equation, so that the dynamics propagate it in closed form, through the eigenfunction expansion of the Koopman semigroup, whose spectrum encodes the relaxation rates and the metastable structure \citep{risken1989,bakry2014,pavliotis2014}. This spectrum is estimated in weak form, by averages over the data, through the Dirichlet form $\int|\nabla f|^2d\pi=\lim_{\tau\to0}(2\tau)^{-1}\mathbb E_\pi|f(X_\tau)-f(X_0)|^2$ of the generator, which measures how observables decorrelate along the trajectories, and no density needs to be estimated. For the dihedral angles of alanine dipeptide, a kernel particle method built on the frames of a molecular dynamics trajectory stops at about three times the error of direct sampling, whereas the transport built on the spectrum estimated from the same frames reaches it (Figure~\ref{fig:ala_main}).

With eigenpairs estimated from trajectories, this expansion underlies density forecasting \citep{berry2015,giannakis2019,zhao2023,wang2025} and the logarithmic-gradient control fields of rare-event simulation \citep{zhang2022koopman}. The transport of a measure along the negative logarithmic gradient of the evolved ratio is the heat-flow map of \citet{kim2012}, whose Lipschitz constant and stability are studied in \citet{neeman2022,mikulincer2023,brigati2024,chewi2025stability,chewi2026}, and density-equalizing maps compute it for a uniform target with the Fourier expansion of the heat kernel \citep{gastner2004} or with the Laplace-Beltrami operator of a surface \citep{choi2018}. In diffusion models the expansion of a known operator gives the score of a noising process that is run backward toward the data \citep{lou2023,shen2024,scarvelis2023}, with the moments of the data computed once in \citet{khoo2025}, and Koopman representations linearise trained generative flows \citep{turan2025}. In these works the operator is known, or the expansion is not used to sample the invariant law. Estimating the expansion from trajectories and using it to transport particles raises a question that does not arise in density forecasting, namely how the accuracy of the eigenpairs transfers to the particles, since the truncated ratio must remain accurate and positive along the path and its logarithmic gradient amplifies the error.

The samplers built so far on an estimated spectrum, the diffusion map particle systems of \citet{li2025} and the Koopman spectral Wasserstein gradient descent of \citet{xu2025}, are data-driven versions of the Laplacian-adjusted Wasserstein gradient descent (LAWGD) of \citet{chewi2020}, which kernelises the Wasserstein gradient flow of the $\chi^2$ divergence as Stein variational gradient descent does \citep{liu2016,teolis2026}. Their particles interact, since the velocity of each particle depends on all current positions through moments re-estimated at every step. In backward Kolmogorov transport (BKT) the velocity field is determined once by the source and the spectrum, the particles do not interact, and the method defines a transport map. Each particle solves its own equation, the sampling error of the source is damped by the semigroup, and the error analysis concerns a single flow map, without propagation of chaos, as for probability-flow samplers and flow matching \citep{song2021,chen2023,chen2023pfode,huang2025,deveney2023,shen2022,kwon2022,lipman2023,albergo2023,benton2023}.

The paper makes three contributions. First, we introduce BKT (Algorithm~\ref{alg:sfm}), which is, to our knowledge, the first method that moves non-interacting particles along the Fokker-Planck path of a process with the spectrum of its generator estimated from trajectories, and we show that with exact eigenpairs it transports a positive spectral projection of the source exactly (Proposition~\ref{prop:exact}). Second, for estimated eigenpairs we prove a transport identity (Proposition~\ref{prop:identity}) in which truncation enters only through the initial law, estimation only through the residual of the eigenpairs and the safeguards through a flux, a bound in total variation in which these terms carry no stability constant (Corollary~\ref{cor:tv}), and, for truncations that change sign, where global stability estimates for probability flows fail, a bound in the 2-Wasserstein distance by localisation on validity regions (Theorem~\ref{thm:main}). Third, for coefficients computed from the transported particles, we show that the sampling error of the source cancels on the retained modes (Proposition~\ref{prop:cancel}) and prove a central limit theorem (Theorem~\ref{thm:clt}) in which the asymptotic variance is that of an independent sample for population coefficients, is larger for coefficients from an independent sample, and for same-sample coefficients is that of the component orthogonal to the retained modes, at most of order $\lambda_{r+1}^{-1}$ for exact eigenpairs up to terms due to the finite horizon, the truncation and the safeguards. This is the mechanism of moment matching in Monte Carlo integration \citep{glasserman2004}, which also underlies kernel herding, support points, kernel thinning and the fixed points of Stein variational gradient descent \citep{chen2010herding,mak2018,dwivedi2024,liu2018}, and it explains the errors below independent sampling in Section~\ref{sec:exp}.

The subject of the paper is the transport and its analysis, and the eigenpairs are taken as input. In high dimension, and in particular for many metastable states, the estimation of the slow eigenpairs is the main difficulty. It is a long-standing problem of transfer operator methods \citep{schutte2023} that is separate from the transport, and since the bounds of Section~\ref{sec:theory} hold for arbitrary estimated eigenpairs, the kernel and neural spectral estimators developed for it can be combined with BKT without change. Section~\ref{sec:exp} discusses where the dimension enters.

\section{Setting and notation}\label{sec:setting}

Let $\pi$ be a probability measure on $\mathbb R^d$ with density proportional to $e^{-V}$, and let $X$ solve the stochastic differential equation
\begin{equation}\label{eq:sde}
  dX_t=-\nabla V(X_t)\,dt+\sqrt2\,dW_t ,
\end{equation}
whose invariant law is $\pi$. The generator of $X$ is $-L$ with $L\coloneqq-\Delta+\nabla V\cdot\nabla$, a nonnegative self-adjoint operator on $L^2(\pi)$ whose Dirichlet form is $\langle f,Lg\rangle_\pi=\int\nabla f\cdot\nabla g\,d\pi$, and $P_t\coloneqq e^{-tL}$ is the associated symmetric Markov semigroup, also called the Koopman semigroup, $(P_tg)(x)=\mathbb E[g(X_t)\mid X_0=x]$ \citep{bakry2014}. We assume that $L$ has discrete spectrum $0=\lambda_0<\lambda_1\le\lambda_2\le\cdots$ with orthonormal eigenfunctions $\varphi_k$, $\varphi_0\equiv1$, and write $\Pi_r$ for the orthogonal projection onto the span of $\varphi_0,\dots,\varphi_r$. Since $P_t$ is self-adjoint, for $\mu_0\ll\pi$ with $\rho_0=d\mu_0/d\pi\in L^2(\pi)$ the law $\mu_t$ of $X_t$ satisfies $d\mu_t/d\pi=P_t\rho_0$, that is
\begin{equation}\label{eq:rep}
  \rho_t\coloneqq\frac{d\mu_t}{d\pi}=1+\sum_{k\ge1}e^{-\lambda_kt}c_k\varphi_k,\qquad c_k\coloneqq\mathbb E_{\mu_0}[\varphi_k],
\end{equation}
and $\chi^2(\mu_t\|\pi)=\sum_{k\ge1}e^{-2\lambda_kt}c_k^2$ (Proposition~\ref{prop:rep}).

The family $(\mu_t)$ solves the Fokker-Planck equation
\begin{equation}\label{eq:fp}
  \partial_t\mu_t=\mathrm{div}\big(\mu_t\nabla\log(\mu_t/\pi)\big),
\end{equation}
which is the Wasserstein gradient flow of the KL divergence $\KL(\mu\|\pi)\coloneqq\int\log(d\mu/d\pi)\,d\mu$, $\mu\ll\pi$ \citep{jko1998}, and $\mu_t$ is the law at time $t$ of the solution of the ordinary differential equation
\begin{equation}\label{eq:odeexact}
  \dot x=v_t(x),\qquad v_t=-\nabla\log\rho_t,\qquad \rho_t=d\mu_t/d\pi,
\end{equation}
started from $\mu_0$ \citep[Ch.~8]{ambrosio2008}, a construction also used in generative modelling \citep{rozen2021}. We denote by $\Phi_t$ the flow map of \eqref{eq:odeexact}, so that $\mu_t=\Phi_t\#\mu_0$. Given pairs $(X_0,X_\tau)$ of a stationary trajectory, extended dynamic mode decomposition and its variational, generator and semigroup versions (gEDMD, SDMD) \citep{williams2015,noe2013,klus2018,wu2020,klus2020,kostic2024,xu2025sdmd}, as well as diffusion maps built from samples of $\pi$ \citep{coifman2006,berry2016}, return approximations $(\hat\lambda_k,\hat\varphi_k,\nabla\hat\varphi_k)_{k\le r}$ of the leading eigenpairs as linear combinations of a dictionary of functions. If the trajectory is not stationary, the same estimators apply after the reweighting of \citet{wu2017,nuske2017}. The accuracy of an approximate eigenfunction is measured in $L^2(\pi)$ and in the energy norm $\|f\|_{\Hdot^1}^2\coloneqq\int|\nabla f|^2d\pi$, and error bounds are given in \citet{kostic2023,nuske2023,kostic2024}.

\paragraph{Notation.} We write $\|f\|$ for the norm of $L^2(\pi)$. For $g$ with $\langle g,1\rangle_\pi=0$ we set $\|g\|_{\Hdot^{-1}}^2\coloneqq\sum_{k\ge1}\langle g,\varphi_k\rangle_\pi^2/\lambda_k$, and $W_2(\mu,\pi)\le2\|d\mu/d\pi-1\|_{\Hdot^{-1}}$ for $\mu\ll\pi$ \citep{peyre2018}. We denote by $\mathcal P_2(\mathbb R^d)$ the probability measures with finite second moment, by $W_2$ the 2-Wasserstein distance, by $\Phi\#\nu$ the image of $\nu$ under $\Phi$, by $\nu_0\coloneqq M^{-1}\sum_j\delta_{x_j}$ the empirical measure of $M$ samples of the source and by $B$ a compact convex domain to which the particles are confined, and we set $\|\mu-\nu\|_{\rm TV}\coloneqq\sup_{|f|\le1}\int f\,d(\mu-\nu)$ and $\mathrm{Clip}_v(u)\coloneqq u\min\{1,v/|u|\}$ for $u\ne0$, with $\mathrm{Clip}_v(0)=0$. The assumptions are stated in Appendix~\ref{app:assump}.

\section{Backward Kolmogorov transport}\label{sec:method}

By \eqref{eq:rep} the velocity of the KL flow is $v_t=-\nabla\log\rho_t$ with $\rho_t=P_t\rho_0$, and the density ratio is determined by the eigenpairs and by the averages of the eigenfunctions over the source. Let $(\hat\lambda_k,\hat\varphi_k)_{k\le r}$ be approximations of the first $r$ nonconstant eigenpairs, with $\hat\lambda_0=0$, $\hat\varphi_0=1$ and $0<\hat\lambda_1\le\dots\le\hat\lambda_r$, let $x_1,\dots,x_M$ be independent samples of the source $\mu_0$, an initial law of the particles that may be chosen freely, and let $T>0$ be the time horizon. The approximate density ratio and the velocity are defined by
\begin{equation}\label{eq:ode}
  \hat\rho_t\coloneqq\sum_{k=0}^{r}e^{-\hat\lambda_kt}\hat c_k\hat\varphi_k,\qquad \hat c_k\coloneqq\int\hat\varphi_k\,d\nu_0,\qquad
  \hat v_t\coloneqq-\nabla\log\hat\rho_t .
\end{equation}
The coefficient $\hat c_k$ is the empirical counterpart of $c_k$ in \eqref{eq:rep}, computed from the particles that are subsequently transported. Each particle $j$ follows the ordinary differential equation
\begin{equation}\label{eq:particle}
  \dot X^j_t=\hat v_t(X^j_t),\qquad X^j_0=x_j,\qquad t\in[0,T],
\end{equation}
with reflection at the boundary of $B$, and the output is the empirical measure $\hat\mu_T$ of the transported particles $X^1_T,\dots,X^M_T$. The velocity $\hat v_t$ depends on the particles only through the coefficients $\hat c_k$, which are computed once at $t=0$. For $t>0$ the particles therefore do not interact, and \eqref{eq:particle} is a system of $M$ decoupled equations, which Algorithm~\ref{alg:sfm} integrates with the classical fourth-order Runge-Kutta method, although any integrator may be used. The spectral summation costs $O(Mrd)$ operations per time step, in addition to the evaluation of the eigenfunctions and their gradients.

\paragraph{Coefficient regimes.}\label{par:coeff-regimes}
The average of $\hat\varphi_k$ over the source can be computed in three ways. In regime (P) it is computed exactly, $\bar c_k=\mathbb E_{\mu_0}[\hat\varphi_k]$, by quadrature when $\mu_0$ is known in closed form. In regime (I) it is computed from an independent sample $y_1,\dots,y_{M'}$ of $\mu_0$, $\check c_k=M'^{-1}\sum_j\hat\varphi_k(y_j)$, and in regime (S) from the transported particles themselves, $\hat c_k=M^{-1}\sum_j\hat\varphi_k(x_j)$, as in \eqref{eq:ode} and Algorithm~\ref{alg:sfm}. In all regimes the trajectory data used for the eigenpairs are independent of the transported sample. Theorem~\ref{thm:main} covers (P) and (I), in which the field is independent of the transported particles. Regime (S) is the regime of the experiments, and Proposition~\ref{prop:cancel} and Theorem~\ref{thm:clt} describe the cancellation of the sampling error on the retained modes.

\begin{remark}\label{rem:safeguard}
A truncated expansion need not be positive, and the velocity in \eqref{eq:ode} is defined only where $\hat\rho_t>0$. The computations therefore use the velocity
\[
  \mathrm{Clip}_{v_{\max}}\Big(-\frac{\nabla\hat\rho_t}{\max\{\hat\rho_t,\varepsilon_0\}}\Big)
\]
with a threshold $\varepsilon_0>0$ and a bound $v_{\max}$. This field coincides with \eqref{eq:ode} where $\hat\rho_t\ge\varepsilon_0$ and $|\nabla\log\hat\rho_t|\le v_{\max}$, and it is continuous, locally Lipschitz and bounded by $v_{\max}$ on $\mathbb R^d$. On the set where it differs from \eqref{eq:ode}, Theorem~\ref{thm:main} uses only this bound and controls the mass of this set. The threshold is inactive at large times (Appendix~\ref{app:assump}), and the safeguards act on few paths (Appendix~\ref{app:exp}).
\end{remark}

\begin{algorithm}[t]
\caption{Backward Kolmogorov transport}\label{alg:sfm}
\begin{algorithmic}[1]
\Require eigenpairs $(\hat\lambda_k,\hat\varphi_k,\nabla\hat\varphi_k)_{k\le r}$ estimated from trajectories, source samples $x_1,\dots,x_M$, horizon $T$, step $h$
\State $\hat c_k\gets M^{-1}\sum_{j}\hat\varphi_k(x_j)$ for $1\le k\le r$, $\hat c_0\gets1$, and $X^j\gets x_j$
\For{$t=0,h,\dots,T-h$}
  \State $\hat v_s\gets-\nabla\hat\rho_s/\hat\rho_s$ with $\hat\rho_s=\sum_{k=0}^{r}e^{-\hat\lambda_ks}\hat c_k\hat\varphi_k$ for $s\in\{t,t+\frac h2,t+h\}$ (Remark~\ref{rem:safeguard})
  \State advance each $X^j$ by one fourth-order Runge-Kutta step for $\dot x=\hat v_s(x)$ and reflect at $\partial B$
\EndFor
\State \Return $\hat\mu_T=M^{-1}\sum_j\delta_{X^j}$ and the spectral diagnostic $\hat\chi^2(t)=\sum_{k=1}^{r}e^{-2\hat\lambda_kt}\hat c_k^2$
\end{algorithmic}
\end{algorithm}

\paragraph{Coefficients, diagnostic and source.} The coefficients decay at the rates of the dynamics, and the slow modes describe the transitions between metastable states. This propagation rule, known from density forecasting \citep{zhao2023,shen2024}, here determines the velocity of the particles and damps the sampling error of the high modes, whereas recomputing the coefficients from the particles at each step, as in the data-driven $\chi^2$ flow of \citet{xu2025} without its preconditioning, feeds it back. Along the exact flow the two rules coincide, since $\mathbb E_{\mu_t}[\varphi_k]=e^{-\lambda_kt}c_k$. The diagnostic $\hat\chi^2(t)$ equals $\chi^2(\pi\hat\rho_t\|\pi)$ when $\hat\rho_t\ge0$ and the eigenfunctions are orthonormal. We call the source admissible when $\rho_0$ is bounded, and for a double well a source of compact support is admissible while a Gaussian source with full support is not.

\begin{lemma}[Residual identity]\label{lem:residual}
Let $\hat q_t\coloneqq\pi\hat\rho_t$, $R_t\coloneqq\sum_{k\le r}e^{-\hat\lambda_kt}\hat c_k(L-\hat\lambda_k)\hat\varphi_k$, and $\Omega_t\coloneqq\{x\in\mathbb R^d\mid\hat\rho_t(x)>0\}$ the positivity set of the truncated ratio. On $\Omega_t$,
\[
  \partial_t\hat q_t+\mathrm{div}\big(\hat q_t\,(-\nabla\log\hat\rho_t)\big)=\pi R_t .
\]
\end{lemma}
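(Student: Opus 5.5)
The identity is a direct computation, so I would verify it term by term. Throughout I identify $\pi$ with its density $Z^{-1}e^{-V}$ and assume the $\hat\varphi_k$ are $C^2$, so that $L\hat\varphi_k$ is defined pointwise.

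\emph{Flux term.} On $\Omega_t$ we have $\hat\rho_t>0$, hence
\[
\hat q_t\,(-\nabla\log\hat\rho_t)=-\pi\hat\rho_t\,\frac{\nabla\hat\rho_t}{\hat\rho_t}=-\pi\nabla\hat\rho_t .
\]
After this cancellation the logarithm, and with it the restriction to $\Omega_t$, disappears. The only role of $\Omega_t$ is to make the velocity $-\nabla\log\hat\rho_t$ defined. Since $\nabla\pi=-\pi\nabla V$,
\[
\mathrm{div}(-\pi\nabla\hat\rho_t)=-\pi\Delta\hat\rho_t+\pi\nabla V\cdot\nabla\hat\rho_t=\pi L\hat\rho_t .
\]
This is the standard fact that $\mathrm{div}(\pi\nabla f)=-\pi Lf$, that is, $L$ is symmetric with respect to $\pi$.

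\emph{Time derivative.} The density $\pi$ does not depend on $t$, and the sum is finite, so we may differentiate term by term:
\[
\partial_t\hat q_t=\pi\,\partial_t\hat\rho_t=-\pi\sum_{k\le r}\hat\lambda_k e^{-\hat\lambda_kt}\hat c_k\hat\varphi_k .
\]
Linearity of $L$ on the finite sum gives $\pi L\hat\rho_t=\pi\sum_{k\le r}e^{-\hat\lambda_kt}\hat c_k L\hat\varphi_k$. Adding the two expressions yields
\[
\partial_t\hat q_t+\mathrm{div}\big(\hat q_t(-\nabla\log\hat\rho_t)\big)=\pi\sum_{k\le r}e^{-\hat\lambda_kt}\hat c_k(L-\hat\lambda_k)\hat\varphi_k=\pi R_t .
\]
The $k=0$ term contributes nothing, since $\hat\varphi_0=1$ and $\hat\lambda_0=0$.

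There is no real obstacle. The one point to watch is regularity: the identity holds pointwise on the open set $\Omega_t$ when $\hat\varphi_k\in C^2$ and $V\in C^1$. For dictionary-based estimators this is immediate. Otherwise the same computation gives the identity in the distributional sense on $\Omega_t$, by testing against $C_c^\infty(\Omega_t)$ functions and integrating by parts.
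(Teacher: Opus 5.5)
Your proposal is correct and follows the paper's proof: the paper also cancels $\hat\rho_t$ in the flux to obtain $\mathrm{div}(\pi\nabla\hat\rho_t)=-\pi L\hat\rho_t$, differentiates the finite sum in time, and combines the two expressions. Your remarks that the logarithm, and with it $\Omega_t$, drops out after the cancellation and that the $k=0$ term vanishes are consistent with how the paper later extends the identity to the whole space in Proposition~\ref{prop:identity}.
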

If the eigenpairs are exact, then $R\equiv0$ for any coefficients, and the truncated spectral density solves the continuity equation on $\Omega_t$. When the truncated ratio is positive, the conclusion holds on the whole space and for the transport map.

\begin{proposition}[Exact eigenpairs]\label{prop:exact}
Let the eigenpairs be exact, with $\varphi_k\in C^2$, and let $\hat\rho_0=1+\sum_{k=1}^ra_k\varphi_k$ be positive for some $a\in\mathbb R^r$. Then $\hat\rho_t=P_t\hat\rho_0$ and $\inf\hat\rho_t\ge\inf\hat\rho_0$ for $t\ge0$, and the flow of the field \eqref{eq:ode} satisfies $\hat\Phi_{0,t}\#(\pi\hat\rho_0)=\pi\hat\rho_t$, so that BKT is the KL gradient flow started from $\pi\hat\rho_0$. For population coefficients, $a_k=c_k$,
\[
  \|\hat\Phi_{0,T}\#\mu_0-\pi\|_{\rm TV}\le\|(I-\Pi_r)\rho_0\|_{L^1(\pi)}+e^{-\lambda_1T}\|\Pi_r\rho_0-1\| .
\]
\end{proposition}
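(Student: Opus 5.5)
Since the eigenpairs are exact, $\hat\rho_t=1+\sum_{k\le r}e^{-\lambda_kt}a_k\varphi_k$ and $P_t\varphi_k=e^{-\lambda_kt}\varphi_k$. Linearity then gives $\hat\rho_t=P_t\hat\rho_0$. The lower bound $\inf\hat\rho_t\ge\inf\hat\rho_0$ comes from the Markov property of $P_t$: it is positivity preserving and $P_t1=1$, so $P_t(\hat\rho_0-\inf\hat\rho_0)\ge0$. In particular $\hat\rho_t\ge\inf\hat\rho_0>0$ for all $t$. Two consequences follow. First, $\Omega_t=\mathbb R^d$, so the velocity $-\nabla\log\hat\rho_t$ is defined everywhere, and the safeguards of Remark~\ref{rem:safeguard} are inactive once $\varepsilon_0\le\inf\hat\rho_0$. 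Second, $\hat q_t=\pi\hat\rho_t$ is a probability measure for every $t$, with mass $\langle\hat\rho_t,1\rangle_\pi=1$ by orthogonality.

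Lemma~\ref{lem:residual} with $R_t\equiv0$ shows that $\hat q_t$ solves the continuity equation $\partial_t\hat q_t+\mathrm{div}(\hat q_t\hat v_t)=0$ on all of $\mathbb R^d$. The field $\hat v_t=-\nabla\hat\rho_t/\hat\rho_t$ is $C^1$ in $x$, because $\varphi_k\in C^2$ and the denominator is bounded below, and it is smooth in $t$. Hence the flow $\hat\Phi_{0,t}$ exists locally, and the curve $\hat\Phi_{0,t}\#\hat q_0$ also solves the continuity equation with the same field. Uniqueness for the continuity equation with locally Lipschitz field then identifies the two curves \citep[Ch.~8]{ambrosio2008}, and it also excludes finite-time blow-up of characteristics.

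\textbf{Main obstacle.} The uniqueness step needs an integrability condition such as $\int_0^T\!\int|\hat v_t|\,d\hat q_t\,dt<\infty$, together with global existence of the flow on the relevant set. I would get this from $|\hat v_t|\,\hat q_t=|\nabla\hat\rho_t|\,\pi\le\sum_k|a_k||\nabla\varphi_k|\pi$. This is integrable because $\|\nabla\varphi_k\|_{L^1(\pi)}\le\|\nabla\varphi_k\|_{L^2(\pi)}=\lambda_k^{1/2}$. Under the confinement to $B$, or under the assumptions of Appendix~\ref{app:assump}, this suffices. Since $\hat\rho_t=P_t\hat\rho_0$ and $\hat q_t$ satisfies the Fokker--Planck equation \eqref{eq:fp}, $\hat q_t$ is the law at time $t$ of the diffusion started from $\hat q_0$. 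BKT is therefore the KL gradient flow from $\pi\hat\rho_0$.

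For the TV bound, take $a_k=c_k$, so $\hat\rho_0=\Pi_r\rho_0$. Since $\hat\Phi_{0,T}\#\pi\hat\rho_0=\pi\hat\rho_T$, the triangle inequality gives
\[
\|\hat\Phi_{0,T}\#\mu_0-\pi\|_{\rm TV}\le\|\hat\Phi_{0,T}\#\mu_0-\hat\Phi_{0,T}\#(\pi\hat\rho_0)\|_{\rm TV}+\|\pi\hat\rho_T-\pi\|_{\rm TV}.
\]
For the first term, pushforward by a measurable map does not increase total variation. With the paper's normalisation of TV (sup over $|f|\le1$), the first term is therefore at most $\int|\rho_0-\Pi_r\rho_0|\,d\pi=\|(I-\Pi_r)\rho_0\|_{L^1(\pi)}$. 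For the second term, TV is at most $\|\hat\rho_T-1\|_{L^1(\pi)}\le\|\hat\rho_T-1\|$. By orthonormality, $\|\hat\rho_T-1\|^2=\sum_{k\le r}e^{-2\lambda_kT}c_k^2\le e^{-2\lambda_1T}\|\Pi_r\rho_0-1\|^2$. Adding the two bounds gives the stated inequality.
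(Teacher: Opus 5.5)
Your proposal is correct and follows the paper's proof essentially step for step. The shared steps are: $\hat\rho_t=P_t\hat\rho_0$ by linearity, the infimum bound from the Markov property, Lemma~\ref{lem:residual} with $R\equiv0$, the bound $\int_0^T\!\int|\nabla\hat\rho_t|\,d\pi\,dt\le T\sum_k|a_k|\lambda_k^{1/2}$ fed into Proposition~8.1.8 of Ambrosio, Gigli and Savar\'e, and the splitting $\hat\Phi_{0,T}\#(\mu_0-\pi\hat\rho_0)+\pi(\hat\rho_T-1)$ for the total variation bound.

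Two small corrections are needed. First, positivity of $\hat\rho_0$ is pointwise and need not give $\inf\hat\rho_0>0$, so you should take strict positivity of $\hat\rho_t$ for $t>0$ from the positive kernel in (A0), and the local Lipschitz property from continuity on compacts. Second, the flow of that proposition is only defined $\pi\hat\rho_0$-almost everywhere, so you should note that this covers $\mu_0$-almost every starting point because $\hat\rho_0>0$.
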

Truncation thus replaces the source by its spectral projection, a classical property of eigenfunction expansions \citep{risken1989,pavliotis2014} which here holds for the transport map, and by the maximum principle the threshold of Remark~\ref{rem:safeguard} remains inactive for all times when $\hat\rho_0\ge\varepsilon_0$. A truncated expansion can nevertheless change sign, as $1+e^{-t}x$ does for one Ornstein-Uhlenbeck mode, and Section~\ref{sec:theory} treats the general case.

\section{Error analysis}\label{sec:theory}

\subsection{A transport identity}\label{sec:identity}

For the field of Remark~\ref{rem:safeguard}, Lemma~\ref{lem:residual} extends to the whole path with an additional flux supported where the safeguards act. We state this extension for particles that are not confined, on $\mathbb R^d$ or on a torus, and for a field $\hat v$ that is bounded and Lipschitz in $x$ uniformly in $t\le T$, with flow $\hat\Phi_{s,t}$. We set $\gamma_t\coloneqq\hat\rho_t\hat v_t+\nabla\hat\rho_t$, which vanishes where $\hat v_t=-\nabla\log\hat\rho_t$, and $F_T(f)\coloneqq\int_0^T\!\int\nabla(f\circ\hat\Phi_{s,T})\cdot\gamma_s\,d\pi\,ds$.

\begin{proposition}[Transport identity]\label{prop:identity}
Let $\hat\rho_t=\sum_{k=0}^re^{-\hat\lambda_kt}\hat c_k\hat\varphi_k$, with coefficients of any regime, $\hat\varphi_k\in C^2$ and $\hat\varphi_k,|\nabla\hat\varphi_k|,L\hat\varphi_k\in L^1(\pi)$, and $R_t$ as in Lemma~\ref{lem:residual}. For every probability measure $\nu$ and $f\in C^1_b$,
\begin{equation}\label{eq:identity}
\int f\,d(\hat\Phi_{0,T}\#\nu)=\int f\hat\rho_T\,d\pi+\int f\circ\hat\Phi_{0,T}\,d(\nu-\pi\hat\rho_0)-\int_0^T\langle f\circ\hat\Phi_{s,T},R_s\rangle_\pi\,ds+F_T(f).
\end{equation}
\end{proposition}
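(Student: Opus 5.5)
The plan is a Duhamel argument along the flow. For $s\in[0,T]$ set $g_s\coloneqq f\circ\hat\Phi_{s,T}$. Because $\hat v$ is bounded and Lipschitz in $x$ uniformly in $t\le T$, the map $\hat\Phi_{s,T}$ is a $C^1$-diffeomorphism in $x$, and it is Lipschitz in $s$. Hence $g_s\in C^1_b$ and $g$ solves the backward transport equation
\[
\partial_s g_s+\hat v_s\cdot\nabla g_s=0,\qquad g_T=f .
\]
Consider the signed quantity $\Psi(s)\coloneqq\int g_s\,\hat\rho_s\,d\pi$. Note that $\hat\rho_s$ need not be positive, so we never divide by it.

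First, the two endpoints give
\[
\Psi(T)=\int f\hat\rho_T\,d\pi,\qquad \Psi(0)=\int f\circ\hat\Phi_{0,T}\,d(\pi\hat\rho_0).
\]
Since $\hat\Phi_{0,T}\#\nu$ integrates $f$ to $\int f\circ\hat\Phi_{0,T}\,d\nu$, we can write
\[
\int f\,d(\hat\Phi_{0,T}\#\nu)=\Psi(0)+\int f\circ\hat\Phi_{0,T}\,d(\nu-\pi\hat\rho_0)=\Psi(T)-\int_0^T\Psi'(s)\,ds+\int f\circ\hat\Phi_{0,T}\,d(\nu-\pi\hat\rho_0).
\]
It therefore suffices to show
\[
\Psi'(s)=\langle g_s,R_s\rangle_\pi-\int\nabla g_s\cdot\gamma_s\,d\pi .
\]

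Second, compute $\Psi'$. Differentiating the finite sum gives $\partial_s\hat\rho_s=-\sum_k\hat\lambda_k e^{-\hat\lambda_k s}\hat c_k\hat\varphi_k$. Adding and subtracting $L\hat\rho_s$ yields
\[
\partial_s\hat\rho_s=R_s-L\hat\rho_s .
\]
This is the only place where the residual enters, and no positivity is needed, in contrast with Lemma~\ref{lem:residual}. Then
\[
\Psi'(s)=\int(\partial_sg_s)\hat\rho_s\,d\pi+\int g_s(R_s-L\hat\rho_s)\,d\pi=-\int\hat\rho_s\,\hat v_s\cdot\nabla g_s\,d\pi+\langle g_s,R_s\rangle_\pi-\langle g_s,L\hat\rho_s\rangle_\pi .
\]
The Dirichlet form gives $\langle g_s,L\hat\rho_s\rangle_\pi=\int\nabla g_s\cdot\nabla\hat\rho_s\,d\pi$. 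Combining the two gradient terms gives exactly $-\int\nabla g_s\cdot(\hat\rho_s\hat v_s+\nabla\hat\rho_s)\,d\pi=-\int\nabla g_s\cdot\gamma_s\,d\pi$. Integrating over $[0,T]$ then produces the residual term with a minus sign and $F_T(f)$ with a plus sign, which is \eqref{eq:identity}.

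The main obstacle is justifying these manipulations under the weak hypotheses.

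\emph{Integration by parts.} The identity $\int g\,L h\,d\pi=\int\nabla g\cdot\nabla h\,d\pi$ is needed for $g\in C^1_b$ and $h=\hat\varphi_k\in C^2$ with only $\hat\varphi_k,|\nabla\hat\varphi_k|,L\hat\varphi_k\in L^1(\pi)$ assumed. I would prove it by cutoff: insert $\chi_n(x)=\chi(x/n)$ and use $\int\chi_n g\,L h\,d\pi=\int\nabla(\chi_ng)\cdot\nabla h\,d\pi$, which is valid for compactly supported test functions since $L h\,e^{-V}=-\mathrm{div}(e^{-V}\nabla h)$. The extra term is bounded by $\|g\|_\infty\|\nabla\chi\|_\infty n^{-1}\int|\nabla h|\,d\pi\to0$, and dominated convergence handles the rest using boundedness of $g$ and $|\nabla g|$. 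On a torus there is no boundary and no cutoff is needed.

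\emph{Differentiation under the integral.} To differentiate $\Psi$ under the integral sign and integrate $\Psi'$, I would use $\sup_s(\|g_s\|_\infty+\|\nabla g_s\|_\infty)<\infty$ together with the $L^1(\pi)$ bounds on $\hat\varphi_k$ and $\partial_s\hat\rho_s$. Note that $\sup_s\|\nabla g_s\|_\infty$ is finite by Grönwall, via the uniform Lipschitz bound on $\hat v$. An alternative that avoids assuming $C^1$ regularity in $s$ is to write $\Psi(T)-\Psi(0)$ as a Riemann sum over a partition and pass to the limit. I would try the direct dominated-convergence route first.
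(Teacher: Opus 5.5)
Your proposal follows the paper's proof. Both test against $g_s=f\circ\hat\Phi_{s,T}$ and use $\partial_s\hat\rho_s=-L\hat\rho_s+R_s$ with the Dirichlet form to get $\frac{d}{ds}\int g_s\hat\rho_s\,d\pi=\langle g_s,R_s\rangle_\pi-\int\nabla g_s\cdot\gamma_s\,d\pi$, with no sign condition on $\hat\rho_s$. Your push-forward step is the paper's observation that $\int g_s\,d(\hat\Phi_{0,s}\#\nu)$ is constant in $s$. Your cutoff argument for $\int g\,L\hat\varphi_k\,d\pi=\int\nabla g\cdot\nabla\hat\varphi_k\,d\pi$ is a useful addition: it shows where the hypotheses $\hat\varphi_k,|\nabla\hat\varphi_k|,L\hat\varphi_k\in L^1(\pi)$ enter, which the paper leaves inside its mollification step.

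One claim is false as stated and needs correcting. A field that is only bounded and Lipschitz in $x$ has a bi-Lipschitz flow, not a $C^1$ diffeomorphism. The safeguarded field of Remark~\ref{rem:safeguard} is genuinely not $C^1$: it has kinks where $\hat\rho_t=\varepsilon_0$ and where the clipping switches on. So $g_s$ is only bounded and Lipschitz, and $\partial_sg_s+\hat v_s\cdot\nabla g_s=0$ holds only almost everywhere.

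The repair is routine. You can mollify in $x$ as the paper does. Alternatively, note that $(s,x)\mapsto g_s(x)$ is jointly Lipschitz, hence differentiable almost everywhere by Rademacher's theorem, and constant along the characteristics. The transport equation therefore holds at almost every $(s,x)$, and since $\pi$ has a density, for almost every $s$ it holds $\pi$-almost everywhere.

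Your later steps use only $\sup_s\|g_s\|_\infty$ and $\sup_s\mathrm{Lip}(g_s)$, and the cutoff integration by parts works verbatim for Lipschitz $g$. Since $\Psi$ is Lipschitz in $s$, computing $\Psi'(s)$ for almost every $s$ and integrating gives the identity.
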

The particles carry the spectral density $\pi\hat\rho_T$, corrected by the transport of the initial discrepancy $\nu-\pi\hat\rho_0$, of the residual and of the flux of the safeguards (Appendix~\ref{app:identity}). The identity is a duality argument for the continuity equation, of the kind used in a posteriori error estimation \citep{becker2001}, and it separates the three sources of error. With exact eigenpairs and population coefficients, $\mu_0-\pi\hat\rho_0=\pi(I-\Pi_r)\rho_0$, and truncation enters only through the initial law.

\begin{corollary}[Total variation]\label{cor:tv}
In regimes (P) and (I), if the safeguards are inactive on $[0,T]$,
$$
\|\hat\Phi_{0,T}\#\mu_0-\pi\|_{\rm TV}\le\|\rho_0-\hat\rho_0\|_{L^1(\pi)}+\|\hat\rho_T-1\|_{L^1(\pi)}+\sum_{k=1}^r|\hat c_k|\,\frac{\|(L-\hat\lambda_k)\hat\varphi_k\|_{L^1(\pi)}}{\hat\lambda_k}.
$$
Otherwise the same bound holds in the bounded-Lipschitz distance with the additional term $\sup_{s\le T}\mathrm{Lip}(\hat\Phi_{s,T})\int_0^T\|\gamma_s\|_{L^1(\pi)}ds$.
\end{corollary}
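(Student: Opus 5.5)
Starting point: Proposition~\ref{prop:identity} with $\nu=\mu_0$. Since $\|\mu-\pi\|_{\rm TV}=\sup_{|f|\le1}\int f\,d(\mu-\pi)$ and $C^1_b$ functions with $|f|\le1$ are dense enough for this supremum (approximate bounded measurable $f$ by smooth ones for the finite measure $\hat\Phi_{0,T}\#\mu_0+\pi$), it is enough to bound $\int f\,d(\hat\Phi_{0,T}\#\mu_0)-\int f\,d\pi$ uniformly over $f\in C^1_b$ with $|f|\le1$.

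Subtracting $\int f\,d\pi$ from \eqref{eq:identity}, the first term gives $\int f(\hat\rho_T-1)\,d\pi$, which is at most $\|\hat\rho_T-1\|_{L^1(\pi)}$. The second term is $\int f\circ\hat\Phi_{0,T}\,d(\mu_0-\pi\hat\rho_0)=\int (f\circ\hat\Phi_{0,T})(\rho_0-\hat\rho_0)\,d\pi$. Since $|f\circ\hat\Phi_{0,T}|\le1$ and $\mu_0\ll\pi$, it is at most $\|\rho_0-\hat\rho_0\|_{L^1(\pi)}$. In regimes (P) and (I) the field, and hence $\hat\Phi$, is deterministic given the coefficients and independent of the transported sample, so $\nu=\mu_0$ is legitimate; this is where those regimes are used.

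For the residual term, $|\langle f\circ\hat\Phi_{s,T},R_s\rangle_\pi|\le\|R_s\|_{L^1(\pi)}$. By the triangle inequality,
\[
\|R_s\|_{L^1(\pi)}\le\sum_{k=1}^r e^{-\hat\lambda_ks}|\hat c_k|\,\|(L-\hat\lambda_k)\hat\varphi_k\|_{L^1(\pi)},
\]
and the $k=0$ term vanishes since $\hat\varphi_0=1$ and $\hat\lambda_0=0$. Integrating $e^{-\hat\lambda_ks}$ over $[0,T]$ gives at most $1/\hat\lambda_k$, which yields the third term. If the safeguards are inactive, then $\hat v_s=-\nabla\log\hat\rho_s$ everywhere, so $\gamma_s\equiv0$ and $F_T(f)=0$; this proves the TV bound.

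Otherwise we work with the bounded-Lipschitz distance, taking $f$ with $|f|\le1$ and $\mathrm{Lip}(f)\le1$. Identity \eqref{eq:identity} needs $f\in C^1_b$, so we mollify $f$; this preserves both bounds and converges uniformly. Then $|\nabla(f\circ\hat\Phi_{s,T})|\le\mathrm{Lip}(\hat\Phi_{s,T})$, so $|F_T(f)|\le\sup_s\mathrm{Lip}(\hat\Phi_{s,T})\int_0^T\|\gamma_s\|_{L^1(\pi)}ds$. The other three estimates used only $|f|\le1$ and carry over unchanged. The only delicate step is this approximation and the justification of $|\nabla(f\circ\hat\Phi)|$ for merely Lipschitz flows. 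I would handle it by mollification together with the Lipschitz continuity of $\hat\Phi_{s,T}$, which is guaranteed by the standing bounded-Lipschitz assumption on $\hat v$.
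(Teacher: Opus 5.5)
Your proposal is correct and follows the paper's proof. You apply Proposition~\ref{prop:identity} with $\nu=\mu_0$ and bound the initial-discrepancy, endpoint and residual terms using only $|f|\le1$ together with $\int_0^Te^{-\hat\lambda_ks}\,ds\le1/\hat\lambda_k$. When $\mathrm{Lip}(f)\le1$ you bound the flux by $\mathrm{Lip}(\hat\Phi_{s,T})\|\gamma_s\|_{L^1(\pi)}$; your density and mollification arguments only make explicit what the paper leaves implicit.

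One small correction concerns your remark on the regimes. The identity holds for every probability measure $\nu$ and for coefficients of any regime, so restricting to (P) and (I) is not what makes $\nu=\mu_0$ admissible. The restriction is what makes $\hat\Phi_{0,T}\#\mu_0$ the conditional law of each output particle, because in those regimes the field is independent of the transported sample.
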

Unlike in Theorem~\ref{thm:main} below, no stability constant multiplies the truncation and residual terms, and a Lipschitz constant enters only through the flux of the safeguards. Each residual is weighted by the lifetime $1/\hat\lambda_k$ of its mode, and since $\partial_t\hat\rho_t=-L\hat\rho_t+R_t$, Duhamel's formula shows that the threshold is inactive when $\inf\hat\rho_0-\int_0^T\|R_s\|_\infty ds\ge\varepsilon_0$ (Appendix~\ref{app:assump}).

\subsection{Stability in the Wasserstein distance}\label{sec:stability}

The bound concerns an approximation $\hat\rho$ of the density ratio that is twice differentiable in $x$ and either deterministic or fixed after conditioning on data independent of the transported particles. It is expressed through the error $e_t\coloneqq\hat\rho_t-\rho_t$ and the relative error $\eta_t\coloneqq e_t/\rho_t$. The assumptions (A0) to (A4) of Appendix~\ref{app:assump} include $\rho_0\in L^2(\pi)\cap L^\infty(\pi)$, a compact convex domain $B$ on which $\|\nabla^j\varphi_k\|_{\infty,B}\le K_B(1+\lambda_k)^{\alpha+j/2}$ for $j\le2$, and a one-sided Lipschitz bound $-\nabla^2\log\rho_t\preceq L^+_tI$ for the exact velocity on the region defined below, with $|\nabla\log\rho_t|\le V_G$ and $\|\nabla^2\log\rho_t\|\le H_G$ there. The bound is local because, for a source of compact support, the global one-sided Lipschitz constant of the exact flow itself is not integrable near $t=0$.

\begin{definition}[Validity region]\label{def:good}
Let $\delta>0$. A \emph{validity region} for $\hat\rho$ is a family of closed sets $G_t\subset B$, $t\in[0,T]$, at distance at least $2\delta$ from $B^c$, together with numbers $\theta_t\ge2\varepsilon_0$, such that
$$
  \rho_t\ge\theta_t,\qquad |e_t|\le\tfrac12\rho_t,\qquad |\nabla\log\hat\rho_t|\le v_{\max}\qquad\text{on }G_t .
$$
For such a region, the \emph{expansion of the exact flow} on $G$ is $\Lambda_{s,t}\coloneqq\int_s^t(L^+_u)_+\,du$ with $L^+_u\coloneqq\sup_{G_u}\lambda_{\max}(-\nabla^2\log\rho_u)$, and the \emph{expansion of the approximate flow} is $\hat\Lambda_{s,t}\coloneqq\Lambda_{s,t}+\int_s^t\mathrm{pert}_u\,du$ with $\mathrm{pert}_u\coloneqq\sup_{G_u}\big(2\|\nabla^2\eta_u\|+4|\nabla\eta_u|^2\big)$. The \emph{mass leaving the region} is $p_{\rm exit}\coloneqq\mu_0\big(\{x\mid\Phi_t(x)\notin G_t^\delta\text{ for some }t\le T\}\big)$, where $G_t^\delta\coloneqq\{y\in G_t\mid\bar B_\delta(y)\subset G_t\}$, and the \emph{velocity errors} are
$$
  J_p\coloneqq\int_0^Te^{\hat\Lambda_{s,T}}\,\big\|(\hat v_s-v_s)\mathbf 1_{G_s}\big\|_{L^p(\mu_s)}\,ds,\qquad p\in\{1,2\}.
$$
\end{definition}
On $G_t$ the safeguards are inactive, and the quantity $\hat\Lambda$ measures the separation of nearby trajectories, $p_{\rm exit}$ the mass for which the region cannot be used, and $J_p$ the velocity error seen by the transported mass.

\begin{theorem}[Error bound for BKT]\label{thm:main}
Assume (A0) to (A4) and $\pi\in\mathcal P_2(\mathbb R^d)$, let $\hat\rho$ be deterministic and $(G_t,\theta_t)$ a validity region for it, and consider the confined particle system of Appendix~\ref{app:assump} with i.i.d.\ initial particles $x_j\sim\mu_0$. Let $S_M(P)\coloneqq\mathbb E\,W_2(M^{-1}\sum_{j\le M}\delta_{Z_j},P)$ for $Z_j\sim P$ i.i.d.\ denote the error of $M$ independent samples of a law $P$. Then
\begin{equation}\label{eq:main}
\begin{aligned}
\mathbb E\,W_2(\hat\mu_T,\pi)
&\le\underbrace{W_2(\mu_T,\pi)}_{\text{(i) finite time}}+\underbrace{S_M(\pi)}_{\text{(ii) sampling}}+\underbrace{C_T}_{\text{(iii) approximation}}\\
&\le2\Big(\sum_{k\ge1}\frac{e^{-2\lambda_kT}c_k^2}{\lambda_k}\Big)^{1/2}+S_M(\pi)+\big[4\mathcal J_2^2+4R_B^2q_*+2m_B\big]^{1/2},
\end{aligned}
\end{equation}
where $C_T\coloneqq[J_2^2+4R_B^2q_*+2m_B]^{1/2}$, $q_*\coloneqq\min\{1,\,p_{\rm exit}+\min(J_1/\delta,J_2^2/\delta^2)\}$, $R_B\coloneqq\sup_B|x|$, $m_B\coloneqq\|\rho_0\|_\infty\int_{B^c}|x|^2d\pi$, and $J_p\le2\mathcal J_p$ with
\[
\mathcal J_p\coloneqq\int_0^Te^{\hat\Lambda_{s,T}}\,\theta_s^{-(p-1)/2}\big(\|e_s\|_{\Hdot^1}+V_G\|e_s\|\big)\,ds,\qquad p\in\{1,2\}.
\]
\end{theorem}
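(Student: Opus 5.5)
The plan is to split the error by the triangle inequality and then treat the approximation term (iii) by a coupling argument. Let $\mu_T=\Phi_T\#\mu_0$ be the exact law. Write $\hat\mu_T=M^{-1}\sum_j\delta_{\hat X^j_T}$ and introduce the companion particles $Y^j\coloneqq\Phi_T(x_j)$. Since $x_j\sim\mu_0$ are i.i.d., these are i.i.d.\ samples of $\mu_T$, and we obtain
\[
W_2(\hat\mu_T,\pi)\le W_2\Big(\hat\mu_T,\tfrac1M\textstyle\sum_j\delta_{Y^j}\Big)+W_2\Big(\tfrac1M\textstyle\sum_j\delta_{Y^j},\mu_T\Big)+W_2(\mu_T,\pi).
\]
Term (i) is bounded by $W_2(\mu_T,\pi)\le2\|\rho_T-1\|_{\Hdot^{-1}}$ together with \eqref{eq:rep}, which gives the spectral sum directly. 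For the sampling term, the stated bound has $S_M(\pi)$ rather than $S_M(\mu_T)$. I would therefore pair $\mu_T$ with $\pi$ at the particle level instead. Take an optimal coupling $\mu_T\to\pi$ and draw $Z_j\sim\pi$ jointly with $Y^j$. Then
\[
\mathbb E\,W_2\big(\tfrac1M\textstyle\sum\delta_{Y^j},\tfrac1M\sum\delta_{Z_j}\big)\le\big(\mathbb E|Y-Z|^2\big)^{1/2}=W_2(\mu_T,\pi).
\]
The empirical measure of the $Z_j$ then contributes $S_M(\pi)$, so the finite-time term is counted only once.

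For term (iii) I use the diagonal coupling $\hat X^j_T\leftrightarrow Y^j$. This gives $\mathbb E\,W_2^2\le\mathbb E_{x\sim\mu_0}|\hat\Phi_T(x)-\Phi_T(x)|^2$, and Jensen's inequality moves the expectation inside the square root. I would split the initial points into a good set, where $\Phi_t(x)\in G^\delta_t$ for all $t$ and the approximate path also stays in $G_t$, and its complement.

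\emph{Good set.} On $G_t$ the safeguards are inactive and there is no reflection, since $G_t$ is at distance $2\delta$ from $B^c$. Set $D_t=\hat X_t-X_t$. Then
\[
\tfrac{d}{dt}|D_t|\le\big\langle\hat v_t(\hat X_t)-\hat v_t(X_t),\tfrac{D_t}{|D_t|}\big\rangle+|\hat v_t-v_t|(X_t).
\]
The first term is controlled by the one-sided Lipschitz constant of $\hat v=-\nabla\log\hat\rho=v-\nabla\log(1+\eta)$. Expanding $\nabla^2\log(1+\eta)$ with $|\eta|\le1/2$ gives the bound $L^+_t+2\|\nabla^2\eta\|+4|\nabla\eta|^2=L^+_t+\mathrm{pert}_t$. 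Grönwall then gives
\[
|D_T|\le\int_0^Te^{\hat\Lambda_{s,T}}|\hat v_s-v_s|(X_s)\,ds.
\]
Taking the $L^2(\mu_0)$ norm, with Minkowski's inequality and $X_s\sim\mu_s$, yields $J_2$.

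\emph{Bad set.} The squared displacement is bounded by $(2R_B)^2$ if both points are in $B$. The mass is at most $p_{\rm exit}$ plus the mass for which the approximate path leaves $G_t$ while the exact one stays in $G^\delta_t$, which forces $|D_t|\ge\delta$. By Markov's inequality applied to the same Grönwall bound, stopped at the exit time, this mass is at most $J_1/\delta$ or $J_2^2/\delta^2$, which gives $q_*$. The contribution of the exact path outside $B$ gives $m_B$, via $d\mu_T\le\|\rho_0\|_\infty d\pi$ because $\|\rho_T\|_\infty\le\|\rho_0\|_\infty$. The factor $2$ there and the cross terms come from $|a-b|^2\le2|a|^2+2|b|^2$.

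The last step is $J_p\le2\mathcal J_p$. Write
\[
\hat v-v=-\frac{\nabla e}{\hat\rho}+\frac{e\nabla\rho}{\rho\hat\rho}.
\]
On $G$ we have $\hat\rho\ge\rho/2$, so $|\hat v-v|\le2(|\nabla e|+V_G|e|)/\rho$. Then
\[
\|g/\rho\|_{L^p(\mu_s)}^p=\int|g|^p\rho^{1-p}d\pi.
\]
For $p=2$ this gives $\theta_s^{-1/2}\|g\|$; for $p=1$ it gives $\|g\|_{L^1(\pi)}\le\|g\|$. I expect the main obstacle to be the good/bad-set bookkeeping. The Grönwall estimate has to be stopped at the first exit of the approximate path from $G_t$. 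Reflection at $\partial B$ must be excluded on that event, and the same $J_p$ integrals must control the exit probability without circularity. I would handle this with the exit time $\tau=\inf\{t:\hat X_t\notin G_t\}$, bounding $|D_{t\wedge\tau}|$ by the stopped integral and noting that exit before $T$ on the good set requires $|D_\tau|\ge\delta$.
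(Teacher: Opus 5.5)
Your decomposition is the paper's. The particle-level coupling of $\mu_T$ with $\pi$ is Lemma~\ref{lem:endpoint}, and the diagonal pairing gives $\mathbb E\,W_2^2(\hat\mu_T,\Phi_T\#\nu_0)\le\mathbb E_{\mu_0}|D_T|^2$. The Gr\"onwall rate $L^+_t+\mathrm{pert}_t$, the Markov step, the confinement term $4R_B^2q_*+2m_B$ and the estimate $J_p\le2\mathcal J_p$ are those of Lemmas~\ref{lem:A8} and~\ref{lem:A10}.

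The step that fails as written is your stopping rule $\tau=\inf\{t\mid\hat X_t\notin G_t\}$. The one-sided estimate
$\langle\hat v_t(\hat X_t)-\hat v_t(X_t),\hat X_t-X_t\rangle\le(L^+_t+\mathrm{pert}_t)|D_t|^2$
comes from integrating $-\nabla^2\log\hat\rho_t$ along the segment $[X_t,\hat X_t]$. It therefore needs the whole segment in $G_t$, not just its endpoints. Before $\tau\wedge\sigma_1$ you only know $X_t\in G^\delta_t$ and $\hat X_t\in G_t$. Once $|D_t|>\delta$, the segment can pass through points outside $G_t$. There $\hat\rho_t$ may be small or negative, and the safeguarded field is controlled only by $v_{\max}$, not by a one-sided Lipschitz constant. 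Definition~\ref{def:good} does not require $G_t$ to be convex. The sets of Lemma~\ref{lem:construct} are superlevel sets of $\rho_t/w$ intersected with $B$, and they need not be convex, for instance when $\rho_t$ is multimodal. So your stopped bound $|D_{t\wedge\tau}|\le I(x)$ is not justified between the first time $|D_t|$ exceeds $\delta$ and $\tau$.

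The missing idea is to stop at the separation time $\sigma_2=\inf\{t\mid|D_t|>\delta\}$ instead. For $t<\sigma_1\wedge\sigma_2$ one has $\hat X_t\in\bar B_\delta(X_t)\subset G_t$. Since the ball is convex, the segment stays in $G_t$, where the Hessian bound holds, the safeguards are inactive and reflection cannot occur. In particular $\tau\ge\sigma_1\wedge\sigma_2$, so your $\tau$ becomes redundant. On $\{\sigma_1>T\}$ the bad event is then $\{\sigma_2\le T\}$. On that event continuity gives $\delta=|D_{\sigma_2}|\le I(x)$, with $I(x)=\int_0^Te^{\hat\Lambda_{s,T}}|\hat v_s-v_s|(X_s)\mathbf 1_{G_s}(X_s)\,ds$. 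Markov's inequality for $I$ and $I^2$ then yields the term $\min(J_1/\delta,J_2^2/\delta^2)$ in $q_*$ without circularity. With this change your argument coincides with the paper's proof.
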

For a field constructed from data independent of the particles, the bound holds conditionally on the data.

Term (i) is the distance of the exact flow to the target at time $T$, whose spectral bound decays at least like $e^{-\lambda_1T}$. Term (ii) is the error of $M$ independent samples of the target \citep{fournier2015} and carries no stability factor. Term (iii) consists of the velocity error on the validity region, the mass leaving the region and the mass outside the domain, and the errors of the eigenpairs and of the coefficients enter through $e_t$ (Corollaries~\ref{cor:exact} and~\ref{cor:est}). The amplification factor $e^{\hat\Lambda_{s,T}}$ is that of the exact flow up to a term that vanishes as $r\to\infty$, and it does not multiply the truncation and residual terms of Corollary~\ref{cor:tv}. The proof is a stopped Gr\"onwall argument, as in the analysis of probability flows \citep{chen2023pfode,kwon2022,benton2023} (Appendix~\ref{app:proofmain}). The bound does not depend explicitly on $d$, and its version for the sliced distance of Section~\ref{sec:exp} has a sampling term of order $M^{-1/2}$ in every dimension (Remark~\ref{rem:dimension}).

Validity regions are obtained from envelope bounds $|\nabla^je_t|\le w\,a_jS_t$ on $B$, $j\le2$, where $w$ is the growth envelope of the eigenfunctions and $S_t$ majorises the tail $\sum_{k>r}e^{-\lambda_kt}|c_k|$ and the sampling error of the coefficients. The set where $wS_t\le\kappa\rho_t$ is a validity region, on which $\mathrm{pert}_t=O(a_2\kappa+a_1^2\kappa^2)$, so that $\int_0^T\mathrm{pert}_t\,dt\to0$ as $r\to\infty$ for $\kappa$ of order $\lambda_r^{-3/2}$ (Lemma~\ref{lem:construct}), and $p_{\rm exit}$ is bounded by tail masses of $\mu_t$ (Lemma~\ref{lem:exit}). With exact eigenpairs the error splits as $e_t=e^{\rm tr}_t+e^{\rm fl}_t$ into a truncation error, with $\|e^{\rm tr}_t\|\le e^{-\lambda_{r+1}t}\tau_r$ and $\tau_r\coloneqq\|(I-\Pi_r)\rho_0\|$, and a fluctuation of the coefficients, of order $M'^{-1/2}$ in regime (I) and zero in regime (P) (Corollary~\ref{cor:exact}). With estimated eigenpairs two further terms appear, the change of the coefficients due to the approximate eigenfunctions and the error of the retained modes, whose energy norms are bounded by weighted sums of $|\hat\lambda_k-\lambda_k|$ and $\|\hat\varphi_k-\varphi_k\|_{\Hdot^1}$, $k\le r$ (Corollary~\ref{cor:est}). The velocity error is thus controlled by the estimation theory of the spectral methods, and for Ornstein-Uhlenbeck targets with Gaussian sources these bounds decay geometrically in $r$ (Theorem~\ref{thm:ou}).

\subsection{Same-sample coefficients}\label{sec:samesample}

\begin{proposition}[Same-sample cancellation]\label{prop:cancel}
Let $\hat\varphi_0=1,\hat\varphi_1,\dots,\hat\varphi_r$ be orthonormal in $L^2(\pi)$, $\hat\Pi_r$ the orthogonal projection onto their span and $\hat c_k=M^{-1}\sum_j\hat\varphi_k(x_j)$ (regime (S)). For $f\in C^1_b$ and $g=f\circ\hat\Phi_{0,T}$ the particles $X^j_T=\hat\Phi_{0,T}(x_j)$ satisfy
$$
\frac1M\sum_{j=1}^Mf(X^j_T)-\int f\,d\pi=\int f(\hat\rho_T-1)\,d\pi+\frac1M\sum_{j=1}^M\big[(I-\hat\Pi_r)g\big](x_j)+E_T(f),
$$
where $E_T(f)\coloneqq F_T(f)-\int_0^T\langle f\circ\hat\Phi_{s,T},R_s\rangle_\pi\,ds$ collects the residual and flux terms of \eqref{eq:identity}.
\end{proposition}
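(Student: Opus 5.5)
The plan is to apply the transport identity, Proposition~\ref{prop:identity}, with $\nu=\nu_0=M^{-1}\sum_j\delta_{x_j}$. Then $\hat\Phi_{0,T}\#\nu_0$ is the empirical measure of the particles $X^j_T=\hat\Phi_{0,T}(x_j)$, and \eqref{eq:identity} gives
\[
\frac1M\sum_{j=1}^Mf(X^j_T)=\int f\hat\rho_T\,d\pi+\frac1M\sum_{j=1}^Mg(x_j)-\int g\,\hat\rho_0\,d\pi+E_T(f),
\]
where $g=f\circ\hat\Phi_{0,T}$ and $E_T(f)$ collects the residual and flux terms exactly as defined in the statement.

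The identity of Proposition~\ref{prop:identity} holds for every probability measure $\nu$ and for coefficients of any regime. It is in effect a deterministic statement once the field is fixed. So although in regime (S) the field depends on the same points $x_j$, we may freeze the realisation and apply the identity pathwise. This is the step I would check most carefully. I would verify that the proof in Appendix~\ref{app:identity} uses no independence between $\nu$ and $\hat\rho$, but only the regularity of $\hat\rho$ and of the (safeguarded) field. Given that, no probabilistic argument is needed.

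The remaining computation is the term $\int g\,\hat\rho_0\,d\pi$. By orthonormality of $\hat\varphi_0=1,\dots,\hat\varphi_r$,
\[
\int g\,\hat\rho_0\,d\pi=\sum_{k=0}^r\hat c_k\langle g,\hat\varphi_k\rangle_\pi .
\]
Substituting the same-sample coefficients $\hat c_k=M^{-1}\sum_j\hat\varphi_k(x_j)$, with $\hat c_0=1$ consistent with $\hat\varphi_0=1$, gives
\[
\sum_{k=0}^r\hat c_k\langle g,\hat\varphi_k\rangle_\pi=\frac1M\sum_j\sum_{k\le r}\langle g,\hat\varphi_k\rangle_\pi\hat\varphi_k(x_j)=\frac1M\sum_j(\hat\Pi_rg)(x_j).
\]
Since $g$ is bounded, $g\in L^2(\pi)$, so this projection is well defined. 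The point evaluations make sense because $\hat\Pi_r g$ is a finite combination of the continuous functions $\hat\varphi_k$.

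Hence $M^{-1}\sum_j g(x_j)-\int g\,\hat\rho_0\,d\pi=M^{-1}\sum_j[(I-\hat\Pi_r)g](x_j)$, which is the cancellation on the retained modes. Finally, subtract $\int f\,d\pi$ from both sides and write $\int f\hat\rho_T\,d\pi-\int f\,d\pi=\int f(\hat\rho_T-1)\,d\pi$. This gives the claimed formula.
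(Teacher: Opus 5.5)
Your proof is correct and matches the paper's: you apply Proposition~\ref{prop:identity} pathwise with $\nu=\nu_0$, then use orthonormality and the same-sample coefficients to reduce $\int g\,d(\nu_0-\pi\hat\rho_0)$ to $M^{-1}\sum_j[(I-\hat\Pi_r)g](x_j)$. The paper splits $g=\hat\Pi_rg+(I-\hat\Pi_r)g$ and shows that the first part integrates to zero against $\nu_0-\pi\hat\rho_0$ and the second against $\pi\hat\rho_0$, which is the same computation as your direct evaluation of $\int g\hat\rho_0\,d\pi$.
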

In regime (S) the sampling error of the source thus enters only through the component of $g$ orthogonal to the retained modes, and the fluctuation of the coefficients explicitly only through the first term, damped by $e^{-\hat\lambda_1T}$. When $E_T(f)=0$, the particles integrate exactly, with respect to $\pi\hat\rho_T$, every $f$ for which $f\circ\hat\Phi_{0,T}$ lies in the span of the retained eigenfunctions. Since $g$ depends on the sample through $\hat c$, the size of the remaining term is given by a central limit theorem. For $c\in\mathbb R^r$ let $\Phi^c$ be the flow of the field of Remark~\ref{rem:safeguard} with coefficients $(1,c)$, for particles that are not confined, $E^c_T(f)$ the corresponding term of Proposition~\ref{prop:cancel}, $m(c)\coloneqq\int f\circ\Phi^c_{0,T}\,d\mu_0$, $\bar c_k\coloneqq\mathbb E_{\mu_0}\hat\varphi_k$, $\bar g\coloneqq f\circ\Phi^{\bar c}_{0,T}$ and $\hat\varphi\coloneqq(\hat\varphi_1,\dots,\hat\varphi_r)$.

\begin{theorem}[Asymptotic variances]\label{thm:clt}
In the setting of Proposition~\ref{prop:cancel}, assume (A2), let $\hat\varphi_k\in L^2(\mu_0)$, and assume that on a bounded neighbourhood $U$ of $\bar c$ the fields satisfy the hypotheses of Proposition~\ref{prop:identity}, that $|f\circ\Phi^c_{0,T}-f\circ\Phi^{c'}_{0,T}|\le\kappa\,|c-c'|$ for some $\kappa\in L^2(\mu_0)\cap L^2(\pi)$, and that $m$ is differentiable at $\bar c$. Then, with $M'=M$ in regime (I), $\sqrt M\big(\int f\,d\hat\mu_T-m(\bar c)\big)$ converges in law to $N(0,\sigma^2)$, where
\[
\sigma_P^2=\mathrm{Var}_{\mu_0}(\bar g),\qquad\sigma_I^2=\sigma_P^2+\mathrm{Var}_{\mu_0}\big(\nabla m(\bar c)\cdot\hat\varphi\big),\qquad\sigma_S^2=\mathrm{Var}_{\mu_0}\big(\bar g+\nabla m(\bar c)\cdot\hat\varphi\big).
\]
Moreover $\nabla m(\bar c)=-(\langle\bar g,\hat\varphi_k\rangle_\pi)_{1\le k\le r}+\beta$ with $|\beta|\le e^{-\hat\lambda_1T}\|f\|+\|\kappa\|\hat\tau_r+\ell_T(f)$, where $\hat\tau_r\coloneqq\|(I-\hat\Pi_r)\rho_0\|$ and $\ell_T(f)\coloneqq\limsup_{c\to\bar c}|E^c_T(f)-E^{\bar c}_T(f)|/|c-\bar c|$. Hence
\[
\sigma_S\le\|(I-\hat\Pi_r)\bar g\|_{L^2(\mu_0)}+\|\rho_0\|_\infty^{1/2}|\beta|,
\]
and $\|(I-\Pi_r)\bar g\|_{L^2(\mu_0)}^2\le\|\rho_0\|_\infty\lambda_{r+1}^{-1}\|\bar g\|_{\Hdot^1}^2$ for exact eigenpairs.
\end{theorem}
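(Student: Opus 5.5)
The plan is to linearise the output $\int f\,d\hat\mu_T=M^{-1}\sum_j f\circ\Phi^{c}_{0,T}(x_j)$ around $\bar c$, with $c$ the coefficient vector of the regime ($\bar c$, $\check c$ or $\hat c$), and apply the multivariate central limit theorem to the pair (empirical mean of $\bar g$, empirical mean of $\hat\varphi$). Write
\[
\frac1M\sum_j f\circ\Phi^{c}_{0,T}(x_j)-m(\bar c)=\frac1M\sum_j\big(\bar g(x_j)-m(\bar c)\big)+\frac1M\sum_j\big(f\circ\Phi^c_{0,T}-f\circ\Phi^{\bar c}_{0,T}\big)(x_j).
\]
In regime (P) the second term vanishes and the classical CLT gives $\sigma_P^2=\mathrm{Var}_{\mu_0}(\bar g)$. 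In regimes (I) and (S), $c-\bar c=O_P(M^{-1/2})$ by the CLT for $\hat\varphi\in L^2(\mu_0)$, so $c$ lies in $U$ with probability tending to one. Split the second term into $m(c)-m(\bar c)$ plus the centred empirical process $\frac1M\sum_j(h_c-h_{\bar c})(x_j)-\mathbb E_{\mu_0}(h_c-h_{\bar c})$, where $h_c=f\circ\Phi^c_{0,T}$.

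The Lipschitz envelope $|h_c-h_{c'}|\le\kappa|c-c'|$ with $\kappa\in L^2(\mu_0)$ makes $\{h_c\}_{c\in U}$ a Lipschitz-in-parameter class over a finite-dimensional index set. Its bracketing entropy is therefore finite (van der Vaart, Example 19.7), so it is $\mu_0$-Donsker, and stochastic equicontinuity gives that the empirical process term is $o_P(M^{-1/2})$ whenever $c\to\bar c$ in probability. This handles the dependence of $g$ on the sample in regime (S) and is the step needing the most care. By differentiability of $m$ at $\bar c$, $m(c)-m(\bar c)=\nabla m(\bar c)\cdot(c-\bar c)+o_P(M^{-1/2})$. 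In regime (I), $c-\bar c$ is the centred empirical mean of $\hat\varphi$ over an independent sample, which yields the sum of variances $\sigma_I^2$. In regime (S) it is the centred mean over the same $x_j$, so the joint CLT applied to $\bar g+\nabla m(\bar c)\cdot\hat\varphi$ gives $\sigma_S^2$.

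For the formula for $\nabla m$, apply Proposition~\ref{prop:identity} with $\nu=\mu_0$ and coefficients $c$:
\[
m(c)=\int f\hat\rho^c_T\,d\pi+\int h_c\,d\mu_0-\int h_c\hat\rho^c_0\,d\pi+E^c_T(f),
\]
so that
\[
m(c)=\int f\hat\rho^c_T\,d\pi+\int h_c\,(\rho_0-\hat\rho^c_0)\,d\pi+E^c_T(f).
\]
Now differentiate in $c$ along increments $c'-c$.
\begin{itemize}
\item The first term has derivative $(e^{-\hat\lambda_kT}\langle f,\hat\varphi_k\rangle_\pi)_k$, whose norm is at most $e^{-\hat\lambda_1T}\|f\|$ by Bessel's inequality.
\item In the second term, the variation of $\hat\rho^c_0$ produces $-(\langle\bar g,\hat\varphi_k\rangle_\pi)_k$.
\item The variation of $h_c$ against $\rho_0-\hat\rho^{\bar c}_0=(I-\hat\Pi_r)\rho_0$ contributes at most $\|\kappa\|\hat\tau_r$ by Cauchy--Schwarz and the envelope; here $\hat\rho^{\bar c}_0=\hat\Pi_r\rho_0$ because $\bar c_k=\langle\rho_0,\hat\varphi_k\rangle_\pi$ by orthonormality.
\item The $E$ term contributes at most $\ell_T(f)$.
\end{itemize}
Since $m$ is assumed differentiable, these difference-quotient bounds identify $\nabla m(\bar c)$ up to $\beta$ with the stated bound.

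For the final inequality, substitute this expression:
\[
\bar g+\nabla m\cdot\hat\varphi=\bar g-\sum_{k\le r}\langle\bar g,\hat\varphi_k\rangle_\pi\hat\varphi_k+\beta\cdot\hat\varphi=(I-\hat\Pi_r)\bar g+\langle\bar g,1\rangle_\pi+\beta\cdot\hat\varphi.
\]
The standard deviation under $\mu_0$ is at most the $L^2(\mu_0)$ norms of the non-constant parts. Moreover $\|\beta\cdot\hat\varphi\|_{L^2(\mu_0)}\le\|\rho_0\|_\infty^{1/2}\|\beta\cdot\hat\varphi\|=\|\rho_0\|_\infty^{1/2}|\beta|$ by orthonormality. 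For exact eigenpairs, $\|(I-\Pi_r)\bar g\|^2_{L^2(\mu_0)}\le\|\rho_0\|_\infty\sum_{k>r}\langle\bar g,\varphi_k\rangle^2_\pi\le\|\rho_0\|_\infty\lambda_{r+1}^{-1}\sum_k\lambda_k\langle\bar g,\varphi_k\rangle_\pi^2$, which is the Dirichlet form $\|\bar g\|^2_{\Hdot^1}$.

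The main obstacle is the stochastic-equicontinuity step in regime (S). I would first try the bracketing argument from the $\kappa$-envelope. If that fails, a fallback is to bound the second moment of the empirical process term directly by $\mathbb E[\kappa^2]\,\mathbb E|c-\bar c|^2$-type estimates after conditioning. That fallback is delicate because $c$ depends on the same sample, so a symmetrisation or covering argument on $U$ would be needed.
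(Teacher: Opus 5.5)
Your proposal is correct and follows essentially the same route as the paper. The $\kappa$-Lipschitz envelope gives a $\mu_0$-Donsker class, and asymptotic equicontinuity together with the differentiability of $m$ reduces all three regimes to a CLT for $\bar g+\nabla m(\bar c)\cdot\hat\varphi$ or its independent-sample analogue. The transport identity with $\nu=\mu_0$, split against $(I-\hat\Pi_r)\rho_0$ and $\hat\rho^c_0-\hat\rho^{\bar c}_0$, then identifies $\nabla m(\bar c)$ up to $\beta$ via Bessel and Cauchy--Schwarz. Your fallback is not needed: the bracketing argument you put first is exactly the one the paper uses.
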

The variance in regime (P) is that of $f$ under the output law $\Phi^{\bar c}_{0,T}\#\mu_0$, close to $\mathrm{Var}_\pi(f)$ by Sections~\ref{sec:identity} and~\ref{sec:stability}, and regime (I) adds the independent fluctuation of the coefficients. In regime (S) the component of $\bar g$ along the retained modes is removed, and $\beta$ collects the finite horizon, the truncation and the residual and flux terms, with $\ell_T(f)=0$ for exact eigenpairs when the safeguards are inactive on the whole space for $c$ near $\bar c$. For exact eigenpairs $\sigma_S^2\le2\|\rho_0\|_\infty(\lambda_{r+1}^{-1}\|\bar g\|_{\Hdot^1}^2+|\beta|^2)$, against $\mathrm{Var}_\pi(f)\le\lambda_1^{-1}\|f\|_{\Hdot^1}^2$ for an independent sample. This is the mechanism of moment matching in Monte Carlo integration \citep{glasserman2004}, in which a sample is transformed so that prescribed moments take their exact values, here those of the retained eigenfunctions transported by the flow. In the matched Ornstein-Uhlenbeck setting the observed standard deviations follow Theorem~\ref{thm:clt}, and regime (S) reduces the standard deviation of an independent sample by factors between $4$ and $70$ for smooth test functions (Table~\ref{tab:clt}). The gain is smaller in the Wasserstein distance, which involves test functions at all scales, and Corollary~\ref{cor:samesample} bounds the cost of the same-sample dependence in $W_2$ by a term of order $M^{-1/2}$ and the mass leaving the validity region.

\begin{remark}[Interacting spectral samplers]\label{rem:lawgd}
With the same eigenpairs, the $M$ particles of LAWGD follow $\dot x_i=-\sum_{k=1}^r\hat\lambda_k^{-1}m_k\nabla\hat\varphi_k(x_i)$ with $m_k=M^{-1}\sum_j\hat\varphi_k(x_j)$, the gradient flow of $\mathcal E=\frac M2\sum_{k=1}^r\hat\lambda_k^{-1}m_k^2$, a multiple of the squared maximum mean discrepancy between their empirical measure and $\pi$ for the kernel $\sum_{k=1}^r\hat\lambda_k^{-1}\hat\varphi_k\otimes\hat\varphi_k$ \citep{arbel2019}. Its zeros integrate the retained eigenfunctions exactly, so that LAWGD attains the cancellation of Proposition~\ref{prop:cancel} through the interaction when it reaches a zero of $\mathcal E$, whereas BKT imposes it at $t=0$.
\end{remark}

\section{Numerical results}\label{sec:exp}

The experiments examine, for systems with a known invariant law, how BKT compares with particle methods that use the same data through density estimates, how the accuracy of the eigenpairs, computed with $r$ nonconstant modes, $J$ dictionary functions and $n$ trajectory pairs, transfers to the $M$ transported particles, and how the dimension and the number of metastable states enter. Unless stated otherwise the coefficients are those of regime (S). We write Koopman (RBF) and Koopman (Legendre) for eigenpairs estimated by reversible gEDMD \citep{klus2020} with Gaussian and Legendre dictionaries, and compare them with a finite-difference discretisation of the Dirichlet form on $N$ points per direction (FD-$N$) or with the analytical Hermite spectrum. The error is the sliced $W_2$ distance, with $64$ fixed directions, between the transported particles and a target sample. The reference level is the mean $\pm$ standard deviation (sd) over $10$ realisations of the same distance between two independent target samples, and an error below this mean plus one standard deviation is said to reach the reference level. The comparisons are with particle methods that, like BKT, involve no training of a neural network, and the horizon is $T=8/\lambda_1$ (Appendix~\ref{app:exp}).

\begin{figure}[t]
\begin{minipage}[b]{0.57\textwidth}\centering\scriptsize\setlength{\tabcolsep}{2.2pt}
\begin{tabular}{lccc}\toprule
system & FD & Koopman & reference\\\midrule
2D, $\beta=0$ & $0.027\pm0.002$ & $0.028\pm0.003$ & $0.040\pm0.008$\\
2D, $\beta=0.25$ & $0.028\pm0.002$ & $0.029\pm0.003$ & $0.041\pm0.008$\\
2D, $\beta=0.5$ & $0.028\pm0.002$ & $0.031\pm0.004$ & $0.042\pm0.010$\\
2D, $\beta=1$ & $0.027\pm0.002$ & $0.027\pm0.003$ & $0.042\pm0.012$\\
2D four wells & $0.024\pm0.003$ & $0.028\pm0.002$ & $0.043\pm0.008$\\
10D & $0.053\pm0.002$ & $0.061\pm0.002$ & $0.059\pm0.004$\\
50D & $0.0158\pm0.0002$ & $0.0188\pm0.0012$ & $0.0162\pm0.0008$\\\bottomrule
\end{tabular}
\tabcaption{Sliced $W_2$ error of BKT with the FD and the Koopman (RBF) eigenpairs, mean $\pm$ sd over ten realisations, for the double well with coupling $\beta$ (2D), the four-well product, the double well with nine harmonic coordinates (10D) and the fifty-fold product (50D).}\label{tab:cells}
\end{minipage}\hfill
\begin{minipage}[b]{0.4\textwidth}\centering
\includegraphics[width=\textwidth]{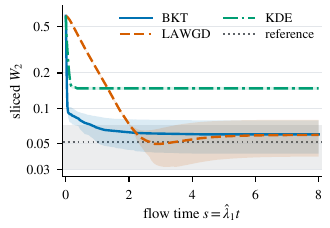}
\caption{Alanine dipeptide, sliced $W_2$ against the flow time, mean $\pm$ sd over three realisations, and the reference level.}\label{fig:ala_main}
\end{minipage}
\end{figure}

\paragraph{Dynamics against density estimates.} For the dihedral angles $(\phi,\psi)$ of alanine dipeptide, the eigenpairs of a reversible surrogate with unit mobility are estimated from $250{,}000$ frames of a public molecular dynamics trajectory with a Fourier dictionary of degree $28$, and $256$ modes transport a source localised in one basin. The kernel particle method uses the same frames through a kernel estimate of the target density and a kernel estimate of the density of the particles \citep{degond1990,chertock2017}. It reaches a sliced error of $0.148$ by $s=\hat\lambda_1t=1$ and remains there, about three times the error $0.051\pm0.021$ of direct sampling (Figure~\ref{fig:ala_main}), since its velocity depends on two density estimates whose errors do not decrease along the flow. BKT reaches $0.060\pm0.019$ at $s=8$, within the fluctuation of the reference, with a total variation of the basin masses of $0.033\pm0.018$ against $0.023\pm0.014$. LAWGD, which uses the same eigenpairs, reaches a comparable final error (Remark~\ref{rem:lawgd}), and BKT attains its late-time level in $34$ against $54$ seconds (Appendix~\ref{app:ala}). The same effect appears for the two-dimensional double wells, where the kernel method uses a drift regressed from the increments of the trajectory and gives $0.039$ and $0.048$ at $\beta=0$ and $0.5$, against $0.028$ and $0.031$ for BKT, with twice the spread at $\beta=0.5$ and five times the cost (Figure~\ref{fig:comparison}).

\paragraph{Ornstein-Uhlenbeck process.} With the analytical spectrum, $M=10^4$ and $T=6$, the error equals $0.074$, $0.045$, $0.023$, $0.011$ and $0.011$ at $r=10,15,20,30,60$, against $0.019\pm0.005$ for $10^4$ independent samples, and lies below the reference level beyond $r=20$ to $30$ (Figure~\ref{fig:ou}). Matched comparisons order the errors as (S) $<$ (P) $<$ (I), in agreement with Theorem~\ref{thm:clt} (Appendix~\ref{app:ou_path_residual}).

\paragraph{Two-dimensional double well with coupling.} For $V=(x_1^2-1)^2+x_2^2+\frac\beta2(x_2-x_1)^2$ and a source equal to a bump in the right well of $x_1$ times $N(0,0.3)$ in $x_2$, BKT with the FD-201 and Koopman (RBF) eigenpairs lies below the reference level at $r=64$ for every $\beta\in\{0,0.25,0.5,1\}$, with differences between the values of $\beta$ within the fluctuation (Table~\ref{tab:cells}). The reference level is reached between $r=32$ and $64$, since the ordering by eigenvalue interleaves the double-well modes with the harmonic ones (Figure~\ref{fig:cells}). Relative to the reference mean, same-sample coefficients give $0.65$ to $0.73$ and coefficients from an independent sample $1.02$ to $1.23$, close to the values $2^{-1/2}$ and $(3/2)^{1/2}$ expected from Theorem~\ref{thm:clt} when the retained modes carry most of the fluctuation and the distance is evaluated against one target sample (Appendices~\ref{app:resolution} and~\ref{app:comp}).

\paragraph{Multi-well potentials.} For the quartic four-well potential $V=(x^2-1)^2+(y^2-1)^2$ with the source in one well, both spectra lie below the reference level at $r=64$, and the four basin masses are recovered. With the FD spectrum BKT also reaches it for a nine-well potential ($0.033\pm0.003$ against $0.034\pm0.008$), and with the estimated spectrum it gives $0.055\pm0.006$, which isolates the error of the spectral estimation (Appendix~\ref{app:comp}).

\paragraph{Dimension and metastability.} The construction of BKT and the bounds of Section~\ref{sec:theory} do not depend on the dimension explicitly, and the dimension enters through the spectrum in two ways. First, the retained modes must represent the source. When the barriers are high, a potential with $n$ local minima has $n-1$ nonzero eigenvalues that are exponentially small in the barrier heights and separated from the rest of the spectrum \citep{bovier2005}, and a source concentrated in one well has in general nonzero coefficients on all of them, with $n=2^d$ for the product $V=\sum_{i\le d}(x_i^2-1)^2$ of $d$ double wells. Second, these slow eigenpairs must be estimated, and with a fixed dictionary the number of functions grows exponentially with $d$. Both difficulties belong to the spectral description of metastable dynamics, the central problem of transfer operator methods in molecular dynamics \citep{schutte2023}, and in the experiments the transport reaches the reference level whenever the spectrum is accurate, as for the coupled ten-dimensional Ornstein-Uhlenbeck process treated in its eigenbasis ($0.0195\pm0.0005$ against $0.0235\pm0.0018$, Appendix~\ref{app:ou_hd}) and for the fifty-fold product, where the product structure replaces the $17^{50}$ products of one-dimensional modes of a joint expansion by $16$ modes per coordinate (Table~\ref{tab:cells}). For $V=(x_1^2-1)^2+\sum_{i=2}^{10}x_i^2$, eigenpairs estimated in $\mathbb R^{10}$ ($J=657$, $r=16$) also reach the reference level, since the source differs from $\pi$ only in the slow coordinate, and in the fifty-fold product the largest marginal errors ($0.039$ with the estimated factor, $0.020$ with the FD factor, $0.030$ for the reference) place the excess in the one-dimensional spectral estimation (Appendix~\ref{app:comp}). The number of modes is thus set by the slow coordinates in which the source differs from $\pi$, and the source can be chosen accordingly. In applications the slow dynamics of high-dimensional systems is described on a few collective variables, as for alanine dipeptide, and coupled systems with many wells require eigenpairs from kernel or neural estimators \citep{meanti2023,mardt2018,kostic2024,zhang2022eigen}, which BKT accepts without change, since its bounds transfer the errors of any eigenpairs in energy norm and eigenvalue (Corollary~\ref{cor:est}).

\section{Concluding remarks}\label{sec:limits}

BKT turns the spectrum of the generator, estimated from trajectories, into a transport map that moves non-interacting particles along the KL gradient flow of a reversible diffusion, and it uses the trajectories through the dynamics rather than through density estimates. With exact eigenpairs it transports a positive spectral projection of the source exactly, its error for estimated eigenpairs is controlled by the decay of the exact flow, the sampling error of the source, the truncation and the residual of the eigenpairs, and same-sample coefficients remove the retained modes from the fluctuation of the source. For a nonreversible diffusion with the same invariant law and unit diffusion, BKT built on Dirichlet-form estimators still samples $\pi$ (Appendix~\ref{app:assump}). In high dimension the limiting factor is the estimation of the slow eigenpairs of metastable dynamics, and the combination of BKT with neural spectral estimators, to which the bounds of Section~\ref{sec:theory} apply directly, is the next step, together with a non-asymptotic version of Theorem~\ref{thm:clt} in the Wasserstein distance.

\ificlrfinal
\subsubsection*{Acknowledgments}
Y.X. and I.I. acknowledge support from JST CREST Grant No.~JPMJCR24Q1, including the AIP Challenge Program.
\fi

\subsubsection*{Reproducibility statement}
The code, configurations and data needed to reproduce all experiments are available at
\ificlrfinal\url{https://github.com/TalkingDoll/BKT}\else\url{https://anonymous.4open.science/r/BKT-1E06}\fi.

\subsubsection*{AI use statement}
An AI assistant was used to assist with the writing.

\bibliography{refs}

@inproceedings{albergo2023,
  author = {Michael S. Albergo and Eric Vanden-Eijnden},
  title = {{Building normalizing flows with stochastic interpolants}},
  booktitle = {International Conference on Learning Representations},
  year = {2023},
}

@book{ambrosio2008,
  author = {Luigi Ambrosio and Nicola Gigli and Giuseppe Savar{\'e}},
  title = {{Gradient Flows in Metric Spaces and in the Space of Probability Measures}},
  year = {2008},
  edition = {Second},
  series = {Lectures in Mathematics. ETH Z{\"u}rich},
  publisher = {Birkh{\"a}user},
  doi = {10.1007/978-3-7643-8722-8},
}

@book{bakry2014,
  author = {Dominique Bakry and Ivan Gentil and Michel Ledoux},
  title = {{Analysis and Geometry of Markov Diffusion Operators}},
  year = {2014},
  volume = {348},
  series = {Grundlehren der mathematischen Wissenschaften},
  publisher = {Springer},
  doi = {10.1007/978-3-319-00227-9},
}

@article{benton2023,
  author = {Joe Benton and George Deligiannidis and Arnaud Doucet},
  title = {{Error bounds for flow matching methods}},
  journal = {Transactions on Machine Learning Research},
  year = {2024},
  eprint = {2305.16860},
  archivePrefix = {arXiv},
}

@article{berry2016,
  author = {Tyrus Berry and John Harlim},
  title = {{Variable bandwidth diffusion kernels}},
  journal = {Applied and Computational Harmonic Analysis},
  year = {2016},
  volume = {40},
  number = {1},
  pages = {68--96},
  doi = {10.1016/j.acha.2015.01.001},
}

@article{berry2015,
  author = {Tyrus Berry and Dimitrios Giannakis and John Harlim},
  title = {{Nonparametric forecasting of low-dimensional dynamical systems}},
  journal = {Physical Review E},
  year = {2015},
  volume = {91},
  number = {3},
  pages = {032915},
  doi = {10.1103/PhysRevE.91.032915},
}

@article{bobkov2019,
  author = {Sergey Bobkov and Michel Ledoux},
  title = {{One-dimensional empirical measures, order statistics, and Kantorovich transport distances}},
  journal = {Memoirs of the American Mathematical Society},
  year = {2019},
  volume = {261},
  number = {1259},
  doi = {10.1090/memo/1259},
}

@article{boffi2023,
  author = {Nicholas M. Boffi and Eric Vanden-Eijnden},
  title = {{Probability flow solution of the Fokker--Planck equation}},
  journal = {Machine Learning: Science and Technology},
  year = {2023},
  volume = {4},
  number = {3},
  pages = {035012},
  doi = {10.1088/2632-2153/ace2aa},
  eprint = {2206.04642},
  archivePrefix = {arXiv},
}

@article{brigati2024,
  author = {Giovanni Brigati and Francesco Pedrotti},
  title = {{Heat flow, log-concavity, and Lipschitz transport maps}},
  journal = {Electronic Communications in Probability},
  year = {2025},
  volume = {30},
  pages = {1--12},
  doi = {10.1214/25-ECP717},
  eprint = {2404.15205},
  archivePrefix = {arXiv},
}

@article{cai2024,
  author = {Zhiqiang Cai and Yu Cao and Yuanfei Huang and Xiang Zhou},
  title = {{Weak generative sampler to efficiently sample invariant distribution of stochastic differential equation}},
  journal = {SIAM Journal on Scientific Computing},
  year = {2026},
  volume = {48},
  number = {4},
  pages = {C708--C735},
  doi = {10.1137/24M1665271},
  eprint = {2405.19256},
  archivePrefix = {arXiv},
}

@article{carrillo2019,
  author = {Jos{\'e} Antonio Carrillo and Katy Craig and Francesco S. Patacchini},
  title = {{A blob method for diffusion}},
  journal = {Calculus of Variations and Partial Differential Equations},
  year = {2019},
  volume = {58},
  number = {2},
  pages = {53},
  doi = {10.1007/s00526-019-1486-3},
}

@inproceedings{chen2023pfode,
  author = {Sitan Chen and Sinho Chewi and Holden Lee and Yuanzhi Li and Jianfeng Lu and Adil Salim},
  title = {{The probability flow ODE is provably fast}},
  booktitle = {Advances in Neural Information Processing Systems},
  year = {2023},
  volume = {36},
  pages = {68552--68575},
  publisher = {Curran Associates, Inc.},
}

@inproceedings{chen2023,
  author = {Sitan Chen and Sinho Chewi and Jerry Li and Yuanzhi Li and Adil Salim and Anru R. Zhang},
  title = {{Sampling is as easy as learning the score: theory for diffusion models with minimal data assumptions}},
  booktitle = {International Conference on Learning Representations},
  year = {2023},
}

@inproceedings{chewi2020,
  author = {Chewi, Sinho and Le Gouic, Thibaut and Lu, Chen and Maunu, Tyler and Rigollet, Philippe},
  title = {{SVGD as a kernelized Wasserstein gradient flow of the chi-squared divergence}},
  booktitle = {Advances in Neural Information Processing Systems},
  year = {2020},
  volume = {33},
  pages = {2098--2109},
  publisher = {Curran Associates, Inc.},
  eprint = {2006.02509},
  archivePrefix = {arXiv},
}

@article{chewi2025stability,
  author = {Sinho Chewi and Aram-Alexandre Pooladian and Matthew S. Zhang},
  title = {{Stability of the Kim--Milman flow map}},
  journal = {Electronic Communications in Probability},
  year = {2026},
  volume = {31},
  pages = {1--13},
  doi = {10.1214/26-ECP800},
  eprint = {2511.01154},
  archivePrefix = {arXiv},
}

@article{chewi2026,
  author = {Sinho Chewi and Katharina Eichinger and Aram-Alexandre Pooladian},
  title = {{Near-Lipschitz stability of the Kim--Milman flow map}},
  journal = {arXiv preprint arXiv:2606.23383},
  year = {2026},
  eprint = {2606.23383},
  archivePrefix = {arXiv},
}

@article{coifman2006,
  author = {Ronald R. Coifman and St{\'e}phane Lafon},
  title = {{Diffusion maps}},
  journal = {Applied and Computational Harmonic Analysis},
  year = {2006},
  volume = {21},
  number = {1},
  pages = {5--30},
  doi = {10.1016/j.acha.2006.04.006},
}

@article{colbrook2023,
  author = {Matthew J. Colbrook and Lorna J. Ayton and M{\'a}t{\'e} Sz{\H o}ke},
  title = {{Residual dynamic mode decomposition: robust and verified Koopmanism}},
  journal = {Journal of Fluid Mechanics},
  year = {2023},
  volume = {955},
  pages = {A21},
  doi = {10.1017/jfm.2022.1052},
}

@article{degond1990,
  author = {Pierre Degond and Francisco-Jos{\'e} Mustieles},
  title = {{A deterministic approximation of diffusion equations using particles}},
  journal = {SIAM Journal on Scientific and Statistical Computing},
  year = {1990},
  volume = {11},
  number = {2},
  pages = {293--310},
  doi = {10.1137/0911018},
}

@article{deveney2023,
  author = {Teo Deveney and Jan Stanczuk and Lisa Kreusser and Chris Budd and Carola-Bibiane Sch{\"o}nlieb},
  title = {{Closing the ODE--SDE gap in score-based diffusion models through the Fokker--Planck equation}},
  journal = {Philosophical Transactions of the Royal Society A: Mathematical, Physical and Engineering Sciences},
  year = {2025},
  volume = {383},
  number = {2298},
  pages = {20240503},
  doi = {10.1098/rsta.2024.0503},
  eprint = {2311.15996},
  archivePrefix = {arXiv},
}

@inproceedings{devergne2024,
  author = {Timoth{\'e}e Devergne and Vladimir R. Kostic and Michele Parrinello and Massimiliano Pontil},
  title = {{From biased to unbiased dynamics: an infinitesimal generator approach}},
  booktitle = {Advances in Neural Information Processing Systems},
  year = {2024},
  volume = {37},
  pages = {75495--75521},
  publisher = {Curran Associates, Inc.},
}

@article{fournier2015,
  author = {Nicolas Fournier and Arnaud Guillin},
  title = {{On the rate of convergence in Wasserstein distance of the empirical measure}},
  journal = {Probability Theory and Related Fields},
  year = {2015},
  volume = {162},
  number = {3--4},
  pages = {707--738},
  doi = {10.1007/s00440-014-0583-7},
}

@article{gastner2004,
  author = {Michael T. Gastner and M. E. J. Newman},
  title = {{Diffusion-based method for producing density-equalizing maps}},
  journal = {Proceedings of the National Academy of Sciences},
  year = {2004},
  volume = {101},
  number = {20},
  pages = {7499--7504},
  doi = {10.1073/pnas.0400280101},
}

@article{giannakis2019,
  author = {Dimitrios Giannakis},
  title = {{Data-driven spectral decomposition and forecasting of ergodic dynamical systems}},
  journal = {Applied and Computational Harmonic Analysis},
  year = {2019},
  volume = {47},
  number = {2},
  pages = {338--396},
  doi = {10.1016/j.acha.2017.09.001},
}

@article{huang2025,
  author = {Daniel Zhengyu Huang and Jiaoyang Huang and Zhengjiang Lin},
  title = {{Convergence analysis of probability flow ODE for score-based generative models}},
  journal = {IEEE Transactions on Information Theory},
  year = {2025},
  volume = {71},
  number = {6},
  pages = {4581--4601},
  doi = {10.1109/TIT.2025.3557050},
  eprint = {2404.09730},
  archivePrefix = {arXiv},
}

@article{ilin2025,
  author = {Vasily Ilin and Peter Sushko and Jingwei Hu},
  title = {{Score-based deterministic density sampling}},
  journal = {Communications on Pure and Applied Analysis},
  year = {2026},
  volume = {32},
  pages = {153--170},
  doi = {10.3934/cpaa.2026028},
  eprint = {2504.18130},
  archivePrefix = {arXiv},
}

@article{indritz1961,
  author = {Jack Indritz},
  title = {{An inequality for Hermite polynomials}},
  journal = {Proceedings of the American Mathematical Society},
  year = {1961},
  volume = {12},
  number = {6},
  pages = {981--983},
  doi = {10.1090/S0002-9939-1961-0132852-2},
}

@article{jko1998,
  author = {Richard Jordan and David Kinderlehrer and Felix Otto},
  title = {{The variational formulation of the Fokker--Planck equation}},
  journal = {SIAM Journal on Mathematical Analysis},
  year = {1998},
  volume = {29},
  number = {1},
  pages = {1--17},
  doi = {10.1137/S0036141096303359},
}

@article{khoo2025,
  author = {Yuehaw Khoo and Mathias Oster and Yifan Peng},
  title = {{Optimization-free diffusion model---a perturbation theory approach}},
  journal = {arXiv preprint arXiv:2505.23652},
  year = {2025},
  eprint = {2505.23652},
  archivePrefix = {arXiv},
}

@article{kim2012,
  author = {Young-Heon Kim and Emanuel Milman},
  title = {{A generalization of Caffarelli's contraction theorem via (reverse) heat flow}},
  journal = {Mathematische Annalen},
  year = {2012},
  volume = {354},
  number = {3},
  pages = {827--862},
  doi = {10.1007/s00208-011-0749-x},
}

@article{klebanov2024,
  author = {Ilja Klebanov},
  title = {{Deterministic Fokker--Planck transport---with applications to sampling, variational inference, kernel mean embeddings \& sequential Monte Carlo}},
  journal = {arXiv preprint arXiv:2410.18993},
  year = {2024},
  eprint = {2410.18993},
  archivePrefix = {arXiv},
}

@article{klus2018,
  author = {Stefan Klus and Feliks N{\"u}ske and P{\'e}ter Koltai and Hao Wu and Ioannis Kevrekidis and Christof Sch{\"u}tte and Frank No{\'e}},
  title = {{Data-driven model reduction and transfer operator approximation}},
  journal = {Journal of Nonlinear Science},
  year = {2018},
  volume = {28},
  number = {3},
  pages = {985--1010},
  doi = {10.1007/s00332-017-9437-7},
}

@article{klus2020,
  author = {Stefan Klus and Feliks N{\"u}ske and Sebastian Peitz and Jan-Hendrik Niemann and Cecilia Clementi and Christof Sch{\"u}tte},
  title = {{Data-driven approximation of the Koopman generator: model reduction, system identification, and control}},
  journal = {Physica D: Nonlinear Phenomena},
  year = {2020},
  volume = {406},
  pages = {132416},
  doi = {10.1016/j.physd.2020.132416},
}

@inproceedings{kostic2022,
  author = {Vladimir Kostic and Pietro Novelli and Andreas Maurer and Carlo Ciliberto and Lorenzo Rosasco and Massimiliano Pontil},
  title = {{Learning dynamical systems via Koopman operator regression in reproducing kernel Hilbert spaces}},
  booktitle = {Advances in Neural Information Processing Systems},
  year = {2022},
  volume = {35},
  pages = {4017--4031},
  publisher = {Curran Associates, Inc.},
}

@inproceedings{kostic2023,
  author = {Vladimir Kostic and Karim Lounici and Pietro Novelli and Massimiliano Pontil},
  title = {{Sharp spectral rates for Koopman operator learning}},
  booktitle = {Advances in Neural Information Processing Systems},
  year = {2023},
  volume = {36},
  pages = {32328--32339},
  publisher = {Curran Associates, Inc.},
}

@inproceedings{kostic2024,
  author = {Vladimir R. Kostic and Karim Lounici and H{\'e}l{\`e}ne Halconruy and Timoth{\'e}e Devergne and Massimiliano Pontil},
  title = {{Learning the infinitesimal generator of stochastic diffusion processes}},
  booktitle = {Advances in Neural Information Processing Systems},
  year = {2024},
  volume = {37},
  pages = {137806--137846},
  publisher = {Curran Associates, Inc.},
  eprint = {2405.12940},
  archivePrefix = {arXiv},
}

@inproceedings{kwon2022,
  author = {Dohyun Kwon and Ying Fan and Kangwook Lee},
  title = {{Score-based generative modeling secretly minimizes the Wasserstein distance}},
  booktitle = {Advances in Neural Information Processing Systems},
  year = {2022},
  volume = {35},
  pages = {20205--20217},
  publisher = {Curran Associates, Inc.},
}

@article{li2025,
  author = {Fengyi Li and Youssef Marzouk},
  title = {{Diffusion map particle systems for generative modeling}},
  journal = {Foundations of Data Science},
  year = {2025},
  volume = {7},
  number = {3},
  pages = {814--837},
  doi = {10.3934/fods.2024054},
}

@article{lions2001,
  author = {Pierre-Louis Lions and Sylvie Mas-Gallic},
  title = {{Une m{\'e}thode particulaire d{\'e}terministe pour des {\'e}quations diffusives non lin{\'e}aires}},
  journal = {Comptes Rendus de l'Acad{\'e}mie des Sciences, S{\'e}rie I},
  year = {2001},
  volume = {332},
  number = {4},
  pages = {369--376},
  doi = {10.1016/S0764-4442(00)01795-X},
}

@article{lions1984,
  author = {Pierre-Louis Lions and Alain-Sol Sznitman},
  title = {{Stochastic differential equations with reflecting boundary conditions}},
  journal = {Communications on Pure and Applied Mathematics},
  year = {1984},
  volume = {37},
  number = {4},
  pages = {511--537},
  doi = {10.1002/cpa.3160370408},
}

@inproceedings{lipman2023,
  author = {Yaron Lipman and Ricky T. Q. Chen and Heli Ben-Hamu and Maximilian Nickel and Matt Le},
  title = {{Flow matching for generative modeling}},
  booktitle = {International Conference on Learning Representations},
  year = {2023},
}

@inproceedings{liu2016,
  author = {Qiang Liu and Dilin Wang},
  title = {{Stein variational gradient descent: a general purpose Bayesian inference algorithm}},
  booktitle = {Advances in Neural Information Processing Systems},
  year = {2016},
  volume = {29},
  pages = {2378--2386},
  publisher = {Curran Associates, Inc.},
}

@inproceedings{lou2023,
  author = {Aaron Lou and Minkai Xu and Adam Farris and Stefano Ermon},
  title = {{Scaling Riemannian diffusion models}},
  booktitle = {Advances in Neural Information Processing Systems},
  year = {2023},
  volume = {36},
  pages = {80291--80305},
  publisher = {Curran Associates, Inc.},
  eprint = {2310.20030},
  archivePrefix = {arXiv},
}

@article{lusch2018,
  author = {Bethany Lusch and J. Nathan Kutz and Steven L. Brunton},
  title = {{Deep learning for universal linear embeddings of nonlinear dynamics}},
  journal = {Nature Communications},
  year = {2018},
  volume = {9},
  number = {1},
  pages = {4950},
  doi = {10.1038/s41467-018-07210-0},
}

@incollection{chertock2017,
  author = {Alina Chertock},
  title = {{A practical guide to deterministic particle methods}},
  booktitle = {Handbook of Numerical Analysis},
  volume = {18},
  pages = {177--202},
  publisher = {Elsevier},
  year = {2017},
  doi = {10.1016/bs.hna.2016.11.004},
}

@article{maoutsa2020,
  author = {Dimitra Maoutsa and Sebastian Reich and Manfred Opper},
  title = {{Interacting particle solutions of Fokker--Planck equations through gradient-log-density estimation}},
  journal = {Entropy},
  year = {2020},
  volume = {22},
  number = {8},
  pages = {802},
  doi = {10.3390/e22080802},
}

@article{mardt2018,
  author = {Andreas Mardt and Luca Pasquali and Hao Wu and Frank No{\'e}},
  title = {{VAMPnets for deep learning of molecular kinetics}},
  journal = {Nature Communications},
  year = {2018},
  volume = {9},
  number = {1},
  pages = {5},
  doi = {10.1038/s41467-017-02388-1},
}

@inproceedings{meanti2023,
  author = {Giacomo Meanti and Antoine Chatalic and Vladimir Kostic and Pietro Novelli and Massimiliano Pontil and Lorenzo Rosasco},
  title = {{Estimating Koopman operators with sketching to provably learn large scale dynamical systems}},
  booktitle = {Advances in Neural Information Processing Systems},
  year = {2023},
  volume = {36},
  pages = {77242--77276},
  publisher = {Curran Associates, Inc.},
}

@incollection{mikulincer2023,
  author = {Dan Mikulincer and Yair Shenfeld},
  title = {{On the Lipschitz properties of transportation along heat flows}},
  booktitle = {Geometric Aspects of Functional Analysis},
  year = {2023},
  volume = {2327},
  pages = {269--290},
  series = {Lecture Notes in Mathematics},
  publisher = {Springer},
  doi = {10.1007/978-3-031-26300-2_9},
  eprint = {2201.01382},
  archivePrefix = {arXiv},
}

@article{neeman2022,
  author = {Joe Neeman},
  title = {{Lipschitz changes of variables via heat flow}},
  journal = {arXiv preprint arXiv:2201.03403},
  year = {2022},
  eprint = {2201.03403},
  archivePrefix = {arXiv},
}

@article{noe2013,
  author = {Frank No{\'e} and Feliks N{\"u}ske},
  title = {{A variational approach to modeling slow processes in stochastic dynamical systems}},
  journal = {Multiscale Modeling \& Simulation},
  year = {2013},
  volume = {11},
  number = {2},
  pages = {635--655},
  doi = {10.1137/110858616},
}

@article{nuske2017,
  author = {Feliks N{\"u}ske and Hao Wu and Jan-Hendrik Prinz and Christoph Wehmeyer and Cecilia Clementi and Frank No{\'e}},
  title = {{Markov state models from short non-equilibrium simulations---analysis and correction of estimation bias}},
  journal = {The Journal of Chemical Physics},
  year = {2017},
  volume = {146},
  number = {9},
  pages = {094104},
  doi = {10.1063/1.4976518},
}

@article{nuske2023,
  author = {Feliks N{\"u}ske and Sebastian Peitz and Friedrich Philipp and Manuel Schaller and Karl Worthmann},
  title = {{Finite-data error bounds for Koopman-based prediction and control}},
  journal = {Journal of Nonlinear Science},
  year = {2023},
  volume = {33},
  number = {1},
  pages = {14},
  doi = {10.1007/s00332-022-09862-1},
}

@article{otto2001,
  author = {Felix Otto},
  title = {{The geometry of dissipative evolution equations: the porous medium equation}},
  journal = {Communications in Partial Differential Equations},
  year = {2001},
  volume = {26},
  number = {1--2},
  pages = {101--174},
  doi = {10.1081/PDE-100002243},
}

@book{pavliotis2014,
  author = {Grigorios A. Pavliotis},
  title = {{Stochastic Processes and Applications: Diffusion Processes, the Fokker--Planck and Langevin Equations}},
  year = {2014},
  volume = {60},
  series = {Texts in Applied Mathematics},
  publisher = {Springer},
  doi = {10.1007/978-1-4939-1323-7},
}

@article{peyre2018,
  author = {R{\'e}mi Peyre},
  title = {{Comparison between $W_2$ distance and $\dot H^{-1}$ norm, and localization of Wasserstein distance}},
  journal = {ESAIM: Control, Optimisation and Calculus of Variations},
  year = {2018},
  volume = {24},
  number = {4},
  pages = {1489--1501},
  doi = {10.1051/cocv/2017050},
}

@inproceedings{rozen2021,
  author = {Noam Rozen and Aditya Grover and Maximilian Nickel and Yaron Lipman},
  title = {{Moser flow: divergence-based generative modeling on manifolds}},
  booktitle = {Advances in Neural Information Processing Systems},
  year = {2021},
  volume = {34},
  pages = {17669--17680},
  publisher = {Curran Associates, Inc.},
}

@article{russo1990,
  author = {Giovanni Russo},
  title = {{Deterministic diffusion of particles}},
  journal = {Communications on Pure and Applied Mathematics},
  year = {1990},
  volume = {43},
  number = {6},
  pages = {697--733},
  doi = {10.1002/cpa.3160430602},
}

@article{scarvelis2023,
  author = {Christopher Scarvelis and {S{\'a}ez de Oc{\'a}riz Borde}, Haitz and Justin Solomon},
  title = {{Closed-form diffusion models}},
  journal = {Transactions on Machine Learning Research},
  year = {2025},
  eprint = {2310.12395},
  archivePrefix = {arXiv},
}

@inproceedings{shen2022,
  author = {Zebang Shen and Zhenfu Wang and Satyen Kale and Alejandro Ribeiro and Amin Karbasi and Hamed Hassani},
  title = {{Self-consistency of the Fokker Planck equation}},
  booktitle = {Proceedings of Thirty Fifth Conference on Learning Theory},
  year = {2022},
  volume = {178},
  pages = {817--841},
  series = {Proceedings of Machine Learning Research},
  publisher = {PMLR},
}

@article{shen2024,
  author = {Zheyang Shen and Huihui Wang and Marina Riabiz and Chris J. Oates},
  title = {{Operator-informed score matching for Markov diffusion models}},
  journal = {arXiv preprint arXiv:2406.09084v2},
  year = {2025},
  eprint = {2406.09084v2},
  archivePrefix = {arXiv},
}

@inproceedings{song2021,
  author = {Yang Song and Jascha Sohl-Dickstein and Diederik P. Kingma and Abhishek Kumar and Stefano Ermon and Ben Poole},
  title = {{Score-based generative modeling through stochastic differential equations}},
  booktitle = {International Conference on Learning Representations},
  year = {2021},
}

@book{villani2003,
  author = {C{\'e}dric Villani},
  title = {{Topics in Optimal Transportation}},
  year = {2003},
  volume = {58},
  series = {Graduate Studies in Mathematics},
  publisher = {American Mathematical Society},
  doi = {10.1090/gsm/058},
}

@article{wang2025,
  author = {Ziyue Wang and Yuko Araki},
  title = {{Functional time series forecasting of distributions: a Koopman--Wasserstein approach}},
  journal = {Behaviormetrika},
  year = {2026},
  volume = {53},
  number = {2},
  pages = {571--597},
  doi = {10.1007/s41237-025-00278-1},
  eprint = {2507.07570},
  archivePrefix = {arXiv},
}

@article{williams2015,
  author = {Matthew O. Williams and Ioannis G. Kevrekidis and Clarence W. Rowley},
  title = {{A data-driven approximation of the Koopman operator: extending dynamic mode decomposition}},
  journal = {Journal of Nonlinear Science},
  year = {2015},
  volume = {25},
  number = {6},
  pages = {1307--1346},
  doi = {10.1007/s00332-015-9258-5},
}

@article{wu2020,
  author = {Hao Wu and Frank No{\'e}},
  title = {{Variational approach for learning Markov processes from time series data}},
  journal = {Journal of Nonlinear Science},
  year = {2020},
  volume = {30},
  number = {1},
  pages = {23--66},
  doi = {10.1007/s00332-019-09567-y},
}

@article{wu2017,
  author = {Hao Wu and Feliks N{\"u}ske and Fabian Paul and Stefan Klus and P{\'e}ter Koltai and Frank No{\'e}},
  title = {{Variational Koopman models: slow collective variables and molecular kinetics from short off-equilibrium simulations}},
  journal = {The Journal of Chemical Physics},
  year = {2017},
  volume = {146},
  number = {15},
  pages = {154104},
  doi = {10.1063/1.4979344},
}

@article{xu2025,
  author = {Yuanchao Xu and Fengyi Li and Masahiro Fujisawa and Xiaoyuan Cheng and Youssef Marzouk and Isao Ishikawa},
  title = {{Generative modeling through Koopman spectral analysis: an operator-theoretic perspective}},
  journal = {arXiv preprint arXiv:2512.18837},
  year = {2025},
  eprint = {2512.18837},
  archivePrefix = {arXiv},
}

@article{xu2025sdmd,
  author = {Yuanchao Xu and Kaidi Shao and Isao Ishikawa and Yuka Hashimoto and Nikos Logothetis and Zhongwei Shen},
  title = {{A data-driven framework for Koopman semigroup estimation in stochastic dynamical systems}},
  journal = {Chaos: An Interdisciplinary Journal of Nonlinear Science},
  year = {2025},
  volume = {35},
  number = {10},
  pages = {103123},
  doi = {10.1063/5.0283640},
}

@article{zhang2022koopman,
  author = {Benjamin J. Zhang and Tuhin Sahai and Youssef M. Marzouk},
  title = {{A Koopman framework for rare event simulation in stochastic differential equations}},
  journal = {Journal of Computational Physics},
  year = {2022},
  volume = {456},
  pages = {111025},
  doi = {10.1016/j.jcp.2022.111025},
}

@article{zhang2022eigen,
  author = {Wei Zhang and Tiejun Li and Christof Sch{\"u}tte},
  title = {{Solving eigenvalue PDEs of metastable diffusion processes using artificial neural networks}},
  journal = {Journal of Computational Physics},
  year = {2022},
  volume = {465},
  pages = {111377},
  doi = {10.1016/j.jcp.2022.111377},
}

@article{zhao2023,
  author = {Meng Zhao and Lijian Jiang},
  title = {{Data-driven probability density forecast for stochastic dynamical systems}},
  journal = {Journal of Computational Physics},
  year = {2023},
  volume = {492},
  pages = {112422},
  doi = {10.1016/j.jcp.2023.112422},
  eprint = {2210.03418},
  archivePrefix = {arXiv},
}

@article{zhou2025,
  author = {Mo Zhou and Stanley Osher and Wuchen Li},
  title = {{Simulating Fokker--Planck equations via mean field control of score-based normalizing flows}},
  journal = {Journal of Computational Physics},
  year = {2026},
  volume = {566},
  pages = {115274},
  doi = {10.1016/j.jcp.2026.115274},
  eprint = {2506.05723},
  archivePrefix = {arXiv},
}

@inproceedings{turan2025,
  author = {Erkan Turan and Ari Siozopoulos and Louis Martinez and Julien Gaubil and Emery Pierson and Maks Ovsjanikov},
  title = {{Unfolding generative flows with Koopman operators: trajectory-preserving linearization}},
  booktitle = {Proceedings of the 43rd International Conference on Machine Learning},
  series = {Proceedings of Machine Learning Research},
  volume = {306},
  year = {2026},
  eprint = {2506.22304},
  archivePrefix = {arXiv},
}

@article{teolis2026,
  author = {Trevor Teolis and Maarten V. de Hoop},
  title = {{Target-adapted Green--Bessel SVGD: uniform-in-time propagation of chaos and last-iterate consistency}},
  journal = {arXiv preprint arXiv:2609.08122},
  year = {2026},
  eprint = {2609.08122},
  archivePrefix = {arXiv},
}

@inproceedings{chen2010herding,
  author = {Yutian Chen and Max Welling and Alexander J. Smola},
  title = {{Super-samples from kernel herding}},
  booktitle = {Proceedings of the Twenty-Sixth Conference on Uncertainty in Artificial Intelligence},
  year = {2010},
  pages = {109--116},
}

@article{mak2018,
  author = {Simon Mak and V. Roshan Joseph},
  title = {{Support points}},
  journal = {The Annals of Statistics},
  year = {2018},
  volume = {46},
  number = {6A},
  pages = {2562--2592},
  doi = {10.1214/17-AOS1629},
}

@article{dwivedi2024,
  author = {Raaz Dwivedi and Lester Mackey},
  title = {{Kernel thinning}},
  journal = {Journal of Machine Learning Research},
  year = {2024},
  volume = {25},
  number = {152},
  pages = {1--77},
}

@inproceedings{liu2018,
  author = {Qiang Liu and Dilin Wang},
  title = {{Stein variational gradient descent as moment matching}},
  booktitle = {Advances in Neural Information Processing Systems},
  year = {2018},
  volume = {31},
  publisher = {Curran Associates, Inc.},
}

@book{risken1989,
  author = {Hannes Risken},
  title = {{The Fokker--Planck Equation: Methods of Solution and Applications}},
  edition = {second},
  series = {Springer Series in Synergetics},
  volume = {18},
  publisher = {Springer},
  address = {Berlin, Heidelberg},
  year = {1989},
  doi = {10.1007/978-3-642-61544-3},
}

@article{choi2018,
  author = {Gary P. T. Choi and Chris H. Rycroft},
  title = {{Density-equalizing maps for simply connected open surfaces}},
  journal = {SIAM Journal on Imaging Sciences},
  year = {2018},
  volume = {11},
  number = {2},
  pages = {1134--1178},
  doi = {10.1137/17M1124796},
}

@article{becker2001,
  author = {Roland Becker and Rolf Rannacher},
  title = {{An optimal control approach to a posteriori error estimation in finite element methods}},
  journal = {Acta Numerica},
  year = {2001},
  volume = {10},
  pages = {1--102},
  doi = {10.1017/S0962492901000010},
}

@book{glasserman2004,
  author = {Paul Glasserman},
  title = {{Monte Carlo Methods in Financial Engineering}},
  series = {Applications of Mathematics},
  volume = {53},
  publisher = {Springer},
  address = {New York},
  year = {2004},
  doi = {10.1007/978-0-387-21617-1},
}

@inproceedings{arbel2019,
  author = {Michael Arbel and Anna Korba and Adil Salim and Arthur Gretton},
  title = {{Maximum mean discrepancy gradient flow}},
  booktitle = {Advances in Neural Information Processing Systems},
  volume = {32},
  year = {2019},
  publisher = {Curran Associates, Inc.},
  eprint = {1906.04370},
  archivePrefix = {arXiv},
}

@book{vandervaart1998,
  author = {A. W. van der Vaart},
  title = {{Asymptotic Statistics}},
  series = {Cambridge Series in Statistical and Probabilistic Mathematics},
  volume = {3},
  publisher = {Cambridge University Press},
  address = {Cambridge},
  year = {1998},
  doi = {10.1017/CBO9780511802256},
}

@article{bovier2005,
  author = {Anton Bovier and V{\'e}ronique Gayrard and Markus Klein},
  title = {{Metastability in reversible diffusion processes II: precise asymptotics for small eigenvalues}},
  journal = {Journal of the European Mathematical Society},
  year = {2005},
  volume = {7},
  number = {1},
  pages = {69--99},
  doi = {10.4171/JEMS/22},
}

@article{schutte2023,
  author = {Christof Sch{\"u}tte and Stefan Klus and Carsten Hartmann},
  title = {{Overcoming the timescale barrier in molecular dynamics: transfer operators, variational principles and machine learning}},
  journal = {Acta Numerica},
  year = {2023},
  volume = {32},
  pages = {517--673},
  doi = {10.1017/S0962492923000016},
}
\bibliographystyle{iclr2027_conference}

\newpage
\appendix
\renewcommand{\topfraction}{0.9}\renewcommand{\bottomfraction}{0.8}\renewcommand{\textfraction}{0.07}\renewcommand{\floatpagefraction}{0.75}\setcounter{topnumber}{3}\setcounter{totalnumber}{5}
\section{Assumptions and preliminary results}\label{app:assump}

Throughout, $\pi\in\mathcal P_2(\mathbb R^d)$ has a positive $C^2$ density proportional to $e^{-V}$, $L=-\Delta+\nabla V\cdot\nabla$, so that $\langle f,Lg\rangle_\pi=\int\nabla f\cdot\nabla g\,d\pi$, and $P_t=e^{-tL}$. For $g$ with $\langle g,1\rangle_\pi=0$ we write $\|g\|_{\dot H^{-1}}=\sup\{\langle g,f\rangle_\pi\mid\|f\|_{\dot H^1}\le1\}$, and $\|\mu-\pi\|_{\dot H^{-1}}=\|d\mu/d\pi-1\|_{\dot H^{-1}}$ for $\mu\ll\pi$. Section~\ref{sec:theory} uses the following assumptions.
\begin{itemize}
\item[(A0)] \emph{Regularity.} The semigroup $P_t$ has a positive symmetric kernel $p_t(x,y)$ of class $C^2$ in $x$ for $t>0$. The ratio $\rho_t=P_t\rho_0$ is $C^2$ and positive for $t>0$, $v_t=-\nabla\log\rho_t$ is locally Lipschitz, and $\mu_t=P_t^*\mu_0$ is the unique solution of $\partial_t\mu+\mathrm{div}(\mu v_t)=0$ with $\mu|_{t=0}=\mu_0$ among curves with $d\mu_t/d\pi\in L^\infty$ \citep[Prop.~8.1.8, Thm.~8.3.1]{ambrosio2008}.
\item[(A1)] \emph{Spectrum.} The operator $L$ has discrete spectrum $0=\lambda_0<\lambda_1\le\lambda_2\le\cdots$ with orthonormal eigenfunctions $\varphi_k$, $\varphi_0\equiv1$, and $\sum_k(1+\lambda_k)^{a}e^{-\lambda_kt}<\infty$ for all $t>0$ and $a\ge0$.
\item[(A2)] \emph{Source.} The density ratio $\rho_0=d\mu_0/d\pi$ lies in $L^2(\pi)\cap L^\infty(\pi)$, and $c_k=\langle\rho_0,\varphi_k\rangle_\pi$.
\item[(A3)] \emph{Domain.} The set $B\subset\mathbb R^d$ is compact and convex, $R_B=\sup_B|x|$, $m_B=\|\rho_0\|_\infty\int_{B^c}|x|^2d\pi$, and $\sup_B\|\nabla^j\varphi_k\|\le K_B(1+\lambda_k)^{\alpha+j/2}$ for $j=0,1,2$.
\item[(A4)] \emph{Exact flow.} On a neighbourhood of the sets $G_t$ of Definition~\ref{def:good}, $-\nabla^2\log\rho_t\preceq L^+_tI$ with $L^+\in L^1(0,T)$, $|\nabla\log\rho_t|\le V_G$ and $\|\nabla^2\log\rho_t\|\le H_G$, and we set $\Lambda_{s,t}\coloneqq\int_s^t(L^+_u)_+du$. The one-sided bound holds on the whole space for an Ornstein-Uhlenbeck target with a Gaussian source, but not for a source of compact support, since outside the support $-\nabla^2\log\rho_t\approx I/(2t)$ as $t\to0$. The locality of Theorem~\ref{thm:main} is therefore imposed by the exact flow.
\end{itemize}

\paragraph{The confined particle system.}
The approximate ratio $\hat\rho$ is a function on $[0,T]\times\mathbb R^d$ of class $C^2$ in $x$, deterministic or fixed by conditioning on data independent of the transported sample, namely the trajectory data of the eigenpairs in regime (P) and, in regime (I), also the coefficient sample. We set $e_t\coloneqq\hat\rho_t-\rho_t$ and $\eta_t\coloneqq e_t/\rho_t$. The field $\hat v_t$ equals $\mathrm{Clip}_{v_{\max}}(-\nabla\log\hat\rho_t)$ where $\hat\rho_t\ge\varepsilon_0$, and elsewhere it is bounded and locally Lipschitz, as the field of Remark~\ref{rem:safeguard}. The particle $\hat X_t(x)$ solves $\dot{\hat X}=\hat v_t(\hat X)$ with reflection at $\partial B$ \citep{lions1984}, starting from $x$ if $x\in B$ and otherwise from a point of $B$ chosen independently of the sample, and $\hat\mu_T=M^{-1}\sum_j\delta_{\hat X_T(x_j)}$. A computation without confinement coincides with this system for any domain that its particles do not reach, and the componentwise projection of Appendix~\ref{app:comp} is a time discretisation of the reflection. Since $\hat\lambda_k>0$ for $k\ge1$, $\hat\rho_t\to1$ uniformly on $B$, and the threshold $\varepsilon_0$ is inactive after a finite time. We refer to the conditions of Definition~\ref{def:good} as (G1) $\rho_t\ge\theta_t$, (G2) $|e_t|\le\frac12\rho_t$ and (G3) $|\nabla\log\hat\rho_t|\le v_{\max}$. By (G2), $\hat\rho_t\ge\frac12\theta_t\ge\varepsilon_0$ on $G_t$, so that $\hat v_t=-\nabla\log\hat\rho_t$ there.

\paragraph{Preliminary results.}
\begin{proposition}\label{prop:rep}
Let $\rho_0=d\mu_0/d\pi\in L^2(\pi)$ and let $\mu_t$ be the law of $X_t$ when $X_0\sim\mu_0$. Then $d\mu_t/d\pi=P_t\rho_0=1+\sum_{k\ge1}e^{-\lambda_kt}c_k\varphi_k$ with $c_k=\mathbb E_{\mu_0}[\varphi_k]$, $\chi^2(\mu_t\|\pi)=\sum_{k\ge1}e^{-2\lambda_kt}c_k^2$ and $\rho_t(x)=\mathbb E_{y\sim\mu_0}[p_t(x,y)]$ with $p_t(x,y)=\sum_ke^{-\lambda_kt}\varphi_k(x)\varphi_k(y)$. Under (A0), $\mu_t=\Phi_t\#\mu_0$, and $\rho_t$ solves the backward Kolmogorov equation $\partial_t\rho_t=-L\rho_t$.
\end{proposition}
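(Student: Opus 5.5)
We prove the four assertions of Proposition~\ref{prop:rep} in order: the spectral representation of the density ratio, the $\chi^2$ identity, the kernel formula, and finally the transport statement with the backward Kolmogorov equation.

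\textbf{Representation of the ratio.} Fix a bounded test function $f$ and use the Markov property together with the symmetry of $P_t$ in $L^2(\pi)$:
\[
\mathbb E[f(X_t)]=\int P_tf\,d\mu_0=\langle P_tf,\rho_0\rangle_\pi=\langle f,P_t\rho_0\rangle_\pi .
\]
Since bounded functions determine a measure, this gives $d\mu_t/d\pi=P_t\rho_0$. Expand $\rho_0$ in the orthonormal basis $(\varphi_k)$ of (A1). The coefficients are $\langle\rho_0,\varphi_k\rangle_\pi=\mathbb E_{\mu_0}[\varphi_k]=c_k$, and $c_0=1$ because $\mu_0$ is a probability measure. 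The spectral calculus then gives $P_t\rho_0=\sum_k e^{-\lambda_kt}c_k\varphi_k$ in $L^2(\pi)$. Parseval's identity, applied to $\rho_t-1$, yields
\[
\chi^2(\mu_t\|\pi)=\|\rho_t-1\|^2=\sum_{k\ge1}e^{-2\lambda_kt}c_k^2 .
\]

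\textbf{Kernel formula.} Substitute $c_k=\int\varphi_k\,d\mu_0$ into the series and exchange the sum and the integral:
\[
\rho_t(x)=\int\sum_k e^{-\lambda_kt}\varphi_k(x)\varphi_k(y)\,d\mu_0(y)=\mathbb E_{y\sim\mu_0}[p_t(x,y)].
\]
The exchange needs pointwise convergence of the kernel series. Write
\[
\sum_k e^{-\lambda_kt}|\varphi_k(x)\varphi_k(y)|\le\Big(\sum_k e^{-\lambda_kt}\varphi_k(x)^2\Big)^{1/2}\Big(\sum_k e^{-\lambda_kt}\varphi_k(y)^2\Big)^{1/2}.
\]
The diagonal sum $\sum_k e^{-\lambda_kt}\varphi_k(x)^2$ is the diagonal value $p_t(x,x)$ of the kernel. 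It is finite because (A0) gives a continuous kernel and (A1) gives summability of $e^{-\lambda_kt}$. On compact sets one also has the local bounds (A3). This yields locally uniform absolute convergence.

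The $y$-integral against $\mu_0=\rho_0\pi$ is then handled by Cauchy–Schwarz and $\rho_0\in L^2(\pi)$:
\[
\int\Big(\sum_k e^{-\lambda_kt}\varphi_k(y)^2\Big)^{1/2}\rho_0(y)\,d\pi(y)\le\Big(\sum_k e^{-\lambda_kt}\Big)^{1/2}\|\rho_0\| .
\]
The right-hand side is finite by (A1). Fubini's theorem therefore applies and justifies the exchange.

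\textbf{Transport and the backward Kolmogorov equation.} Differentiating the series termwise gives $\partial_t\rho_t=-\sum_k\lambda_ke^{-\lambda_kt}c_k\varphi_k=-L\rho_t$. The termwise differentiation is justified in $L^2(\pi)$, and pointwise via (A0)/(A1), because the extra factor $\lambda_k$ is absorbed by $(1+\lambda_k)^a e^{-\lambda_kt}$.

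Next, $\mu_t$ solves the Fokker–Planck equation~\eqref{eq:fp} in weak form. To see this, test against $f\in C_c^\infty$ and compute
\[
\tfrac{d}{dt}\int f\rho_t\,d\pi=-\langle f,L\rho_t\rangle_\pi=-\int\nabla f\cdot\nabla\rho_t\,d\pi=\int\nabla f\cdot v_t\,d\mu_t .
\]
The last equality uses $\nabla\rho_t=\rho_t\nabla\log\rho_t$, which is legitimate because $\rho_t>0$ for $t>0$ by (A0). Hence $(\mu_t)$ solves the continuity equation $\partial_t\mu+\mathrm{div}(\mu v_t)=0$.

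The curve $\Phi_t\#\mu_0$ also solves this continuity equation, by the superposition principle for the locally Lipschitz field $v_t$. The uniqueness clause of (A0) then identifies the two curves, which gives $\mu_t=\Phi_t\#\mu_0$.

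\textbf{Main obstacle.} The delicate step is this identification. We must check that $\Phi_t\#\mu_0$ lies in the uniqueness class $d\mu/d\pi\in L^\infty$, and that the flow is globally defined even though $v_t$ may blow up as $t\to0$. For the first point, $\rho_0\in L^\infty$ and the Markov property give $\|\rho_t\|_\infty\le\|\rho_0\|_\infty$, so $\mu_t$ itself lies in the class. Existence of a flow transporting $\mu_0$ to $\mu_t$ is then taken from \citet[Prop.~8.1.8]{ambrosio2008}, as quoted in (A0). This is consistent with the paper's phrasing that this step holds ``under (A0)''.
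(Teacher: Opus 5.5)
Your argument is correct. For the duality identity $\int g\,d\mu_t=\langle P_tg,\rho_0\rangle_\pi=\langle g,P_t\rho_0\rangle_\pi$, the $\chi^2$ formula and the kernel formula, it is the paper's argument; the paper cites only Parseval's identity and (A0) for the kernel formula.

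The genuine difference is the backward equation. The paper starts from the forward equation $\partial_tq_t=\Delta q_t+\mathrm{div}(q_t\nabla V)$ for the Lebesgue density $q_t=\pi\rho_t$. It then uses $\nabla q_t+q_t\nabla V=\pi\nabla\rho_t$ to get $\pi\,\partial_t\rho_t=\mathrm{div}(\pi\nabla\rho_t)=-\pi L\rho_t$. You instead differentiate the eigenfunction series and then argue in the opposite direction: from $\partial_t\rho_t=-L\rho_t$ and the Dirichlet form you derive the continuity equation for $\mu_t$ with $v_t=-\nabla\log\rho_t$. Your route stays inside the spectral calculus and makes explicit the continuity equation that the paper takes from (A0). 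The paper's route relies on the known forward equation of the diffusion and shows directly why the ratio obeys the backward one.

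Two justifications should be tightened. First, finiteness of $\sum_ke^{-\lambda_kt}\varphi_k(x)^2$ does not follow from summability of $e^{-\lambda_kt}$. Take $p$ to be the kernel of (A0); the coefficients of $p_{t/2}(x,\cdot)$ are $(P_{t/2}\varphi_k)(x)=e^{-\lambda_kt/2}\varphi_k(x)$. Chapman--Kolmogorov and Parseval then give $\sum_ke^{-\lambda_kt}\varphi_k(x)^2=\|p_{t/2}(x,\cdot)\|^2=p_t(x,x)$. The resulting bound $\varphi_k(x)^2\le e^{\lambda_ks}p_s(x,x)$ is also what makes your pointwise termwise differentiation work with (A1).

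Second, the uniqueness detour is unnecessary. As written, it verifies the $L^\infty$ bound for $\mu_t$ rather than for $\Phi_t\#\mu_0$, which is the curve that would have to be placed in the uniqueness class. Since you have shown that $\mu_t$ solves the continuity equation, apply \citet[Prop.~8.1.8]{ambrosio2008} to $\mu_t$ itself. It yields a flow defined $\mu_0$-almost everywhere on $[0,T]$ with $\mu_t=\Phi_t\#\mu_0$. Its integrability hypothesis needs only $\rho_0\in L^2(\pi)$, since
\[
\int|v_t|\,d\mu_t=\|\nabla\rho_t\|_{L^1(\pi)}\le\|\rho_t-1\|_{\dot H^1}\le(2et)^{-1/2}\|\rho_0-1\|,
\]
which is integrable on $(0,T]$. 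The local Lipschitz requirement near $t=0$ is the part taken from (A0), as in the paper.
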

The first identity follows from $\int g\,d\mu_t=\int P_tg\,\rho_0\,d\pi=\int g\,P_t\rho_0\,d\pi$, and the others from Parseval's identity and (A0). The method takes its name from the last equation. The Lebesgue density $q_t=\pi\rho_t$ of $\mu_t$ solves the forward equation $\partial_tq_t=\Delta q_t+\mathrm{div}(q_t\nabla V)$, and $\nabla q_t+q_t\nabla V=\pi\nabla\rho_t$ gives $\pi\,\partial_t\rho_t=\mathrm{div}(\pi\nabla\rho_t)=-\pi L\rho_t$. The density ratio therefore evolves under the semigroup of the conditional expectations $(P_tg)(x)=\mathbb E[g(X_t)\mid X_0=x]$, which the estimators of Section~\ref{sec:setting} approximate from trajectories. The estimators of Section~\ref{sec:exp} are built from the Gram matrix of the dictionary and from the matrix of its Dirichlet form, which are averages over the sampled states. The dynamics enter through the Dirichlet form, that is, through the invariant law and the diffusion matrix, which together determine the generator of a reversible diffusion. For a nonreversible diffusion with the same invariant law and unit diffusion matrix, these estimators approximate the eigenpairs of the symmetric part of its generator, which is $-L$, and BKT still has $\pi$ as its target.

\begin{lemma}\label{lem:A2}
For $t\ge0$, $\|\mu_t-\pi\|_{\dot H^{-1}}^2=\sum_{k\ge1}e^{-2\lambda_kt}c_k^2/\lambda_k$ and $W_2(\mu_t,\pi)\le2\|\mu_t-\pi\|_{\dot H^{-1}}$.
\end{lemma}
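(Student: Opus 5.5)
The plan is to work entirely in the spectral basis and then invoke the known comparison of $W_2$ with the $\dot H^{-1}$ norm. By Proposition~\ref{prop:rep}, $\rho_t-1=P_t\rho_0-1=\sum_{k\ge1}e^{-\lambda_kt}c_k\varphi_k$. Since $\rho_0\in L^2(\pi)$, this series converges in $L^2(\pi)$ and has mean zero.

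To compute the $\dot H^{-1}$ norm, I will use the definition of Appendix~\ref{app:assump}, $\|g\|_{\dot H^{-1}}=\sup\{\langle g,f\rangle_\pi\mid\|f\|_{\dot H^1}\le1\}$, and check that it agrees with the spectral expression $\sum_{k\ge1}\langle g,\varphi_k\rangle_\pi^2/\lambda_k$ from Section~\ref{sec:setting}. Write $f=\sum_k f_k\varphi_k$; the constant mode plays no role because $g$ has mean zero. The Dirichlet form then gives $\|f\|_{\dot H^1}^2=\langle f,Lf\rangle_\pi=\sum_{k\ge1}\lambda_kf_k^2$ for $f$ in the form domain. Applying Cauchy--Schwarz with weights,
\[
\langle g,f\rangle_\pi=\sum_{k\ge1}g_kf_k\le\Big(\sum_{k\ge1} g_k^2/\lambda_k\Big)^{1/2}\Big(\sum_{k\ge1}\lambda_kf_k^2\Big)^{1/2}.
\]
Equality is attained by $f_k\propto g_k/\lambda_k$. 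This choice is admissible because $\lambda_k\ge\lambda_1>0$, so $\sum_k g_k^2/\lambda_k\le\|g\|^2/\lambda_1<\infty$ and $f$ lies in the form domain. Setting $g_k=e^{-\lambda_kt}c_k$ then yields the identity $\|\mu_t-\pi\|_{\dot H^{-1}}^2=\sum_{k\ge1}e^{-2\lambda_kt}c_k^2/\lambda_k$, valid for all $t\ge0$.

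For the inequality I will cite Peyre's comparison \citep{peyre2018}: $W_2(\mu,\pi)\le2\|\mu-\pi\|_{\dot H^{-1}}$, where the right-hand side is the weighted norm with respect to $\pi$. If a self-contained argument is preferred, I would instead run the Benamou--Brenier interpolation $\mu_s=(1-s)\pi+s\mu$ with velocity $\nabla u/\big(1-s+s\rho\big)$, where $u$ solves $Lu=\rho-1$. The kinetic energy is then bounded by $\int_0^1\int|\nabla u|^2/(1-s+s\rho)\,d\pi\,ds$. Bounding $1/(1-s+s\rho)\le 1/(1-s)$ leaves a divergence at $s=1$, which Peyre handles by splitting the path symmetrically around $s=\tfrac12$ and paying the factor $2$. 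This splitting is the only nontrivial step, and I will take it from the reference rather than reprove it.

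The main technical point I expect to need care with is justifying the Parseval manipulation when $\rho_t-1$ is only in $L^2$. This is harmless: finiteness of the sum is guaranteed by $\lambda_k\ge\lambda_1$, and the supremum can be approximated by finite truncations of $f$.
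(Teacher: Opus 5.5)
Your proof is the paper's: expand $\rho_t-1$ in the eigenbasis, apply Cauchy--Schwarz with the weights $\lambda_k$ with equality at $f_k\propto e^{-\lambda_kt}c_k/\lambda_k$, and cite Peyre's comparison with $\pi$ as reference measure. One correction to your optional aside, which you do not rely on: Peyre does not split the path at $s=\tfrac12$; he uses the length form $W_2(\pi,\mu)\le\int_0^1\|v_s\|_{L^2(\mu_s)}\,ds$, in which your bound $\|v_s\|_{L^2(\mu_s)}\le(1-s)^{-1/2}\|\rho-1\|_{\dot H^{-1}}$ is integrable, $\int_0^1(1-s)^{-1/2}\,ds=2$ gives the constant, and the divergence you found comes only from working with the squared (energy) form.
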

\begin{proof}
For $f=\sum f_k\varphi_k$ one has $\langle\rho_t-1,f\rangle_\pi=\sum_{k\ge1}e^{-\lambda_kt}c_kf_k$, and the Cauchy-Schwarz inequality with the weights $\lambda_k$ gives the identity, with equality for $f_k\propto e^{-\lambda_kt}c_k/\lambda_k$. The inequality is that of \citet{peyre2018}, with $\pi$ as reference measure.
\end{proof}

\begin{lemma}\label{lem:A4}
Let $e^{\rm tr}_t=-\sum_{k>r}e^{-\lambda_kt}c_k\varphi_k$, $\tau_r=\|(I-\Pi_r)\rho_0\|$ and $m_r(t)=\sup_{\lambda\ge\lambda_{r+1}}\lambda e^{-2\lambda t}$. Then $\|e^{\rm tr}_t\|\le e^{-\lambda_{r+1}t}\tau_r$, $\|e^{\rm tr}_t\|_{\dot H^1}\le m_r(t)^{1/2}\tau_r$ and $\sup_B\|\nabla^je^{\rm tr}_t\|\le K_B\sum_{k>r}(1+\lambda_k)^{\alpha+j/2}e^{-\lambda_kt}|c_k|$ for $j\le2$.
\end{lemma}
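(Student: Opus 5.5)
The three claims follow from the spectral expansion and orthonormality of the eigenfunctions, so the plan is a direct computation in $L^2(\pi)$.

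\emph{$L^2$ bound.} By Proposition~\ref{prop:rep}, $\rho_t-1=\sum_{k\ge1}e^{-\lambda_kt}c_k\varphi_k$ converges in $L^2(\pi)$ because $\rho_0\in L^2(\pi)$. Hence $e^{\rm tr}_t=-\sum_{k>r}e^{-\lambda_kt}c_k\varphi_k$ is a well-defined element of $L^2(\pi)$. Parseval's identity gives
\[
\|e^{\rm tr}_t\|^2=\sum_{k>r}e^{-2\lambda_kt}c_k^2 .
\]
Since the eigenvalues are nondecreasing, $e^{-2\lambda_kt}\le e^{-2\lambda_{r+1}t}$ for $k>r$. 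The remaining sum is $\sum_{k>r}c_k^2=\|(I-\Pi_r)\rho_0\|^2=\tau_r^2$, which gives the first bound.

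\emph{Energy bound.} The Dirichlet form is diagonal in the eigenbasis: $\langle\varphi_k,L\varphi_l\rangle_\pi=\lambda_k\delta_{kl}$, so $\|f\|_{\Hdot^1}^2=\sum_k\lambda_kf_k^2$ for $f=\sum_kf_k\varphi_k$ in the form domain. Applying this to $e^{\rm tr}_t$ gives
\[
\|e^{\rm tr}_t\|_{\Hdot^1}^2=\sum_{k>r}\lambda_ke^{-2\lambda_kt}c_k^2\le m_r(t)\sum_{k>r}c_k^2=m_r(t)\tau_r^2 .
\]
For $t>0$, $m_r(t)<\infty$ because $\lambda e^{-2\lambda t}$ is bounded, so the series lies in the form domain and the identity is legitimate. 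The one technical point is justifying $\|\cdot\|_{\Hdot^1}^2=\sum\lambda_kf_k^2$ for the infinite series. I would take partial sums, which are Cauchy in $\Hdot^1$ by the bound just shown, and use closedness of the form.

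\emph{Pointwise bounds on $B$.} Differentiate the series term by term and apply (A3) to each term:
\[
\|\nabla^j(e^{-\lambda_kt}c_k\varphi_k)\|\le K_B(1+\lambda_k)^{\alpha+j/2}e^{-\lambda_kt}|c_k| .
\]
Summing over $k>r$ gives the stated bound by the triangle inequality. Term-by-term differentiation needs the differentiated series to converge uniformly on $B$, and this is the main obstacle. I would use $|c_k|\le\|\rho_0\|$ together with (A1) to get $\sum_k(1+\lambda_k)^{\alpha+1}e^{-\lambda_kt}<\infty$ for $t>0$. This gives uniform convergence on $B$ of the series and of its first and second derivatives. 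The series therefore defines a $C^2$ function on $B$ whose derivatives are the sums of the termwise derivatives, and this function agrees $\pi$-a.e. with the $L^2$ limit. The bound is thus meaningful and finite for $t>0$. At $t=0$ the right-hand side may be infinite, in which case the inequality holds trivially.
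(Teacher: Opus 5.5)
Your proof is correct and follows the paper's argument: Parseval's identity gives the $L^2$ and energy bounds, and the termwise estimate (A3) gives the pointwise bound on $B$, with (A1) ensuring summability. Your justification of term-by-term differentiation by uniform convergence, using $|c_k|\le\|\rho_0\|$, spells out the paper's one-line remark that the last sum is finite by (A1).
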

The first two bounds follow from Parseval's identity and the third from (A3), and the last sum is finite by (A1).

\begin{lemma}\label{lem:endpoint}
Let $P,Q\in\mathcal P_2(\mathbb R^d)$ and let $P_M$ be the empirical measure of $M$ i.i.d.\ samples of $P$. Then $\mathbb E W_2(P_M,Q)\le W_2(P,Q)+\min\{S_M(P),S_M(Q)\}$.
\end{lemma}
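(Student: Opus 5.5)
The plan is to combine the triangle inequality for $W_2$ with two couplings of the samples. Write $P_M=M^{-1}\sum_{j\le M}\delta_{Z_j}$ with $Z_j\sim P$ i.i.d. The first bound follows from
\[
W_2(P_M,Q)\le W_2(P,Q)+W_2(P_M,P),
\]
after taking expectations, which gives $W_2(P,Q)+S_M(P)$.

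For the bound with $S_M(Q)$ I will couple the sample with one from $Q$. Let $\gamma$ be an optimal coupling of $P$ and $Q$, and draw i.i.d.\ pairs $(Z_j,Y_j)\sim\gamma$, so that the $Z_j$ have the prescribed law and $Q_M=M^{-1}\sum_j\delta_{Y_j}$ is an empirical measure of $M$ i.i.d.\ samples of $Q$. The matching $Z_j\leftrightarrow Y_j$ is a coupling of $P_M$ and $Q_M$, so
\[
W_2(P_M,Q_M)^2\le M^{-1}\sum_j|Z_j-Y_j|^2 .
\]
Taking expectations and applying Jensen's inequality gives
\[
\mathbb E\,W_2(P_M,Q_M)\le\Big(\mathbb E\,M^{-1}\sum_j|Z_j-Y_j|^2\Big)^{1/2}=W_2(P,Q).
\]
Then
\[
\mathbb E\,W_2(P_M,Q)\le\mathbb E\,W_2(P_M,Q_M)+\mathbb E\,W_2(Q_M,Q)\le W_2(P,Q)+S_M(Q).
\]

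Taking the minimum of the two bounds proves the lemma. The one point needing care is that $\mathbb E\,W_2(P_M,Q)$ depends only on the law of $(Z_j)$, so enlarging the probability space with the $Y_j$ changes nothing. I would also note that optimal couplings exist for measures in $\mathcal P_2$, and that all expectations are finite by the second-moment assumption.
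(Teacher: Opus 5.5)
Your proof is correct and follows the paper's argument. The paper likewise draws i.i.d.\ pairs from an optimal coupling of $P$ and $Q$, pairs equal indices and applies Jensen's inequality to get $\mathbb E\,W_2(P_M,Q_M)\le W_2(P,Q)$, and then uses the triangle inequality through $Q_M$ or through $P$.
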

\begin{proof}
Let $(U_j,Z_j)$ be i.i.d.\ pairs whose law is an optimal coupling of $P$ and $Q$, and let $Q_M$ be the empirical measure of the $Z_j$. Pairing equal indices and Jensen's inequality give $\mathbb E W_2(P_M,Q_M)\le W_2(P,Q)$, and the triangle inequality through $Q_M$ or through $P$ gives the claim.
\end{proof}

\begin{proof}[Proof of Lemma~\ref{lem:residual}]
We have $\mathrm{div}(\hat q_t\nabla\log\hat\rho_t)=\mathrm{div}(\pi\nabla\hat\rho_t)=-\pi L\hat\rho_t$ and $\partial_t\hat q_t=-\pi\sum_k\hat\lambda_ke^{-\hat\lambda_kt}\hat c_k\hat\varphi_k$, and the identity follows by subtraction. For exact eigenpairs $L\hat\varphi_k=\lambda_k\hat\varphi_k$, so that $R\equiv0$.
\end{proof}

\begin{proof}[Proof of Proposition~\ref{prop:exact}]
For exact eigenpairs $\hat\rho_t=1+\sum_{k=1}^re^{-\lambda_kt}a_k\varphi_k=P_t\hat\rho_0$. Since $P_t$ is a Markov operator with a positive kernel, $P_t(\hat\rho_0-\inf\hat\rho_0)\ge0$, which gives $\inf\hat\rho_t\ge\inf\hat\rho_0$, and $\hat\rho_t>0$ on $[0,T]\times\mathbb R^d$. The measure $\pi\hat\rho_0$ is a probability measure, since $\int\varphi_k\,d\pi=0$ for $k\ge1$, and by Proposition~\ref{prop:rep} $\pi\hat\rho_t$ is the law at time $t$ of the diffusion started from it. The field $\hat v_t=-\nabla\log\hat\rho_t$ is locally Lipschitz on $[0,T]\times\mathbb R^d$, since $\hat\rho$ is continuous and positive there and $\varphi_k\in C^2$, and $\int_0^T\!\int|\hat v_t|\hat\rho_t\,d\pi\,dt=\int_0^T\|\nabla\hat\rho_t\|_{L^1(\pi)}dt\le T\sum_k|a_k|\lambda_k^{1/2}$, since $\|\nabla\varphi_k\|^2=\lambda_k$. By Lemma~\ref{lem:residual} with $R\equiv0$, $\pi\hat\rho_t$ solves the continuity equation with velocity $\hat v_t$ on the whole space, and \citet[Prop.~8.1.8]{ambrosio2008} gives $\pi\hat\rho_t=\hat\Phi_{0,t}\#(\pi\hat\rho_0)$, with a flow defined $\pi\hat\rho_0$-almost everywhere, hence $\mu_0$-almost everywhere since $\hat\rho_0>0$. For population coefficients $\hat\rho_0=\Pi_r\rho_0$, and
\[
\hat\Phi_{0,T}\#\mu_0-\pi=\hat\Phi_{0,T}\#(\mu_0-\pi\hat\rho_0)+\pi(\hat\rho_T-1).
\]
The image of a signed measure under a measurable map has at most its total variation, $\|\mu_0-\pi\hat\rho_0\|_{\rm TV}=\|(I-\Pi_r)\rho_0\|_{L^1(\pi)}$, and $\|\hat\rho_T-1\|_{L^1(\pi)}\le\|\hat\rho_T-1\|=(\sum_{k=1}^re^{-2\lambda_kT}c_k^2)^{1/2}\le e^{-\lambda_1T}\|\Pi_r\rho_0-1\|$.
\end{proof}
For estimated eigenpairs, $\partial_t\hat\rho_t=-\sum_k\hat\lambda_ke^{-\hat\lambda_kt}\hat c_k\hat\varphi_k=-L\hat\rho_t+R_t$. When the $\hat\varphi_k$ lie in the domain of $L$ and the residuals $(L-\hat\lambda_k)\hat\varphi_k$ are bounded, the variation of constants formula gives $\hat\rho_t=P_t\hat\rho_0+\int_0^tP_{t-s}R_s\,ds$, and since $P_{t-s}$ is a Markov operator, $\inf\hat\rho_t\ge\inf\hat\rho_0-\int_0^t\|R_s\|_\infty\,ds$.

\section{Proof of the error bound and of its corollaries}\label{app:proofs}

\subsection{Proof of Theorem~\ref{thm:main}}\label{app:proofmain}

Fix $x$ and write $X_t=\Phi_t(x)$, $\hat X_t=\hat X_t(x)$, $D_t=|\hat X_t-X_t|$, $\sigma_1=\inf\{t\le T\mid X_t\notin G_t^\delta\}$, $\sigma_2=\inf\{t\le T\mid D_t>\delta\}$ and $\sigma=\sigma_1\wedge\sigma_2$. For $t<\sigma$ one has $X_t\in G_t^\delta$ and $\hat X_t\in\bar B_\delta(X_t)\subset G_t$, so that the reflection is inactive and the segment $[X_t,\hat X_t]$ lies in $G_t$, where $\hat v_t=-\nabla\log\hat\rho_t$. In particular, the sets $G_t$ need not be convex.

\begin{lemma}\label{lem:A8}
On $G_t$ one has $\hat v_t=v_t-\nabla\eta_t/(1+\eta_t)$,
\[
  |\hat v_t-v_t|\le2|\nabla\eta_t|\le\frac{2\big(|\nabla e_t|+V_G|e_t|\big)}{\rho_t},\qquad
  -\nabla^2\log\hat\rho_t\preceq\big(L^+_t+\mathrm{pert}_t\big)I ,
\]
and $J_p\le2\mathcal J_p$ for $p\in\{1,2\}$.
\end{lemma}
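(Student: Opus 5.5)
The plan is to work on $G_t$, where by (G2) $\hat\rho_t\ge\frac12\theta_t\ge\varepsilon_0$. Together with (G3), this means the safeguards are inactive there and $\hat v_t=-\nabla\log\hat\rho_t$. We write $\hat\rho_t=\rho_t(1+\eta_t)$, so that $\log\hat\rho_t=\log\rho_t+\log(1+\eta_t)$. Taking gradients gives directly
\[
\hat v_t=v_t-\frac{\nabla\eta_t}{1+\eta_t},
\]
and (G2) says $|\eta_t|\le\frac12$, hence $1+\eta_t\ge\frac12$ and $|\hat v_t-v_t|\le2|\nabla\eta_t|$. Next, $\nabla\eta_t=\nabla e_t/\rho_t-e_t\nabla\log\rho_t/\rho_t$, and (A4) gives $|\nabla\log\rho_t|\le V_G$ on $G_t$. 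This yields $|\nabla\eta_t|\le(|\nabla e_t|+V_G|e_t|)/\rho_t$, which is the first display.

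For the Hessian we differentiate once more:
\[
-\nabla^2\log\hat\rho_t=-\nabla^2\log\rho_t-\frac{\nabla^2\eta_t}{1+\eta_t}+\frac{\nabla\eta_t\otimes\nabla\eta_t}{(1+\eta_t)^2}.
\]
The first term is $\preceq L^+_tI$ by the definition of $L^+_t$ as a supremum over $G_t$. For the second, $\|\nabla^2\eta_t\|/(1+\eta_t)\le2\|\nabla^2\eta_t\|$. For the third, the rank-one term has operator norm $|\nabla\eta_t|^2/(1+\eta_t)^2\le4|\nabla\eta_t|^2$. Summing, the bound is at most $\mathrm{pert}_t$ pointwise on $G_t$. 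This is the only place the factor conventions (2 and 4) in $\mathrm{pert}$ must match, and they do with $1+\eta\ge\frac12$.

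For $J_p\le2\mathcal J_p$ we insert the pointwise bound into $\|(\hat v_s-v_s)\mathbf 1_{G_s}\|_{L^p(\mu_s)}$ with $d\mu_s=\rho_s\,d\pi$.

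For $p=1$:
\[
\int_{G_s}|\hat v_s-v_s|\rho_s\,d\pi\le2\int(|\nabla e_s|+V_G|e_s|)\,d\pi\le2\big(\|e_s\|_{\Hdot^1}+V_G\|e_s\|\big),
\]
where the $\rho_s$ cancels and the last step is Cauchy–Schwarz against the probability measure $\pi$.

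For $p=2$:
\[
\int_{G_s}|\hat v_s-v_s|^2\rho_s\,d\pi\le4\int_{G_s}\frac{(|\nabla e_s|+V_G|e_s|)^2}{\rho_s}\,d\pi\le\frac4{\theta_s}\big\|\,|\nabla e_s|+V_G|e_s|\,\big\|^2,
\]
using (G1) $\rho_s\ge\theta_s$. Minkowski's inequality then gives the square root $\le2\theta_s^{-1/2}(\|e_s\|_{\Hdot^1}+V_G\|e_s\|)$. Multiplying by $e^{\hat\Lambda_{s,T}}$ and integrating in $s$ gives $J_p\le2\mathcal J_p$.

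No step is a real obstacle; the only care needed is that all pointwise bounds are used only on $G_t$, where (A4) and (G1)–(G3) hold, and that the exponent $\theta_s^{-(p-1)/2}$ matches $\theta_s^0$ for $p=1$ and $\theta_s^{-1/2}$ for $p=2$.
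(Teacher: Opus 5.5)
Your proposal is correct and follows the paper's proof essentially step by step: you write $\hat\rho_t=\rho_t(1+\eta_t)$ with $1+\eta_t\ge\frac12$ from (G2), compute $\nabla\eta_t$ and $-\nabla^2\log(1+\eta_t)$ explicitly, use (G1) with Minkowski's inequality for $p=2$, and let $\rho_s$ cancel the denominator for $p=1$. You are slightly more explicit than the paper in noting that (G3) and $\theta_t\ge2\varepsilon_0$ together make both safeguards inactive on $G_t$, which is what justifies $\hat v_t=-\nabla\log\hat\rho_t$ there.
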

\begin{proof}
By (G2), $\hat\rho=\rho(1+\eta)$ with $1+\eta\ge\frac12$, $\nabla\eta=(\nabla e-e\nabla\log\rho)/\rho$ and $-\nabla^2\log(1+\eta)=-\nabla^2\eta/(1+\eta)+\nabla\eta\otimes\nabla\eta/(1+\eta)^2$. By (G1), $\int_{G_s}|\hat v_s-v_s|^2d\mu_s\le4\theta_s^{-1}\int(|\nabla e_s|+V_G|e_s|)^2d\pi$, while in $L^1(\mu_s)$ the factor $\rho_s$ cancels the denominator, and Minkowski's inequality gives $J_p\le2\mathcal J_p$.
\end{proof}

\begin{lemma}\label{lem:A10}
For $t<\sigma$ one has $D_t\le I(x)\coloneqq\int_0^Te^{\hat\Lambda_{s,T}}|\hat v_s-v_s|(X_s)\mathbf 1_{G_s}(X_s)\,ds$. Moreover, $\mathbb E_{\mu_0}I=J_1$, $(\mathbb E_{\mu_0}I^2)^{1/2}\le J_2$, and the event $E_{\rm bad}=\{\sigma\le T\}$ satisfies $\mu_0(E_{\rm bad})\le q_*$.
\end{lemma}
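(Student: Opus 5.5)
The plan is a stopped Grönwall argument on $[0,\sigma)$, followed by Markov and Chebyshev bounds for the exit event. Fix $x$. For $t<\sigma$ the segment $[X_t,\hat X_t]$ lies in $G_t$, and the reflection is inactive there. On this segment Lemma~\ref{lem:A8} gives $\hat v_t=-\nabla\log\hat\rho_t$ together with the one-sided bound $-\nabla^2\log\hat\rho_t\preceq(L^+_t+\mathrm{pert}_t)I$. I would write
\[
\tfrac{d}{dt}\tfrac12 D_t^2=(\hat X_t-X_t)\cdot\big(\hat v_t(\hat X_t)-\hat v_t(X_t)\big)+(\hat X_t-X_t)\cdot(\hat v_t-v_t)(X_t).
\]
The first term is bounded by integrating the Hessian of $-\log\hat\rho_t$ along the segment, which gives at most $(L^+_t+\mathrm{pert}_t)_+D_t^2$. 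The second term is at most $D_t|\hat v_t-v_t|(X_t)$, and $X_t\in G_t$, so the indicator $\mathbf 1_{G_t}(X_t)$ can be inserted for free. This yields
\[
\tfrac{d}{dt}D_t\le(L^+_t)_+D_t+\mathrm{pert}_tD_t+|\hat v_t-v_t|(X_t)\mathbf 1_{G_t}(X_t),
\]
interpreted in the upper Dini sense to avoid dividing by $D_t=0$. Since $\mathrm{pert}_t\ge0$, the sum $(L^+_t)_++\mathrm{pert}_t$ integrates to $\hat\Lambda_{s,t}$. Grönwall with $D_0=0$ then gives $D_t\le\int_0^te^{\hat\Lambda_{s,t}}|\hat v_s-v_s|(X_s)\mathbf 1_{G_s}(X_s)\,ds\le I(x)$, using $\hat\Lambda_{s,t}\le\hat\Lambda_{s,T}$ and the nonnegativity of the integrand.

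For the moments, $X_s=\Phi_s(x)$ with $x\sim\mu_0$ has law $\mu_s$ by (A0). Fubini then gives $\mathbb E_{\mu_0}I=\int_0^Te^{\hat\Lambda_{s,T}}\|(\hat v_s-v_s)\mathbf 1_{G_s}\|_{L^1(\mu_s)}\,ds=J_1$, with equality. For the second moment, Minkowski's integral inequality in $L^2(\mu_0)$ gives $(\mathbb E I^2)^{1/2}\le J_2$.

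For the bad event, split $\{\sigma\le T\}\subset\{\sigma_1\le T\}\cup\{\sigma_2<\sigma_1,\ \sigma_2\le T\}$. The first set is exactly the exit set, so its mass is $p_{\rm exit}$. On the second set, continuity of $D$ gives $D_{\sigma_2}=\delta$ while $D_t\le I(x)$ for all $t<\sigma$, so $I(x)\ge\delta$. Markov's inequality bounds the mass of $\{I\ge\delta\}$ by $J_1/\delta$, and Chebyshev's inequality bounds it by $J_2^2/\delta^2$. Taking the minimum of the two and capping the total at $1$ gives $\mu_0(E_{\rm bad})\le q_*$.

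I expect the main obstacle to be the boundary case in the differential inequality. At $t\uparrow\sigma$ one must still have $D_\sigma\le I$. I would obtain this by continuity of both trajectories, which holds because $\hat v$ is bounded. One must also make sure that the Hessian bound is only used on $G_t$ and not on a neighbourhood, which is guaranteed by the $\delta$-interior $G_t^\delta$.
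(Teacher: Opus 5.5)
Your proposal is correct and follows essentially the same route as the paper: a stopped Gr\"onwall argument along the segment $[X_t,\hat X_t]\subset G_t$ using the one-sided bound of Lemma~\ref{lem:A8}, Tonelli and Minkowski with $X_s\sim\mu_s$ for the moments of $I$, and Markov's inequality applied to $I$ and $I^2$ outside $\{\sigma_1\le T\}$. The additional details you supply (the upper Dini derivative at $D_t=0$, continuity at $\sigma_2$ giving $D_{\sigma_2}=\delta\le I(x)$, and the split $\{\sigma\le T\}\subset\{\sigma_1\le T\}\cup\{\sigma_2<\sigma_1,\ \sigma_2\le T\}$) fill in steps that the paper leaves implicit.
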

\begin{proof}
Along the segment $[X_t,\hat X_t]\subset G_t$ the Jacobian of $\hat v_t$ is symmetric and bounded above by $(L^+_t+\mathrm{pert}_t)I$ (Lemma~\ref{lem:A8}), so that $\frac d{dt}D_t\le(L^+_t+\mathrm{pert}_t)D_t+|\hat v_t-v_t|(X_t)$ where $D_t>0$, and Gr\"onwall's inequality with $D_0=0$ gives the first bound. Tonelli's theorem and Minkowski's integral inequality give the moments of $I$, since $X_s\sim\mu_s$. Outside the event $\{\sigma_1\le T\}$, of probability $p_{\rm exit}$, a separation larger than $\delta$ requires $I\ge\delta$, and Markov's inequality applied to $I$ and $I^2$ bounds its probability by $\min\{J_1/\delta,J_2^2/\delta^2\}$.
\end{proof}

\begin{proof}[Proof of Theorem~\ref{thm:main}]
Let $\widetilde\mu_T=\Phi_T\#\nu_0$. The points $\Phi_T(x_j)$ are i.i.d.\ with law $\mu_T$, so that the triangle inequality and Lemma~\ref{lem:endpoint} give
\[
 \mathbb E W_2(\hat\mu_T,\pi)\le\mathbb E W_2(\hat\mu_T,\widetilde\mu_T)+W_2(\mu_T,\pi)+\min\{S_M(\pi),S_M(\mu_T)\},
\]
and the sampling error of the source is not propagated by the flow. Since the field is independent of the sample, pairing equal indices gives $\mathbb E W_2^2(\hat\mu_T,\widetilde\mu_T)\le\mathbb E_{x\sim\mu_0}D_T(x)^2$, and it is this step that fails for same-sample coefficients. On $E_{\rm bad}^c$, $D_T\le I$ by Lemma~\ref{lem:A10}. On $E_{\rm bad}$, confinement gives $D_T^2\le2R_B^2+2|X_T|^2$, where $|X_T|\le R_B$ unless $X_T\notin B$ and $\mathbb E[|X_T|^2;X_T\notin B]\le m_B$ since $\|\rho_T\|_\infty\le\|\rho_0\|_\infty$. Hence $\mathbb E D_T^2\le J_2^2+4R_B^2q_*+2m_B=C_T^2$, and Lemma~\ref{lem:A2} gives the spectral bound on $W_2(\mu_T,\pi)$.
\end{proof}
For a field constructed from data $\mathcal D$ independent of the particles, the argument holds conditionally on $\mathcal D$, and all quantities in the bound depend on $\mathcal D$.

\begin{remark}[Dimension]\label{rem:dimension}
The bound does not depend explicitly on $d$. The dimension enters through the number of modes needed to represent the source, which is small when the source differs from $\pi$ only in slow coordinates, through $S_M(\pi)$ and through the estimation of the eigenpairs, for which kernel, neural and generator-learning estimators remain applicable in high dimension \citep{meanti2023,mardt2018,lusch2018,kostic2024,devergne2024,zhang2022eigen}. For the sliced distance $SW_2(\mu,\nu)^2\coloneqq|\Theta|^{-1}\sum_{\theta\in\Theta}W_2^2(\theta_\#\mu,\theta_\#\nu)\le W_2(\mu,\nu)^2$ of Section~\ref{sec:exp}, with $\theta_\#$ the projection on a unit vector $\theta$, the proof gives \eqref{eq:main} with $SW_2$ on the left and $S_M(\pi)$ replaced by the sliced error of an independent sample, of order $M^{-1/2}$ up to logarithmic factors in every dimension \citep{bobkov2019}.
\end{remark}

\subsection{Errors of the approximate density ratio}\label{app:cor}

\begin{corollary}\label{cor:exact}
Let the eigenpairs be exact and let the coefficients be computed from an i.i.d.\ sample $\nu'$ of $\mu_0$ of size $M'$, independent of the transported sample as in regime (I). Expectations below are over $\nu'$. Then $e_t=e^{\rm tr}_t+e^{\rm fl}_t$, with
$$
  \|e^{\rm tr}_t\|\le e^{-\lambda_{r+1}t}\tau_r,\qquad \|e^{\rm tr}_t\|_{\Hdot^1}\le m_r(t)^{1/2}\tau_r,
$$
$$
  \mathbb E\|e^{\rm fl}_t\|_{\Hdot^1}^2=\frac1{M'}\sum_{k\le r}\lambda_ke^{-2\lambda_kt}\mathrm{Var}_{\mu_0}(\varphi_k),\qquad \mathbb E\|e^{\rm fl}_t\|^2\le\frac{\mathbb E_{\mu_0}[p_{2t}(X,X)]}{M'},
$$
where $\tau_r\coloneqq\|(I-\Pi_r)\rho_0\|$ and $m_r(t)\coloneqq\sup_{\lambda\ge\lambda_{r+1}}\lambda e^{-2\lambda t}$. In regime (P), $e^{\rm fl}_t=0$.
\end{corollary}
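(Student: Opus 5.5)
The plan is to write the error explicitly in the eigenbasis and treat the two pieces separately. With exact eigenpairs, $\hat\rho_t=\sum_{k\le r}e^{-\lambda_kt}\check c_k\varphi_k$, where $\check c_0=1$ and $\check c_k=M'^{-1}\sum_j\varphi_k(y_j)$. Comparing with \eqref{eq:rep}, which gives $\rho_t=\sum_{k\ge0}e^{-\lambda_kt}c_k\varphi_k$ with $c_0=1$, I will use the decomposition
\[
e_t=-\sum_{k>r}e^{-\lambda_kt}c_k\varphi_k+\sum_{1\le k\le r}e^{-\lambda_kt}(\check c_k-c_k)\varphi_k ,
\]
and define $e^{\rm tr}_t$ and $e^{\rm fl}_t$ as the two sums. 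In regime (P) we have $\check c_k=c_k$, so $e^{\rm fl}_t=0$. The truncation bounds are exactly those of Lemma~\ref{lem:A4}, which follow from Parseval and $\|\varphi_k\|_{\Hdot^1}^2=\lambda_k$ together with orthogonality of the $\nabla\varphi_k$ in $L^2(\pi)$, since $\langle\nabla\varphi_k,\nabla\varphi_l\rangle_\pi=\langle\varphi_k,L\varphi_l\rangle_\pi=\lambda_k\delta_{kl}$.

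For the fluctuation term I compute expectations over $\nu'$. Each $\check c_k-c_k$ is a centred empirical mean, so $\mathbb E(\check c_k-c_k)(\check c_l-c_l)=M'^{-1}\mathrm{Cov}_{\mu_0}(\varphi_k,\varphi_l)$. In the energy norm the cross terms vanish by the orthogonality of the gradients. This gives
\[
\mathbb E\|e^{\rm fl}_t\|_{\Hdot^1}^2=\sum_{k\le r}\lambda_ke^{-2\lambda_kt}\,\mathbb E(\check c_k-c_k)^2,
\]
which is the stated identity. The $L^2$ norm works the same way, with weights $e^{-2\lambda_kt}$, so
\[
\mathbb E\|e^{\rm fl}_t\|^2=M'^{-1}\sum_{1\le k\le r}e^{-2\lambda_kt}\mathrm{Var}_{\mu_0}(\varphi_k)\le M'^{-1}\sum_{k\ge0}e^{-2\lambda_kt}\mathbb E_{\mu_0}\varphi_k^2 .
\]
By the kernel expansion of Proposition~\ref{prop:rep}, $\sum_ke^{-2\lambda_kt}\varphi_k(x)^2=p_{2t}(x,x)$, and monotone convergence (all terms are nonnegative) lets me exchange the sum and $\mathbb E_{\mu_0}$.

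The only delicate point is justifying the diagonal kernel identity $p_{2t}(x,x)=\sum_ke^{-2\lambda_kt}\varphi_k(x)^2$ pointwise. I would get it from the semigroup property and symmetry, $p_{2t}(x,x)=\int p_t(x,y)^2\,d\pi(y)$, combined with the $L^2(\pi)$ expansion of $p_t(x,\cdot)$ and Parseval. The expansion converges under (A1), and the continuity in (A0) is enough to identify the expansion with the kernel. Even if equality were not available, Parseval gives the inequality $\sum_k\le p_{2t}(x,x)$, and that suffices for the claimed bound. Finiteness of $\mathrm{Var}_{\mu_0}(\varphi_k)$ follows from (A2), since $\mathbb E_{\mu_0}\varphi_k^2\le\|\rho_0\|_\infty$.
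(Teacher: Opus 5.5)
Your proposal is correct and follows essentially the paper's proof. The paper uses the same split of $e_t$ into the tail sum and the coefficient-fluctuation sum, Lemma~\ref{lem:A4} for the truncation part, Parseval in $L^2(\pi)$ and $\dot H^1$ together with $\mathbb E(\check c_k-c_k)^2=\mathrm{Var}_{\mu_0}(\varphi_k)/M'$ for the fluctuation identity, and the diagonal of the kernel expansion of Proposition~\ref{prop:rep} for the $L^2$ bound.
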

\begin{proof}
The bounds on $e^{\rm tr}_t$ are Lemma~\ref{lem:A4}. For $e^{\rm fl}_t=\sum_{k\le r}e^{-\lambda_kt}(\hat c_k-c_k)\varphi_k$, Parseval's identity and $\mathbb E(\hat c_k-c_k)^2=\mathrm{Var}_{\mu_0}(\varphi_k)/M'$ give the identity, and $\sum_ke^{-2\lambda_kt}\mathrm{Var}_{\mu_0}(\varphi_k)\le\mathbb E_{\mu_0}[p_{2t}(X,X)]$ gives the second bound. The same argument with (A3) bounds $\sup_B\|\nabla^je_t\|$ for $j\le2$, as needed for the validity region, and it requires only an i.i.d.\ coefficient sample, so that it also applies to the source sample of regime (S).
\end{proof}

\begin{corollary}[Estimated eigenpairs]\label{cor:est}\label{cor:est_full}
Let the eigenpairs and the rank be fixed by data independent of the transported sample, with $\hat\lambda_k>0$ and $\int\hat\varphi_k\,d\pi=0$, and let $\hat c_k=\int\hat\varphi_k\,d\nu'$ as in regime (I). Write $\psi_k=\hat\varphi_k-\varphi_k$, $\tilde c_k=\int\varphi_k\,d\nu'$ and
\[
 a_k(t)\coloneqq|e^{-\hat\lambda_kt}-e^{-\lambda_kt}|\le t|\hat\lambda_k-\lambda_k|e^{-(\lambda_k\wedge\hat\lambda_k)t},\qquad
 A_{M'}\coloneqq\Big(\frac{\|\rho_0-1\|^2+\|\rho_0\|_\infty/M'}{\lambda_1}\Big)^{1/2}.
\]
Then $e_t=e^{\rm tr}_t+e^{\rm fl}_t+d_t+e^{\rm est}_t$, where the first two terms are those of Corollary~\ref{cor:exact} for the coefficients $\tilde c_k$, $d_t=\sum_{k=1}^re^{-\lambda_kt}(\hat c_k-\tilde c_k)\varphi_k$ and $e^{\rm est}_t=\sum_{k=1}^r\hat c_k(e^{-\hat\lambda_kt}\hat\varphi_k-e^{-\lambda_kt}\varphi_k)$, and
\[
\begin{aligned}
 \|e^{\rm est}_t\|_{\Hdot^1}&\le\sum_{k=1}^r|\hat c_k|e^{-\hat\lambda_kt}\|\psi_k\|_{\Hdot^1}+\Big(\sum_{k=1}^r\lambda_k|\hat c_k|^2a_k(t)^2\Big)^{1/2},\\
 \big(\mathbb E_{\nu'}\|d_t\|_{\Hdot^1}^2\big)^{1/2}&\le A_{M'}\Big(\sum_{k=1}^r\lambda_ke^{-2\lambda_kt}\|\psi_k\|_{\Hdot^1}^2\Big)^{1/2}.
\end{aligned}
\]
The bounds in $L^2(\pi)$ are analogous, and for population coefficients $M'=\infty$ and $e^{\rm fl}_t=0$.
\end{corollary}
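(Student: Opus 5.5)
The plan is to expand the difference $e_t=\hat\rho_t-\rho_t$ algebraically and bound each piece through the orthonormal basis $(\varphi_k)$. Write $\hat\rho_t=1+\sum_{k\le r}e^{-\hat\lambda_kt}\hat c_k\hat\varphi_k$ and $\rho_t=1+\sum_{k\ge1}e^{-\lambda_kt}c_k\varphi_k$. Adding and subtracting $\sum_{k\le r}e^{-\lambda_kt}\hat c_k\varphi_k$ and $\sum_{k\le r}e^{-\lambda_kt}\tilde c_k\varphi_k$ gives a telescoping decomposition into four pieces:
\begin{itemize}
\item $e^{\rm est}_t$, which collects $\hat c_k(e^{-\hat\lambda_kt}\hat\varphi_k-e^{-\lambda_kt}\varphi_k)$;
\item $d_t$, which collects $e^{-\lambda_kt}(\hat c_k-\tilde c_k)\varphi_k$;
\item $e^{\rm fl}_t$, which collects $e^{-\lambda_kt}(\tilde c_k-c_k)\varphi_k$;
\item $e^{\rm tr}_t$, which is $-\sum_{k>r}e^{-\lambda_kt}c_k\varphi_k$.
\end{itemize}
The coefficients $\tilde c_k=\int\varphi_k\,d\nu'$ are exactly the regime (I) coefficients of Corollary~\ref{cor:exact}. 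Hence the bounds on $e^{\rm tr}_t$ and $e^{\rm fl}_t$ are inherited verbatim from Lemma~\ref{lem:A4} and Corollary~\ref{cor:exact}.

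For $e^{\rm est}_t$, split each summand as
\[
\hat c_ke^{-\hat\lambda_kt}\psi_k+\hat c_k(e^{-\hat\lambda_kt}-e^{-\lambda_kt})\varphi_k .
\]
The first part is bounded termwise by the triangle inequality in $\Hdot^1$. For the second part, the $\varphi_k$ are $\Hdot^1$-orthogonal with $\|\varphi_k\|_{\Hdot^1}^2=\lambda_k$, since $\langle\varphi_j,L\varphi_k\rangle_\pi=\lambda_k\delta_{jk}$. This gives exactly $(\sum_k\lambda_k\hat c_k^2a_k(t)^2)^{1/2}$. The bound on $a_k(t)$ is the mean value theorem applied to $s\mapsto e^{-st}$ between $\lambda_k$ and $\hat\lambda_k$.

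For $d_t$, note that $\hat c_k-\tilde c_k=\int\psi_k\,d\nu'$. Using $\Hdot^1$-orthogonality again,
\[
\|d_t\|_{\Hdot^1}^2=\sum_k\lambda_ke^{-2\lambda_kt}\Big(\int\psi_k\,d\nu'\Big)^2 ,
\]
so it suffices to show that $\mathbb E_{\nu'}(\int\psi_k\,d\nu')^2\le A_{M'}^2\|\psi_k\|_{\Hdot^1}^2$. Writing the second moment as squared mean plus variance gives
\[
\mathbb E_{\nu'}\Big(\int\psi_k\,d\nu'\Big)^2=\big(\mathbb E_{\mu_0}\psi_k\big)^2+\mathrm{Var}_{\mu_0}(\psi_k)/M' .
\]
\begin{itemize}
\item \emph{Mean.} Since $\int\psi_k\,d\pi=0$ (because $\int\hat\varphi_k\,d\pi=\int\varphi_k\,d\pi=0$), we have $\mathbb E_{\mu_0}\psi_k=\langle\rho_0-1,\psi_k\rangle_\pi$. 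This is at most $\|\rho_0-1\|\,\|\psi_k\|$, and by the Poincaré inequality $\|\psi_k\|^2\le\lambda_1^{-1}\|\psi_k\|_{\Hdot^1}^2$, valid for mean-zero functions by the spectral gap of (A1).
\item \emph{Variance.} We have $\mathrm{Var}_{\mu_0}\psi_k\le\int\psi_k^2\rho_0\,d\pi\le\|\rho_0\|_\infty\|\psi_k\|^2$, and Poincaré applies again.
\end{itemize}
Summing the two contributions gives the factor $A_{M'}^2=(\|\rho_0-1\|^2+\|\rho_0\|_\infty/M')/\lambda_1$. Summing over $k$ inside the expectation then yields the stated bound.

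The $L^2(\pi)$ analogues follow the same computation with $\|\varphi_k\|=1$ in place of $\lambda_k$. For population coefficients, replace $\nu'$ by $\mu_0$: the variance terms vanish, so $e^{\rm fl}_t=0$, which corresponds to $M'=\infty$.

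The main obstacle is the $d_t$ term, since $\psi_k$ is integrated against $\mu_0$ rather than $\pi$. The mean-zero condition under $\pi$ together with Poincaré is what converts the $\mu_0$-moments into $\Hdot^1$ errors, and it requires $\rho_0\in L^\infty$ from (A2). A further point to check is that $\psi_k\in\Hdot^1$ and that $\hat c_k$ is fixed conditionally on the data. The latter holds because the eigenpairs come from data independent of $\nu'$, so the expectation over $\nu'$ treats $\psi_k$ as deterministic.
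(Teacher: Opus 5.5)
Your proposal is correct and follows the paper's proof step by step. It uses the same four-term telescoping in the exact basis and the same split of $e^{\rm est}_t$ into a triangle-inequality part and an $\Hdot^1$-orthogonal part. It also bounds $\mathbb E_{\nu'}|\hat c_k-\tilde c_k|^2$ by the same mean-plus-variance computation, closed with $\int\psi_k\,d\pi=0$, $\|\rho_0\|_\infty$ and the Poincar\'e inequality.

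One small phrasing point: the independent eigenpair data fix $\psi_k$, not $\hat c_k$, which still varies with $\nu'$. This is harmless, since your bound on $e^{\rm est}_t$ holds pathwise in $\hat c_k$.
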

\begin{proof}
The decomposition follows by adding and subtracting the two expansions in the exact basis. With $\beta_k=\mathbb E_{\mu_0}\psi_k=\langle\rho_0-1,\psi_k\rangle_\pi$ and $V_k=\mathrm{Var}_{\mu_0}(\psi_k)\le\|\rho_0\|_\infty\|\psi_k\|^2$, one has $\mathbb E_{\nu'}|\hat c_k-\tilde c_k|^2=\beta_k^2+V_k/M'\le A_{M'}^2\|\psi_k\|_{\Hdot^1}^2$ by the Poincar\'e inequality, and the orthogonality of the $\varphi_k$ in $L^2(\pi)$ and in $\Hdot^1$ gives the bound on $d_t$ without independence between modes. In $e^{\rm est}_t=\sum_k\hat c_ke^{-\hat\lambda_kt}\psi_k+\sum_k\hat c_k(e^{-\hat\lambda_kt}-e^{-\lambda_kt})\varphi_k$ the first sum is bounded by the triangle inequality and the second by orthogonality, and the bound on $a_k(t)$ is the mean value theorem.
\end{proof}

\subsection{Validity regions and the exit probability}\label{app:construct}

\begin{lemma}[Construction of a validity region]\label{lem:construct}
Suppose that for $t\le T$ there are a continuous function $w\ge1$ on $B$ and numbers $S_t>0$, $a_1,a_2$ such that $|\nabla^je_t|\le w\,a_jS_t$ on $B$ for $j=0,1,2$, with $a_0=1$. Let $\kappa\in(0,\tfrac12]$, $\kappa_t=\min\{\kappa,S_t/(2\varepsilon_0)\}$ and $G_t=\{x\in B_{2\delta}\mid w(x)S_t\le\kappa_t\rho_t(x)\}$, where $B_{2\delta}$ is the set of points of $B$ at distance at least $2\delta$ from its complement. Then $(G_t)$ is a validity region with $\theta_t\ge\max\{2\varepsilon_0,S_t/\kappa_t\}$ and $\mathrm{pert}_t\le2(a_2+2V_Ga_1+V_G^2+H_G)\kappa_t+4(a_1+V_G)^2\kappa_t^2$, and the bound $v_{\max}\ge V_G+2(a_1+V_G)\kappa$ suffices.
\end{lemma}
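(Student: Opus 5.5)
The plan is to check the three conditions (G1)--(G3) of Definition~\ref{def:good} pointwise on $G_t$. Each one should follow from the single defining inequality $w S_t\le\kappa_t\rho_t$ together with the envelope bounds $|\nabla^j e_t|\le w\,a_jS_t$ and the bounds $|\nabla\log\rho_t|\le V_G$, $\|\nabla^2\log\rho_t\|\le H_G$ of (A4). The topological requirements are immediate. $G_t$ is closed because it is the intersection of the closed set $B_{2\delta}$ with a sublevel set of the continuous function $wS_t-\kappa_t\rho_t$. It lies at distance at least $2\delta$ from $B^c$ by construction. I take (A4) to hold on $G_t$; it is assumed on a neighbourhood of the $G_t$, and this is the form in which it enters below.

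For (G1), I use $w\ge1$ on $G_t$ to get $\rho_t\ge wS_t/\kappa_t\ge S_t/\kappa_t$, and set $\theta_t=S_t/\kappa_t$. Since $\kappa_t\le S_t/(2\varepsilon_0)$, this gives $\theta_t\ge2\varepsilon_0$, so $\theta_t\ge\max\{2\varepsilon_0,S_t/\kappa_t\}$ as claimed. For (G2), $|e_t|\le wS_t\le\kappa_t\rho_t\le\tfrac12\rho_t$ because $\kappa_t\le\kappa\le\tfrac12$. It is convenient to record this as $|\eta_t|\le\kappa_t$.

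Next I bound the derivatives of $\eta=e/\rho$. From $\nabla\eta=(\nabla e-e\nabla\log\rho)/\rho$ and the envelopes,
\[
|\nabla\eta|\le(a_1+V_G)\,wS_t/\rho_t\le(a_1+V_G)\kappa_t .
\]
For the Hessian I use the identity $\nabla^2\rho/\rho=\nabla^2\log\rho+\nabla\log\rho\otimes\nabla\log\rho$, which gives
\[
\nabla^2\eta=\frac{\nabla^2e}{\rho}-\frac{\nabla e\otimes\nabla\log\rho+\nabla\log\rho\otimes\nabla e}{\rho}+\frac{e}{\rho}\big(\nabla\log\rho\otimes\nabla\log\rho-\nabla^2\log\rho\big).
\]
Hence
\[
\|\nabla^2\eta\|\le(a_2+2V_Ga_1+V_G^2+H_G)\kappa_t .
\]
Substituting both bounds into $\mathrm{pert}_t=\sup_{G_t}(2\|\nabla^2\eta_t\|+4|\nabla\eta_t|^2)$ gives exactly the stated bound.

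For (G3), I write $\nabla\log\hat\rho=\nabla\log\rho+\nabla\eta/(1+\eta)$, as in Lemma~\ref{lem:A8}. Since $1+\eta\ge\tfrac12$ and $\kappa_t\le\kappa$, this gives $|\nabla\log\hat\rho_t|\le V_G+2(a_1+V_G)\kappa$ on $G_t$, so the stated choice of $v_{\max}$ suffices.

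I expect the main obstacle to be only bookkeeping in the Hessian of $\eta$. The point to watch is that each term carries exactly one factor $wS_t/\rho_t\le\kappa_t$, so that the bound is linear in $\kappa_t$. The term $e\,\nabla^2\rho/\rho^2$ must be rewritten through $\log\rho$ so that only $V_G$ and $H_G$ appear, and not $\nabla^2\rho$.
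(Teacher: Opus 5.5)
Your proposal is correct and follows the paper's proof essentially step for step. You obtain (G1) and (G2) from $wS_t\le\kappa_t\rho_t$ with $w\ge1$ and $\kappa_t\le\min\{\tfrac12,S_t/(2\varepsilon_0)\}$, bound $\nabla\eta$ and $\nabla^2\eta$ by differentiating $\eta\rho=e$, and get (G3) from $\nabla\log\hat\rho=\nabla\log\rho+\nabla\eta/(1+\eta)$ with $1+\eta\ge\tfrac12$; your explicit Hessian identity through $\nabla^2\rho/\rho=\nabla^2\log\rho+\nabla\log\rho\otimes\nabla\log\rho$ spells out the computation the paper leaves implicit.
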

\begin{proof}
The set $G_t$ is closed and lies in $B_{2\delta}$. On $G_t$, $|e_t|\le wS_t\le\kappa_t\rho_t\le\rho_t/2$ and $\rho_t\ge S_t/\kappa_t\ge2\varepsilon_0$, which gives (G1) and (G2). Differentiating $\eta\rho=e$ twice and using $|\nabla^je|\le wa_jS_t\le a_j\kappa_t\rho_t$ gives $|\nabla\eta|\le(a_1+V_G)\kappa_t$ and $\|\nabla^2\eta\|\le(a_2+2V_Ga_1+V_G^2+H_G)\kappa_t$, hence the bound on $\mathrm{pert}_t$, and $|\nabla\log\hat\rho|\le V_G+2(a_1+V_G)\kappa_t$ gives (G3).
\end{proof}

\begin{lemma}\label{lem:exit}
(i) In dimension one, if $v_t$ is locally Lipschitz then $\Phi_t$ is increasing and $F_t\circ\Phi_t=F_0$, where $F_t$ is the distribution function of $\mu_t$. If $G_t^\delta=[a_t,b_t]$ is nonempty for all $t$, then $p_{\rm exit}\le\min\{1,\sup_tF_t(a_t^-)+\sup_t(1-F_t(b_t))\}$, with equality when the two families of tails are disjoint.
(ii) In any dimension, let $\bar v=\sup_{t\le T}\sup_B|v_t|$, $\Delta\le\delta/(2\bar v)$, $t_i=i\Delta$, and suppose that $G_{t_i}^{3\delta/2}\subset G_t^{\delta}$ for $t\in[t_i,t_{i+1}]$. Then $p_{\rm exit}\le\sum_{i\le T/\Delta}\mu_{t_i}\big((G_{t_i}^{2\delta})^c\big)+\mu_0(\{x\mid\Phi_t(x)\notin B\text{ for some }t\le T\})$.
\end{lemma}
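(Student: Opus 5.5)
For part (i), I will use the monotonicity of one-dimensional flows. Because $v_t$ is locally Lipschitz, trajectories of $\dot x=v_t(x)$ cannot cross, so $\Phi_t$ is strictly increasing on the support of $\mu_0$. Since $\mu_t=\Phi_t\#\mu_0$ (Proposition~\ref{prop:rep} under (A0)), for every $x$ one has $F_t(\Phi_t(x))=\mu_0(\{y\mid\Phi_t(y)\le\Phi_t(x)\})=\mu_0((-\infty,x])=F_0(x)$.

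The exit event then splits into a left part and a right part:
\[
\{x\mid\Phi_t(x)<a_t\text{ for some }t\}\cup\{x\mid\Phi_t(x)>b_t\text{ for some }t\}.
\]
For fixed $t$, the condition $\Phi_t(x)<a_t$ means $x<\Phi_t^{-1}(a_t)$ (with the obvious generalised inverse). Each left set is therefore a half-line $(-\infty,\xi_t)$, and the union over $t$ is again a half-line $(-\infty,\sup_t\xi_t)$. Its $\mu_0$-mass equals $\sup_t\mu_0((-\infty,\xi_t))=\sup_tF_t(a_t^-)$, using $F_t\circ\Phi_t=F_0$. The right tail is symmetric and gives $\sup_t(1-F_t(b_t))$. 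A union bound, capped at $1$, yields the inequality. When the two half-lines are disjoint their masses add, which gives equality. The only delicate point is the left limit $F_t(a_t^-)$, which accounts for the strict inequality at the endpoint; I will handle it carefully but expect no real difficulty.

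For part (ii), I will discretise time. Suppose a path exits, that is, $\Phi_t(x)\notin G^\delta_t$ for some $t\le T$, and the path stays in $B$ on $[0,T]$ (otherwise it lies in the second set of the bound). Choose $i$ with $t\in[t_i,t_{i+1}]$. Since $|v|\le\bar v$ on $B$, one has $|\Phi_t(x)-\Phi_{t_i}(x)|\le\bar v\Delta\le\delta/2$. I claim that then $\Phi_{t_i}(x)\notin G^{2\delta}_{t_i}$. Indeed, if $\Phi_{t_i}(x)\in G^{2\delta}_{t_i}$, the ball $\bar B_{\delta/2}(\Phi_{t_i}(x))$ would lie in $G^{3\delta/2}_{t_i}$. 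This uses the inclusion $\{y\mid \bar B_{\delta/2}(y)\subset G^{3\delta/2}\}\supset G^{2\delta}$, which holds because the iterated shrinkings $(G^{a})^{b}\supseteq G^{a+b}$ by the triangle inequality. It would follow that $\Phi_t(x)\in G^{3\delta/2}_{t_i}\subset G^\delta_t$, a contradiction.

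So the exit set is contained in
\[
\bigcup_i\{x\mid\Phi_{t_i}(x)\notin G^{2\delta}_{t_i}\}\cup\{x\mid \text{path leaves }B\}.
\]
A union bound together with $\mu_0(\Phi_{t_i}^{-1}(A))=\mu_{t_i}(A)$ gives the stated estimate. The main care point is this shrinking-set bookkeeping, and I will define $G^a\coloneqq\{y\mid \bar B_a(y)\subset G\}$ consistently with Definition~\ref{def:good}. The restriction to paths in $B$ is needed only to apply the speed bound $\bar v$, which is why the term for leaving $B$ appears.
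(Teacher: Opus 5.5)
Your proposal is correct and follows the paper's proof. In (i) the increasing flow is the quantile coupling, so $F_t\circ\Phi_t=F_0$ turns each one-sided exit event into a half-line of initial points of mass $F_t(a_t^-)$ or $1-F_t(b_t)$; in (ii) a particle in $G^{2\delta}_{t_i}$ moves by at most $\delta/2$ on $[t_i,t_{i+1}]$, stays in $G^{3\delta/2}_{t_i}\subset G^\delta_t$, and a union bound over the grid finishes, with your treatment of the union over $t$ as nested half-lines and of $G^{a+b}\subset(G^a)^b$ spelling out what the paper leaves implicit.
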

\begin{proof}
In dimension one the flow is increasing and is the quantile coupling of optimal transport \citep[Ch.~2]{villani2003}, so that a particle exits $G_t^\delta$ at time $t$ if and only if $F_0(x)<F_t(a_t^-)$ or $F_0(x)>F_t(b_t)$, which gives (i). For (ii), a particle in $G_{t_i}^{2\delta}$ moves by at most $\delta/2$ on $[t_i,t_{i+1}]$ and therefore remains in $G_{t_i}^{3\delta/2}\subset G_t^\delta$, and a union bound over the grid gives the claim.
\end{proof}
In dimension one, part (i) gives $p_{\rm exit}$ exactly. In higher dimension the hypothesis of (ii) asks only that the region move by less than $\delta/2$ over one grid interval, which is a condition on the step $\Delta$.

\section{The transport identity and same-sample coefficients}\label{app:identity}

\begin{proof}[Proof of Proposition~\ref{prop:identity}]
Let $\hat q_t=\pi\hat\rho_t$ and $\hat\nu_t=\hat\Phi_{0,t}\#\nu$. As in the proof of Lemma~\ref{lem:residual}, and since $\hat q_t\hat v_t=\pi\gamma_t-\pi\nabla\hat\rho_t$,
\[
  \partial_t\hat q_t+\mathrm{div}(\hat q_t\hat v_t)=\pi R_t+\mathrm{div}(\pi\gamma_t),\qquad
  \partial_t\hat\nu_t+\mathrm{div}(\hat\nu_t\hat v_t)=0
\]
in the sense of distributions on the whole space, without any sign condition on $\hat\rho_t$. The function $g_s=f\circ\hat\Phi_{s,T}$ is bounded and Lipschitz and solves $\partial_sg_s+\hat v_s\cdot\nabla g_s=0$ with $g_T=f$. Testing both equations against $g_s$, after mollification in $x$, gives $\frac d{ds}\int g_s\,d\hat\nu_s=0$ and $\frac d{ds}\int g_s\hat\rho_s\,d\pi=\langle g_s,R_s\rangle_\pi-\int\nabla g_s\cdot\gamma_s\,d\pi$, and integration over $[0,T]$ gives \eqref{eq:identity}. Where the safeguards are inactive, $\hat v_t=-\nabla\hat\rho_t/\hat\rho_t$ and $\gamma_t=0$.
\end{proof}

\begin{proof}[Proof of Corollary~\ref{cor:tv}]
For $\nu=\mu_0$ and $|f|\le1$, the second term of \eqref{eq:identity} is $\int(f\circ\hat\Phi_{0,T})(\rho_0-\hat\rho_0)\,d\pi$, bounded by $\|\rho_0-\hat\rho_0\|_{L^1(\pi)}$, and $|\int f\hat\rho_T\,d\pi-\int f\,d\pi|\le\|\hat\rho_T-1\|_{L^1(\pi)}$. The residual term is at most $\int_0^T\|R_s\|_{L^1(\pi)}ds\le\sum_{k\le r}|\hat c_k|\,\|(L-\hat\lambda_k)\hat\varphi_k\|_{L^1(\pi)}/\hat\lambda_k$. If the safeguards are active, $f$ is also taken with $\mathrm{Lip}(f)\le1$, and the flux term is at most $\int_0^T\mathrm{Lip}(\hat\Phi_{s,T})\|\gamma_s\|_{L^1(\pi)}ds$.
\end{proof}

\subsection{Same-sample coefficients and asymptotic variances}\label{app:cancel}

\begin{proof}[Proof of Proposition~\ref{prop:cancel}]
Apply Proposition~\ref{prop:identity} with $\nu=\nu_0$, let $\tilde c_k$ be the coefficients of the field, with $\tilde c_0=1$, and let $\hat\Pi_rg=\sum_{k\le r}\langle g,\hat\varphi_k\rangle_\pi\hat\varphi_k$. By orthonormality $\int\hat\varphi_k\hat\rho_0\,d\pi=\tilde c_k$, so that $\int\hat\Pi_rg\,d(\nu_0-\pi\hat\rho_0)=\sum_{k=1}^r\langle g,\hat\varphi_k\rangle_\pi\big(M^{-1}\sum_j\hat\varphi_k(x_j)-\tilde c_k\big)$, which vanishes in regime (S), while $\int(I-\hat\Pi_r)g\,\hat\rho_0\,d\pi=0$ because $\hat\rho_0$ lies in the span of the $\hat\varphi_k$. If the $\hat\varphi_k$ have a Gram matrix $\Gamma\ne I$ in $L^2(\pi)$, for instance when they are normalised with respect to the trajectory data, and $\hat\Pi_r$ denotes the orthogonal projection onto their span, the same computation gives the additional term $b^\top(\Gamma^{-1}-I)\hat m$, where $b_k=\langle g,\hat\varphi_k\rangle_\pi$ and $\hat m_k=M^{-1}\sum_j\hat\varphi_k(x_j)$ for $0\le k\le r$, whose fluctuation is of order $\|\Gamma-I\|M^{-1/2}$.
\end{proof}

\begin{proof}[Proof of Theorem~\ref{thm:clt}]
Write $G(c,x)=f(\Phi^c_{0,T}(x))$, $P$ for $\mu_0$ and $P_M$ for $\nu_0$, so that $\int f\,d\hat\mu_T=P_MG(c,\cdot)$ with $c=\bar c$, $\check c$ or $\hat c$ in regimes (P), (I) and (S). By the law of large numbers $\hat c,\check c\to\bar c$ almost surely, and $\sqrt M(\hat c-\bar c)=\sqrt M(P_M-P)\hat\varphi$ is asymptotically normal. Since $|G(c,\cdot)-G(c',\cdot)|\le\kappa|c-c'|$ on the bounded set $U$ with $\kappa\in L^2(\mu_0)$, the class $\{G(c,\cdot)\mid c\in U\}$ is $\mu_0$-Donsker \citep[Example~19.7 and Theorem~19.5]{vandervaart1998}. On the event $\hat c\in U$, whose probability tends to one, $P(G(\hat c,\cdot)-\bar g)^2\le\|\kappa\|_{L^2(\mu_0)}^2|\hat c-\bar c|^2$, which tends to zero in probability, and asymptotic equicontinuity \citep[Lemma~19.24]{vandervaart1998} gives $(P_M-P)(G(\hat c,\cdot)-\bar g)=o_P(M^{-1/2})$. The differentiability of $m$ at $\bar c$ gives $m(\hat c)-m(\bar c)=\nabla m(\bar c)\cdot(\hat c-\bar c)+o_P(M^{-1/2})$. Hence
\[
P_MG(\hat c,\cdot)-m(\bar c)=(P_M-P)\big(\bar g+\nabla m(\bar c)\cdot\hat\varphi\big)+o_P(M^{-1/2}),
\]
and the central limit theorem gives $\sigma_S^2$. In regime (I) the same expansion holds with $(P_M-P)\hat\varphi$ replaced by the independent fluctuation $\check c-\bar c$, and in regime (P) the second term is absent.

For the derivative of $m$, apply Proposition~\ref{prop:identity} with $\nu=\mu_0$ to the field with coefficients $c\in U$. Since $\hat\rho^c_0-\hat\rho^{\bar c}_0=\sum_{k=1}^r(c_k-\bar c_k)\hat\varphi_k$ and, by orthonormality, $\rho_0-\hat\rho^{\bar c}_0=(I-\hat\Pi_r)\rho_0$,
\[
\begin{aligned}
m(c)={}&\langle f,1\rangle_\pi+\sum_{k=1}^re^{-\hat\lambda_kT}c_k\langle f,\hat\varphi_k\rangle_\pi+\langle G(c,\cdot),(I-\hat\Pi_r)\rho_0\rangle_\pi\\
&-\sum_{k=1}^r(c_k-\bar c_k)\langle G(c,\cdot),\hat\varphi_k\rangle_\pi+E^c_T(f).
\end{aligned}
\]
Subtracting the same expression at $\bar c$ and using $|\langle G(c,\cdot)-\bar g,h\rangle_\pi|\le\|\kappa\|\,\|h\|\,|c-\bar c|$ gives
\[
m(c)-m(\bar c)=\sum_{k=1}^r(c_k-\bar c_k)\big(e^{-\hat\lambda_kT}\langle f,\hat\varphi_k\rangle_\pi-\langle\bar g,\hat\varphi_k\rangle_\pi\big)+\epsilon(c)
\]
with $|\epsilon(c)|\le\big(\|\kappa\|\hat\tau_r+\ell_T(f)+o(1)\big)|c-\bar c|$,
so that $\beta=\nabla m(\bar c)+(\langle\bar g,\hat\varphi_k\rangle_\pi)_k$ satisfies $|\beta|\le|(e^{-\hat\lambda_kT}\langle f,\hat\varphi_k\rangle_\pi)_k|+\|\kappa\|\hat\tau_r+\ell_T(f)$, and Bessel's inequality bounds the first term by $e^{-\hat\lambda_1T}\|f\|$. Finally $\bar g+\nabla m(\bar c)\cdot\hat\varphi=(I-\hat\Pi_r)\bar g+\langle\bar g,1\rangle_\pi+\beta\cdot\hat\varphi$, where $\|\beta\cdot\hat\varphi\|_{L^2(\mu_0)}\le\|\rho_0\|_\infty^{1/2}|\beta|$ by orthonormality, and for exact eigenpairs $\|(I-\Pi_r)\bar g\|_{L^2(\mu_0)}^2\le\|\rho_0\|_\infty\|(I-\Pi_r)\bar g\|^2\le\|\rho_0\|_\infty\lambda_{r+1}^{-1}\|\bar g\|_{\Hdot^1}^2$.
\end{proof}
The hypotheses on $U$ hold on a torus, or for dictionaries with bounded derivatives of order at most two, since the safeguarded field is then Lipschitz in $x$ and in $c$, and the differentiability of $m$ holds when the safeguards are inactive along the paths or are replaced by smooth functions. In regime (P), $\sigma_P^2$ is the variance of $f$ under $\Phi^{\bar c}_{0,T}\#\mu_0$, and $\mathrm{Var}_\pi(f)\le\lambda_1^{-1}\|f\|_{\Hdot^1}^2$ by the Poincar\'e inequality. The centring $m(\bar c)$ differs from $\int f\,d\pi$ by $\int f(\hat\rho^{\bar c}_T-1)\,d\pi+\langle\bar g,(I-\hat\Pi_r)\rho_0\rangle_\pi+E^{\bar c}_T(f)$, by Proposition~\ref{prop:identity} with $\nu=\mu_0$, and this bias is common to the three regimes.

\paragraph{The energy of LAWGD.} With $m_k=M^{-1}\sum_j\hat\varphi_k(x_j)$ and $\mathcal E=\frac M2\sum_{k=1}^r\hat\lambda_k^{-1}m_k^2$, one has $\nabla_{x_i}\mathcal E=M\sum_k\hat\lambda_k^{-1}m_k\nabla_{x_i}m_k=\sum_k\hat\lambda_k^{-1}m_k\nabla\hat\varphi_k(x_i)$, which is minus the velocity of Remark~\ref{rem:lawgd}, and $\frac d{dt}\mathcal E=-\sum_i|\dot x_i|^2$. Let $\nu_M$ be the empirical measure of the particles. Since $\int\hat\varphi_k\,d\pi=0$, $\frac2M\mathcal E=\sum_{k=1}^r\hat\lambda_k^{-1}\big(\int\hat\varphi_k\,d(\nu_M-\pi)\big)^2$ is the squared maximum mean discrepancy between $\nu_M$ and $\pi$ for the kernel $\sum_{k=1}^r\hat\lambda_k^{-1}\hat\varphi_k\otimes\hat\varphi_k$, and for exact eigenpairs it is the truncation to the first $r$ modes of $\|d\nu_M/d\pi-1\|_{\Hdot^{-1}}^2$, defined by the same series.

\subsection{Same-sample coefficients in the 2-Wasserstein distance}\label{app:samesample}

\begin{corollary}[Comparison with same-sample coefficients]\label{cor:samesample}
Fix the eigenpairs and the rank independently of the transported sample, with $\hat\lambda_k>0$ and $\operatorname{Var}_{\mu_0}(\hat\varphi_k)<\infty$ for $k\le r$, let $\bar c_k=\mathbb E_{\mu_0}\hat\varphi_k$ and $\hat c_k=M^{-1}\sum_j\hat\varphi_k(x_j)$, and let $\hat\mu^S_T$ be the measure transported with the coefficients $\hat c$ from the same confined initialisation. If Theorem~\ref{thm:main} applies to the population field, both fields admit a common validity region with bounds $\theta_s$, $\hat\Lambda_{s,t}$ and $V_s$ (the latter for $|\nabla\log\hat\rho^{\bar c}_s|$), and $\|\hat\varphi_k\|_{\infty,B}+\|\nabla\hat\varphi_k\|_{\infty,B}\le K_r$, then
\begin{equation}\label{eq:samesample}
\mathbb E\,W_2(\hat\mu^S_T,\pi)\le\mathcal B_T+\mathcal A_T\Delta_M+2R_B\,\hat p_{\rm exit}^{1/2},\qquad
\mathcal A_T\coloneqq2K_r\int_0^Te^{\hat\Lambda_{s,T}}\frac{1+V_s}{\theta_s}\,ds,
\end{equation}
with $\Delta_M\coloneqq\sum_{k\le r}(\operatorname{Var}_{\mu_0}(\hat\varphi_k)/M)^{1/2}$, where $\mathcal B_T$ is the bound of Theorem~\ref{thm:main} for the population field and $\hat p_{\rm exit}$ the expected fraction of particles whose population-field trajectory leaves $G^\delta$, or separates from its same-sample counterpart by more than $\delta$, before $T$.
\end{corollary}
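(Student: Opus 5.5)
The plan is to compare the same-sample system with the population-field system started from the same confined points, and then to invoke Theorem~\ref{thm:main} for the latter. Let $\hat\mu^P_T$ be the empirical measure of the particles transported by the field with coefficients $\bar c$. By the triangle inequality,
\[
\mathbb E\,W_2(\hat\mu^S_T,\pi)\le\mathbb E\,W_2(\hat\mu^P_T,\pi)+\mathbb E\,W_2(\hat\mu^S_T,\hat\mu^P_T).
\]
The population field is deterministic given the eigenpairs, which are independent of the sample. The first term is therefore at most $\mathcal B_T$ by Theorem~\ref{thm:main}. For the second term we pair equal indices, which needs no independence between the field and the particles:
\[
W_2^2(\hat\mu^S_T,\hat\mu^P_T)\le M^{-1}\sum_j|\hat X^S_T(x_j)-\hat X^P_T(x_j)|^2 .
\]

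We split the particles into good and bad ones. A particle is good if its population trajectory stays in $G^\delta$ and the two trajectories stay within $\delta$ of each other up to time $T$. A bad particle has separation at most $2R_B$ by confinement. The expected fraction of bad particles is $\hat p_{\rm exit}$ by definition, so by Jensen their contribution to $\mathbb E W_2$ is at most $2R_B\hat p_{\rm exit}^{1/2}$.

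For a good particle, both points stay in the common validity region, where both fields are unclipped log-gradients. As in Lemma~\ref{lem:A10}, the segment between them lies in $G_s$, the Jacobian of the same-sample field is bounded above by $(L^+_s+\mathrm{pert}_s)I$, and Gr\"onwall gives
\[
D_T\le\int_0^Te^{\hat\Lambda_{s,T}}\sup_{G_s}|\hat v^{\hat c}_s-\hat v^{\bar c}_s|\,ds .
\]
It remains to bound the field difference pointwise. Write $\delta\rho_s=\hat\rho^{\hat c}_s-\hat\rho^{\bar c}_s=\sum_{k\le r}e^{-\hat\lambda_ks}(\hat c_k-\bar c_k)\hat\varphi_k$. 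Then
\[
\hat v^{\hat c}-\hat v^{\bar c}=-\frac{\nabla\delta\rho_s}{\hat\rho^{\hat c}_s}+\frac{\delta\rho_s\,\nabla\hat\rho^{\bar c}_s}{\hat\rho^{\hat c}_s\hat\rho^{\bar c}_s}.
\]
On $G_s$ both ratios are at least $\theta_s/2$ by (G2), and $|\nabla\log\hat\rho^{\bar c}_s|\le V_s$. This gives
\[
|\hat v^{\hat c}-\hat v^{\bar c}|\le\frac2{\theta_s}\bigl(|\nabla\delta\rho_s|+V_s|\delta\rho_s|\bigr)\le\frac{2K_r(1+V_s)}{\theta_s}\sum_k|\hat c_k-\bar c_k| .
\]
The factor $2$ is chosen to match $\mathcal A_T$, using $\hat\rho\ge\theta/2$. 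Hence the good-particle separation is at most $\mathcal A_T\sum_k|\hat c_k-\bar c_k|$. This factor is common to all particles, so its contribution to $\mathbb E W_2$ is at most $\mathcal A_T\,\mathbb E\sum_k|\hat c_k-\bar c_k|\le\mathcal A_T\Delta_M$ by Cauchy--Schwarz. We combine the good and bad parts through $\sqrt{a+b}\le\sqrt a+\sqrt b$ on the per-particle mean squares.

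The main obstacle is the Gr\"onwall step for the same-sample field. The validity region, and hence the Jacobian bound $L^+ + \mathrm{pert}$, is assumed common to both fields, so that step is licensed. The remaining delicate point is to make sure the bound on the good particles does not hide a dependence between the random coefficients and the event of being good. We handle this by bounding every good particle by the same random quantity $\mathcal A_T\sum_k|\hat c_k-\bar c_k|$ before taking expectations, so no independence is needed.
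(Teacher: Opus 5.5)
Your proof is correct and follows the same route as the paper. The shared steps are the triangle inequality through the population-field particles (whose error is bounded by Theorem~\ref{thm:main}), index pairing, the pointwise bound $|\hat v^{\hat c}_s-\hat v^{\bar c}_s|\le 2K_r(1+V_s)\theta_s^{-1}\sum_k|\hat c_k-\bar c_k|$ on the common region inserted into the Gr\"onwall argument of Lemma~\ref{lem:A10}, the bound $2R_B$ for the remaining pairs, and $\mathbb E|\hat c_k-\bar c_k|\le(\operatorname{Var}_{\mu_0}(\hat\varphi_k)/M)^{1/2}$; you additionally spell out what the paper leaves implicit, namely the good/bad split of the mean square with $\sqrt{a+b}\le\sqrt a+\sqrt b$ and Jensen, and the use of one random bound for all good pairs so that no independence is needed.
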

The term $\mathcal A_T\Delta_M$ is of order $M^{-1/2}$ for fixed $r$, and the last term is controlled by the mass that leaves the validity region. The corollary bounds the cost of same-sample coefficients, and their gain is the cancellation of Proposition~\ref{prop:cancel}.

\begin{proof}[Proof of Corollary~\ref{cor:samesample}]
Condition on the eigenpair data and fix the source sample. With $\delta_s=\hat\rho^{\hat c}_s-\hat\rho^{\bar c}_s=\sum_{k\le r}e^{-\hat\lambda_ks}(\hat c_k-\bar c_k)\hat\varphi_k$ and $\hat\rho^{\hat c}_s\ge\theta_s/2$ on the common region,
\[
\sup_{G_s}|\hat v^{\hat c}_s-\hat v^{\bar c}_s|=\sup_{G_s}\Big|\frac{\nabla\delta_s-\delta_s\nabla\log\hat\rho^{\bar c}_s}{\hat\rho^{\hat c}_s}\Big|\le\frac{2K_r(1+V_s)}{\theta_s}\sum_{k=1}^r|\hat c_k-\bar c_k|.
\]
For a pair of trajectories that remains in the region, the argument of Lemma~\ref{lem:A10} gives $|X^S_{T,j}-X^P_{T,j}|\le\mathcal A_T\sum_k|\hat c_k-\bar c_k|$, and the other pairs are at distance at most $2R_B$. Pairing the particles, the bound $\mathbb E|\hat c_k-\bar c_k|\le(\operatorname{Var}_{\mu_0}(\hat\varphi_k)/M)^{1/2}$ and the triangle inequality through the population measure give the claim. The comparison is pathwise, so that no independence between the field and the particles is needed.
\end{proof}

\subsection{Resolution of a fixed evaluation sample}\label{app:resolution}

In Section~\ref{sec:exp} the $M$ transported particles are compared with one fixed target sample $Y_1$ of size $M$, and the reference level is the mean distance between two independent target samples of size $M$, over ten pairs. To first order, the sliced distance between point clouds is the $L^2(du\,d\theta)$ norm of the difference of their projected quantile fluctuations about those of $\pi$, and independent fluctuations add in squares, so that $\mathbb E\,SW_2^2(X,Y_1)\approx\mathbb E\,SW_2^2(X,\pi)+\mathbb E\,SW_2^2(Y_1,\pi)$ for particles $X$ independent of $Y_1$ and $\mathbb E\,SW_2^2(Y,Y')\approx2\,\mathbb E\,SW_2^2(Y,\pi)$ for independent samples $Y,Y'$ of size $M$. Particles without fluctuation therefore have an error of about $2^{-1/2}$ times the reference level, particles with the fluctuation of an independent sample the reference level, and particles with twice this fluctuation on the modes that dominate the distance about $(3/2)^{1/2}$ times the reference level. By Theorem~\ref{thm:clt} these are the situations of regimes (S), (P) and (I) when the retained modes carry most of the fluctuation.

\section{The Ornstein-Uhlenbeck case}\label{app:ou}

Let $\pi=N(0,1)$, $\lambda_k=k$, $\varphi_k=He_k/\sqrt{k!}$ and $\mu_0=N(m_0,s_0^2)$ with $s_0<1$, $q\coloneqq1-s_0^2$, $m_t\coloneqq m_0e^{-t}$ and $\sigma_t^2\coloneqq1-qe^{-2t}$. Then $\rho_t=N(\cdot;m_t,\sigma_t^2)/N(\cdot;0,1)$, the velocity $v_t(x)=(\sigma_t^{-2}-1)x-m_t\sigma_t^{-2}$ is affine, the flow map $\Phi_{s,t}(x)=m_t+\sigma_t\sigma_s^{-1}(x-m_s)$ has Lipschitz constant at most $1/s_0$, $L^+_t=\sigma_t^{-2}-1$, and $W_2(\mu_t,\pi)=(m_t^2+(\sigma_t-1)^2)^{1/2}$.

\begin{proposition}\label{prop:coeff}
The coefficients of the source are $c_k=q^{k/2}He_k(m_0/\sqrt q)/\sqrt{k!}$, and $c_k^2\le e^{m_0^2/(2q)}q^k$ and $\tau_r^2\le e^{m_0^2/(2q)}q^{r+1}/(1-q)$.
\end{proposition}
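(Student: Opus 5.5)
The plan is to compute the coefficients with the generating function of the Hermite polynomials, to bound them with the classical Cram\'er--Indritz inequality \citep{indritz1961}, and to sum a geometric tail.

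For the formula, I start from the exponential generating function of the probabilists' Hermite polynomials,
\[
\sum_{k\ge0}He_k(x)\frac{s^k}{k!}=\exp\big(xs-\tfrac{s^2}{2}\big).
\]
I take the expectation under $X\sim N(m_0,s_0^2)$, which is legitimate because the series converges absolutely and is dominated by $\exp(|x||s|+s^2/2)$, which is integrable against a Gaussian. This gives
\[
\mathbb E\exp\big(Xs-\tfrac{s^2}{2}\big)=\exp\big(m_0s+\tfrac{s_0^2s^2}{2}-\tfrac{s^2}{2}\big)=\exp\big(m_0s-\tfrac{q s^2}{2}\big).
\]
Substituting $u=\sqrt q\,s$ rewrites the right-hand side as $\exp\big((m_0/\sqrt q)u-u^2/2\big)=\sum_kHe_k(m_0/\sqrt q)\,u^k/k!$. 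Comparing the coefficients of $s^k$ on both sides gives $\mathbb E_{\mu_0}He_k=q^{k/2}He_k(m_0/\sqrt q)$. Dividing by $\sqrt{k!}$ then gives $c_k=\mathbb E_{\mu_0}\varphi_k=q^{k/2}He_k(m_0/\sqrt q)/\sqrt{k!}$.

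For the pointwise bound, I use Indritz's inequality for the physicists' Hermite polynomials in the form $|H_k(y)|\le(2^kk!)^{1/2}e^{y^2/2}$. I translate it with $He_k(x)=2^{-k/2}H_k(x/\sqrt2)$, which gives $He_k(x)^2\le 2^{-k}\,2^kk!\,e^{x^2/2}=k!\,e^{x^2/2}$. Evaluating at $x=m_0/\sqrt q$ gives $c_k^2\le q^k e^{m_0^2/(2q)}$.

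For the tail, I use that the $\varphi_k$ are orthonormal in $L^2(\pi)$ and that $\rho_0\in L^2(\pi)$ because $s_0<1$. Parseval's identity then gives $\tau_r^2=\sum_{k>r}c_k^2$. Since $0<q<1$, the geometric sum gives $\tau_r^2\le e^{m_0^2/(2q)}\sum_{k>r}q^k=e^{m_0^2/(2q)}q^{r+1}/(1-q)$.

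The only delicate step is the Hermite inequality. I must use it with the correct normalisation and constant (constant one, as in \citet{indritz1961}), since Cram\'er's version carries a constant of about $1.086$. I must also convert carefully between the physicists' and probabilists' conventions.
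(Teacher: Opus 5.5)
Your proposal is correct and follows the paper's own argument. The paper likewise obtains the coefficients from the Hermite generating function under $N(m_0,s_0^2)$, bounds them with the constant-one inequality $|He_k(y)|\le\sqrt{k!}\,e^{y^2/4}$ attributed to Indritz, and sums the geometric tail; you add the details it leaves implicit, namely the domination that justifies exchanging sum and expectation, the conversion $He_k(x)=2^{-k/2}H_k(x/\sqrt2)$, and completeness of the Hermite basis in Parseval's identity.
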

The formula follows from the generating function $\sum_kt^k\mathbb E[He_k(X)]/k!=\exp\{tm_0-qt^2/2\}$, and the bounds from Cram\'er's inequality $|He_k(y)|\le\sqrt{k!}\,e^{y^2/4}$ \citep{indritz1961}. For $m_0=1.5$ the formula gives $\tau_r<10^{-2}$ first at $r=9$, $32$ and $102$ when $s_0=0.8$, $0.5$ and $0.3$.

\begin{theorem}[Explicit rates]\label{thm:ou}
Assume $s_0<1$, $b\ge|m_0|+3$, exact eigenpairs and population coefficients $c_k$ (regime (P)), threshold $\varepsilon_0$, bound $v_{\max}\ge V_B+C'(q,V_B)$ with $V_B\coloneqq b(s_0^{-2}-1)+|m_0|s_0^{-2}$, rank $r$ and a fixed horizon $T$. Then, with constants $C,c$ depending only on $(m_0,s_0,b,\delta,T)$,
\begin{equation}\label{eq:main-ou}
  \mathbb E\,W_2(\hat\mu_T,\pi)\le W_2(\mu_T,\pi)+S_M(\mu_T)+C(r+1)^{c}\,\mathcal T_r(\varepsilon_0)+\mathcal E_{\rm tail}(b),
\end{equation}
where $\mathcal T_r(\varepsilon_0)\coloneqq q^{\frac{r+1}{2}}\min\{\varepsilon_0^{-1/2},q^{-\frac{r+1}{4}}\}+b\,q^{\frac{r+1}{2(2-s_0^2)}}+b\,\varepsilon_0^{\frac1{1+q}}$ up to factors $e^{O(\sqrt{\log(1/\varepsilon_0)})}$ and $e^{O(\sqrt{r})}$, subpolynomial in $1/\varepsilon_0$ and in $q^{-r}$, and $\mathcal E_{\rm tail}(b)=O(e^{-(b-|m_0|-1)^2/4})$ up to a polynomial factor in $b$.
\end{theorem}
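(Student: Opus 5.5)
The plan is to specialise Theorem~\ref{thm:main} to the Gaussian setting, in its variant with $\min\{S_M(\pi),S_M(\mu_T)\}$ from Lemma~\ref{lem:endpoint}, so that only the approximation term $C_T$ has to be made explicit. We take $B=[-b,b]$. Since $\rho_0$ is a Gaussian ratio with $s_0<1$, it is bounded, so (A2) holds. The velocity $v_t$ is affine with $L^+_t=\sigma_t^{-2}-1\le s_0^{-2}-1$, and on $B$ we have $|v_t|\le V_B$. Hence (A4) holds globally with $V_G=V_B$ and $H_G=0$, and $\Lambda_{s,T}\le T(s_0^{-2}-1)$ is a constant. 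The mass term $m_B$ and the confinement part of $p_{\rm exit}$ are Gaussian tails of $\mu_t$ beyond distance $b-|m_0|-1$, since $|m_t|\le|m_0|$ and $\sigma_t\le1$. Together with $\|\rho_0\|_\infty$ these give $\mathcal E_{\rm tail}(b)$ up to a polynomial factor in $b$.

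Next we construct the validity region with Lemma~\ref{lem:construct}. In regime (P) with exact eigenpairs, $e_t=e^{\rm tr}_t$. By Lemma~\ref{lem:A4} and (A3), with $K_B$ and $\alpha$ coming from Cramér's inequality for Hermite functions on $[-b,b]$ (envelope $w(x)=e^{x^2/4}$), and by Proposition~\ref{prop:coeff}, we obtain $|\nabla^je_t|\le w\,a_jS_t$ with
\[
S_t\lesssim\sum_{k>r}k^{c}q^{k/2}e^{-kt}e^{m_0^2/(4q)},\qquad a_j\lesssim (r+1)^{j/2}.
\]
For a given $t$ the sum is at most $(r+1)^c(qe^{-2t})^{(r+1)/2}$. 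Taking $\kappa$ fixed in Lemma~\ref{lem:construct} gives $\int_0^T\mathrm{pert}\,dt=O(\mathrm{poly}(r))$. To keep $e^{\hat\Lambda}$ at most polynomial, and not exponential in $r$, I would choose $\kappa_t\sim(r+1)^{-1}$ up to the $e^{O(\sqrt r)}$ slack allowed in the statement. Then Lemma~\ref{lem:A8} bounds the velocity error by $2(|\nabla e|+V_B|e|)/\rho_t$, and we compute $\mathcal J_2$ from $\theta_s^{-1/2}(\|e_s\|_{\Hdot^1}+V_B\|e_s\|)$ with $\|e_s\|\le e^{-(r+1)s}\tau_r$ and $\tau_r\lesssim q^{(r+1)/2}$. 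Near $s=0$ the factor $\theta_s^{-1/2}$ is at most $\varepsilon_0^{-1/2}$. Alternatively, splitting at the time where $S_s\approx\varepsilon_0$ yields the $\min\{\varepsilon_0^{-1/2},q^{-(r+1)/4}\}$ factor. Either way we obtain the first term of $\mathcal T_r$.

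The main obstacle is $p_{\rm exit}$ (and hence $q_*$). For this I use Lemma~\ref{lem:exit}(i), which is exact in one dimension. The complement of $G_t$ in $B$ is where $e^{x^2/4}S_t>\kappa_t\rho_t(x)$. Since $\rho_t(x)\propto\exp\!\big(-(x-m_t)^2/(2\sigma_t^2)+x^2/2\big)$, this is a quadratic inequality, so $G_t=[a_t,b_t]$ is an interval whose endpoints solve
\[
\frac{x^2}{2}\Big(\frac1{\sigma_t^2}-\frac32\Big)-\frac{m_tx}{\sigma_t^2}\approx\log\frac{\kappa_t}{S_t}.
\]
Two cases need care. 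For small $t$, where $\sigma_t^{-2}-\tfrac32>0$ requires $s_0^2<2/3$ or else the region is bounded only by $B$, the condition depends on $|x|$. I would bound the $\mu_t$-mass of $|x|>x_*$ using $\log(1/S_t)\approx\frac{r+1}{2}\log(1/(qe^{-2t}))$. Optimising the Gaussian tail $\exp(-x_*^2/(2\sigma_t^2))$ against the growth $\exp(x_*^2(\cdot))$ should produce the exponent $\frac{r+1}{2(2-s_0^2)}$, which is the second term of $\mathcal T_r$. The constraint $\theta_t\ge2\varepsilon_0$, i.e.\ $\rho_t\ge2\varepsilon_0$, removes the region where $\rho_t$ is small. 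Its $\mu_t$-mass is $\int_{\rho_t<2\varepsilon_0}\rho_t\,d\pi$, and a Gaussian computation with the ratio of the variances gives $\varepsilon_0^{1/(1+q)}$ times $b$, which is the third term. Finally, $\min(J_1/\delta,J_2^2/\delta^2)$ is absorbed into the first term, the square root in $C_T$ is handled by $q_*^{1/2}\le\ldots$ and collected into $C(r+1)^c\mathcal T_r$, and the logarithmic corrections from the $\log(1/\varepsilon_0)$ and $\sqrt r$ terms in the endpoint equations are absorbed as the stated subpolynomial factors. The choice $v_{\max}\ge V_B+C'$ ensures (G3).
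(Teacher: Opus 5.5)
Your route is the paper's: specialise Theorem~\ref{thm:main}, build the validity region with Lemma~\ref{lem:construct} from the Cram\'er envelope $w=e^{x^2/4}$ and $S_t\propto z^{r+1}$, $z=\sqrt q\,e^{-t}$, take $\kappa$ a negative power of $r+1$, and compute $p_{\rm exit}$ with Lemma~\ref{lem:exit}(i). Your $\kappa=(r+1)^{-1}$ works as well as the paper's $(r+1)^{-3/2}$, and your $J_2$ estimate is fine. The gap is in the step you call the main obstacle.

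\textbf{The endpoint equation is wrong.} From $e^{x^2/4}S_t=\kappa_t\rho_t$ and $\log\rho_t=-\log\sigma_t-(x-m_t)^2/(2\sigma_t^2)+x^2/2$, the quadratic coefficient is $\tfrac12(\sigma_t^{-2}-\tfrac12)$, not $\tfrac12(\sigma_t^{-2}-\tfrac32)$. Since $\sigma_t\le1$, it is at least $\tfrac14$. So $\log\rho_t-x^2/4$ is always a concave quadratic, and $G_t$ is always an interval whose endpoints satisfy $x^2(\tfrac1{2\sigma_t^2}-\tfrac14)\approx(r+1)\log(1/z)$ up to the shift by $m_t$.

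Your case split at $s_0^2=2/3$ is therefore spurious. Its conclusion, that the region is then bounded only by $B$, is false: the envelope condition fails for $|x|\gtrsim\sqrt{r+1}$ whatever $b$ is. Followed literally, your coefficient gives the exit mass $q^{(r+1)/(2-3s_0^2)}$. With the correct coefficient, the $\mu_t$-mass beyond the endpoints is $z^{2(r+1)/(2-\sigma_t^2)}=(qe^{-2t})^{(r+1)/(1+qe^{-2t})}$. This is largest at $t=0$, so $p_{\rm exit}\lesssim(r+1)^cq^{(r+1)/(2-s_0^2)}$ from truncation.

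\textbf{Your exponents miss the square root} in $C_T\le J_2+2R_Bq_*^{1/2}+(2m_B)^{1/2}$. The term $b\,q^{\frac{r+1}{2(2-s_0^2)}}$ of $\mathcal T_r$ is $2R_B$ times the square root of the mass above. You assign the exponent $\frac{r+1}{2(2-s_0^2)}$ to the mass itself and then also take $q_*^{1/2}$, which leaves only $q^{\frac{r+1}{4(2-s_0^2)}}$. The threshold term has the same problem: $b\,\varepsilon_0^{1/(1+q)}$ is $R_B$ times the square root of a mass $\varepsilon_0^{2/(1+q)}$. The factor $b$ is $R_B$ and is not part of the mass.

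\textbf{The threshold set is misidentified.} In Lemma~\ref{lem:construct} the threshold acts through $\kappa_t=S_t/(2\varepsilon_0)$. The removed set is therefore $\{\rho_t<2\varepsilon_0e^{x^2/4}\}$, not $\{\rho_t<2\varepsilon_0\}$. Its Gaussian mass is $\varepsilon_0^{2/(1+qe^{-2t})}\le\varepsilon_0^{2/(1+q)}$ up to the shift factors. Your set would be legitimate for the modified region $\{wS_t\le\kappa\rho_t\}\cap\{\rho_t\ge2\varepsilon_0\}$ with $\theta_t=\max\{2\varepsilon_0,S_t/\kappa\}$. Its mass is then about $\varepsilon_0^{1/q}$, which is more than enough. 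In neither case is the mass $\varepsilon_0^{1/(1+q)}$ as you assert.

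A minor point: $\nabla^2\log\rho_t=-(\sigma_t^{-2}-1)$, so $H_G\le s_0^{-2}-1$, not $0$. This only changes constants in $\mathrm{pert}_t$.
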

\begin{proof}[Sketch of proof]
Assumptions (A0) to (A4) hold on $B=[-b,b]$ with $K_B=e^{b^2/4}$, $\alpha=0$, $V_G\le V_B$ and $H_G\le s_0^{-2}-1$, by Cram\'er's inequality and the explicit flow. With $z=\sqrt qe^{-t}$ and $K'=e^{m_0^2/(4q)}$, Proposition~\ref{prop:coeff} gives the envelope bounds of Lemma~\ref{lem:construct} with $w(x)=e^{x^2/4}$, $S_t=K'z^{r+1}/(1-z)$, $a_1=O(\sqrt{r+1})$ and $a_2=O(r+1)$. For $\kappa=(r+1)^{-3/2}$ the lemma gives a validity region with $\mathrm{pert}_t=O((r+1)^{-1/2})$, so that $e^{\hat\Lambda_{0,T}}\le s_0^{-1}e^{CT/\sqrt{r+1}}$. Since $\log\rho_t(x)-x^2/4$ is a concave quadratic, the region is an interval whose half-width $w_t$ satisfies $w_t^2\ge\big((r+1)\log(1/z)-\tfrac32\log(r+1)-c\big)/\big(\tfrac1{2\sigma_t^2}-\tfrac14\big)$. Lemma~\ref{lem:exit}(i) and Gaussian tails then give $p_{\rm exit}=O\big((r+1)^{c}q^{(r+1)/(2-s_0^2)}+\varepsilon_0^{2/(1+q)}+e^{-(b-|m_0|-1)^2/2}\big)$, where the term in $\varepsilon_0$ comes from the set $\{\rho_t<2\varepsilon_0e^{x^2/4}\}$. Lemma~\ref{lem:A4} and $\theta_t\ge\max\{2\varepsilon_0,S_t(r+1)^{3/2}\}$ give $J_2=O\big(\tau_r\min\{\varepsilon_0^{-1/2}(r+1)^{-1/2},(r+1)^{-5/4}q^{-(r+1)/4}\}\big)$. Inserting these bounds into Theorem~\ref{thm:main} with $q_*\le p_{\rm exit}+J_2^2/\delta^2$ and $S_M(\mu_T)=\sigma_TS_M(\pi)$ gives \eqref{eq:main-ou}, and the factors $e^{O(\sqrt{\log(1/\varepsilon_0)})}$ and $e^{O(\sqrt r)}$ come from the shift of the centre and the margin in the Gaussian tail.
\end{proof}
The per-mode ratio $\max\{q^{1/4},q^{1/(2(2-s_0^2))}\}$ equals $0.93$ for $s_0=0.5$. It comes from the conversion of $L^2(\mu_s)$ norms to $L^2(\pi)$ norms at the edge of the validity region. In the bound of Corollary~\ref{cor:tv} the truncation term decays with the sharper ratio $q^{1/2}=0.87$ per mode, since $\|(I-\Pi_r)\rho_0\|_{L^1(\pi)}\le\tau_r$, and the flux of the safeguards is the remaining term. With coefficients from a finite sample, regime (S) is covered by Proposition~\ref{prop:cancel}, Theorem~\ref{thm:clt} and Corollary~\ref{cor:samesample}, and the rank dependence observed in Section~\ref{sec:exp} is that of same-sample BKT. For $\pi=N(0,I_d)$ and a product Gaussian source all objects tensorise, and the number of modes needed for a given accuracy is the product of the numbers needed in each coordinate. When source and target factorise, the coordinate-wise application of BKT in the product experiments avoids this product.

\subsection{A matched comparison of the coefficient regimes}\label{app:ou_path_residual}

For the Ornstein-Uhlenbeck flow the deviation of computed paths from the exact ones is explicit. For points $x_j$ and absolutely continuous paths $Y_j$ with $Y_j(0)=x_j$, the residual $\mathcal R_j(t)=\dot Y_j(t)-v_t(Y_j(t))$ satisfies $Y_j(T)-\Phi_T(x_j)=\int_0^TK(s,T)\mathcal R_j(s)\,ds$ with $K(s,T)=\sigma_T/\sigma_s\le\max\{1,s_0^{-1}\}$, since $v_t$ is affine. With $\nu_0$ and $\hat\mu_T$ the empirical measures of the $x_j$ and of the $Y_j(T)$, this gives
\[
 W_2(\hat\mu_T,\pi)\le W_2(\Phi_T\#\nu_0,\pi)+D_T\le W_2(\Phi_T\#\nu_0,\pi)+\bar D_T,\quad
 \bar D_T=\int_0^TK(s,T)\|\mathcal R(s)\|_M\,ds,
\]
where $\|z\|_M=(M^{-1}\sum_j|z_j|^2)^{1/2}$ and $D_T=\|Y(T)-\Phi_T(x)\|_M$, for any paths and in particular for those of regime (S). For the Runge-Kutta paths the residual is evaluated on the piecewise linear interpolant, and it includes the effects of the step, of the threshold and of the velocity bound. The one-dimensional error against the Gaussian target is computed by the exact formula for the monotone coupling on the quantile intervals.

\paragraph{Matched experiment.} The parameters are $m_0=1.5$, $s_0=0.5$, $T=6$ and $M=2000$, with the analytical Hermite eigenpairs for $r\in\{10,20,40\}$, RK4 steps $h\in\{0.05,0.025\}$ and ten realisations of the source, matched across ranks, steps and regimes. In (P) the coefficients are the population values, in (I) they are estimated from an independent source sample of size $M'=2000$, and in (S) from the $M$ transported points. Within each realisation the transported sample, the coefficient sample and a target reference cloud of the same size are independent. The threshold and the velocity bound are those of Appendix~\ref{app:hyp}, dispersions are sample standard deviations over the ten realisations, and comparisons between regimes or steps are paired within a realisation.

\paragraph{Results.} Table~\ref{tab:ou_path_validation} reports the finer step and Figure~\ref{fig:ou_path_comparison} both steps. The exact-flow value $W_2(\Phi_T\#\nu_0,\pi)$ of the source clouds is $0.0448\pm0.0152$ and the independent-target reference $0.0391\pm0.0096$, and the population distance is $W_2(\mu_6,\pi)=0.0037$. At $h=0.025$, increasing $r$ from $10$ to $40$ reduces the distance to the target from $0.102$ to $0.045$ in (P), from $0.110$ to $0.061$ in (I) and from $0.096$ to $0.023$ in (S). At $r=40$ the population coefficients track the exact endpoints most closely ($D_T=0.0015\pm0.0008$ against $0.033\pm0.012$ for (I) and $0.041\pm0.016$ for (S)). The ordering (S) $<$ (P) $<$ (I) at every rank agrees with Theorem~\ref{thm:clt}, in which $\sigma_I\ge\sigma_P$ and $\sigma_S$ is small when the retained modes carry most of the fluctuation. The same-sample field corrects the sampling error of the retained modes of the source, so that its trajectories depart most from the exact paired trajectories and end closest to the target. The bound $W_2(\Phi_T\#\nu_0,\pi)+\bar D_T$ has the right scale ($0.065$, $0.095$, $0.099$ against measured $0.045$, $0.061$, $0.023$ at $r=40$). Halving the step changes the error by at most $0.001$ at $r=20$ and $40$, and by about $0.015$ at $r=10$, where the threshold and the velocity bound are active, while the population computations at $r=40$ activate neither.

\begin{table}[t]
\centering
\small
\caption{Matched Ornstein-Uhlenbeck computations at $h=0.025$, error to the target, paired endpoint discrepancy $D_T$ and residual-norm bound $\bar D_T$ (numerical quadrature), mean $\pm$ sd over ten matched realisations, for population (P), independent-sample (I) and same-sample (S) coefficients.}
\label{tab:ou_path_validation}
\begin{tabular}{ccrrr}
\toprule
$r$ & coefficients & $W_2$ & $D_T$ & $\bar D_T$ \\
\midrule
10 & P & $0.1020 \pm 0.0180$ & $0.0953 \pm 0.0058$ & $0.2529 \pm 0.0203$ \\
10 & I & $0.1104 \pm 0.0144$ & $0.1011 \pm 0.0108$ & $0.2664 \pm 0.0260$ \\
10 & S & $0.0961 \pm 0.0094$ & $0.1020 \pm 0.0096$ & $0.2563 \pm 0.0168$ \\
20 & P & $0.0533 \pm 0.0135$ & $0.0249 \pm 0.0029$ & $0.0433 \pm 0.0062$ \\
20 & I & $0.0668 \pm 0.0151$ & $0.0415 \pm 0.0116$ & $0.0590 \pm 0.0129$ \\
20 & S & $0.0382 \pm 0.0065$ & $0.0456 \pm 0.0164$ & $0.0629 \pm 0.0191$ \\
40 & P & $0.0447 \pm 0.0151$ & $0.0015 \pm 0.0008$ & $0.0200 \pm 0.0006$ \\
40 & I & $0.0609 \pm 0.0170$ & $0.0330 \pm 0.0119$ & $0.0504 \pm 0.0146$ \\
40 & S & $0.0227 \pm 0.0045$ & $0.0411 \pm 0.0163$ & $0.0545 \pm 0.0152$ \\
\bottomrule
\end{tabular}
\end{table}

\begin{figure}[t]
\centering
\includegraphics[width=\textwidth]{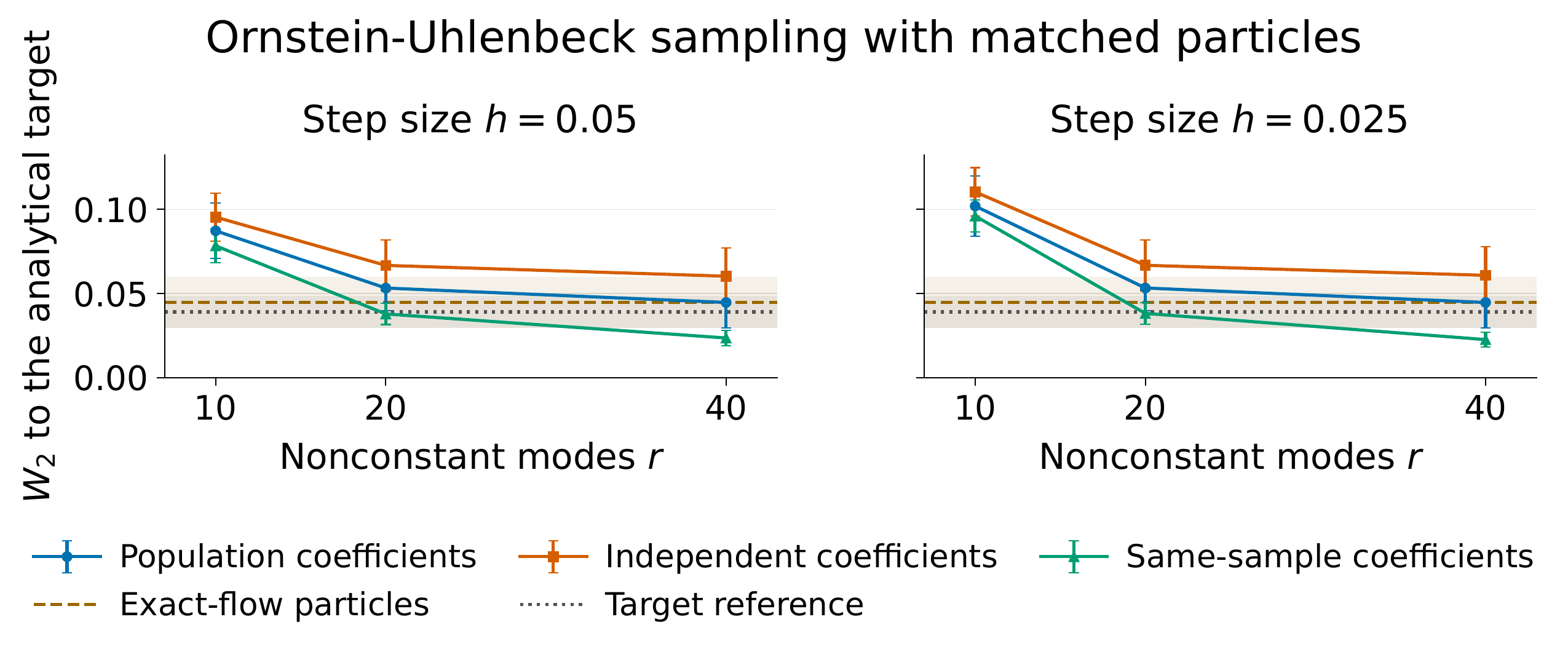}
\caption{Matched Ornstein-Uhlenbeck sampling with analytical eigenpairs. Distance $W_2$ to the target for population (P), independent-sample (I) and same-sample (S) coefficients at both steps. Bars and bands show the mean $\pm$ sd over ten realisations. The exact-flow value is computed from the source clouds of BKT, and the reference from an independent target cloud of the same size.}\label{fig:ou_path_comparison}
\end{figure}

\paragraph{Test functions.} For the smooth test functions $\sin x$, $\tanh x$ and $e^{-x^2}$, Table~\ref{tab:clt} gives the standard deviation of the particle mean $M^{-1}\sum_jf(X^j_T)$ over $200$ realisations of the source, with $M=M'=2000$, $h=0.025$ and the same threshold and velocity bound, together with the asymptotic standard deviations $\sigma/\sqrt M$ of Theorem~\ref{thm:clt}. These are computed from the population flow of $4\cdot10^5$ source points, with $\nabla m(\bar c)$ obtained by central differences of step $10^{-3}$ on $2\cdot10^4$ common source points. The relative sampling error of the observed standard deviations is about $5\%$, and the predictions agree with the observations within ten percent, except for $\sin x$ at $r=40$ in regime (S). There $|\beta|=0.17$, against $|\beta|<0.01$ for $\tanh x$ and $e^{-x^2}$, since $\sin x$ varies in the tails, where the safeguards act and the flux term of Proposition~\ref{prop:identity} enters $\beta$. Regime (P) has the standard deviation of an independent sample, regime (I) a larger one, and regime (S) reduces the standard deviation of an independent sample by factors between $4$ and $20$ at $r=20$ and between $14$ and $70$ at $r=40$. The factor is smallest for $e^{-x^2}$, whose composition with the flow is a narrow bump in the source variable.

\begin{table}[t]\centering\footnotesize\setlength{\tabcolsep}{4pt}
\begin{tabular}{lcccccccc}\toprule
& & & \multicolumn{2}{c}{(P)} & \multicolumn{2}{c}{(I)} & \multicolumn{2}{c}{(S)}\\
$f$ & $r$ & independent & observed & Theorem~\ref{thm:clt} & observed & Theorem~\ref{thm:clt} & observed & Theorem~\ref{thm:clt}\\\midrule
$\sin x$ & 20 & 14.7 & 15.8 & 14.7 & 22.3 & 20.9 & 0.75 & 0.69\\
$\tanh x$ & 20 & 14.0 & 15.4 & 14.0 & 21.3 & 19.8 & 0.76 & 0.80\\
$e^{-x^2}$ & 20 & 7.55 & 6.87 & 7.57 & 9.51 & 10.2 & 2.06 & 1.95\\
$\sin x$ & 40 & 14.7 & 14.1 & 14.7 & 19.1 & 21.0 & 0.35 & 0.48\\
$\tanh x$ & 40 & 14.0 & 13.5 & 14.0 & 18.4 & 19.8 & 0.20 & 0.19\\
$e^{-x^2}$ & 40 & 7.55 & 7.45 & 7.54 & 10.8 & 10.7 & 0.52 & 0.53\\\bottomrule
\end{tabular}
\caption{Matched Ornstein-Uhlenbeck setting with $M=2000$. Standard deviation, in units of $10^{-3}$, of the particle mean of $f$ over $200$ realisations in the three regimes, the asymptotic values $\sigma/\sqrt M$ of Theorem~\ref{thm:clt}, and $\mathrm{sd}_\pi(f)/\sqrt M$ for an independent sample.}\label{tab:clt}
\end{table}

\section{Details of the numerical experiments}\label{app:exp}

\paragraph{Cost of the velocity.} With eigenfunctions given as combinations of $J$ dictionary functions, the coefficients $e^{-\hat\lambda_kt}\hat c_k$, which are common to all particles, are contracted with the eigenvectors first, in $O(rJ)$ operations, and an evaluation of the velocity then costs one evaluation of the dictionary and of its gradients at the $M$ particles, $O(MJd)$ operations in general.

\subsection{Ornstein-Uhlenbeck processes}\label{app:hyp}

\paragraph{Settings.} The process is $dX=-X\,dt+\sqrt2\,dW$ with target $N(0,1)$, the source is $N(1.5,0.5^2)$, and the spectrum is the analytical one, $\lambda_k=k$ and $\varphi_k=He_k/\sqrt{k!}$. The $M=10^4$ particles are transported up to $T=6$ by the fourth-order Runge-Kutta method with step $0.05$, threshold $\varepsilon_0=10^{-3}$ and bound $v_{\max}=20$, without spatial clipping or reflection, and the stability diagnostic below uses a grid on $[-4,4]$. The coefficients $\hat c_k=\frac1M\sum_i\varphi_k(X_0^i)$, $\hat c_0=1$, are computed once from the initial particles that are then transported (regime (S)) and propagated by $e^{-\lambda_kt}$. The error is the empirical $W_2$ distance between the sorted particles and the quantiles of $N(0,1)$, and the reference is the same distance for an independent sample of $N(0,1)$ of the same size, with mean $\pm$ sd over $10$ independent samples ($0.0189\pm0.0049$ for $M=10^4$).

\begin{figure}[t]\centering\includegraphics[width=\textwidth]{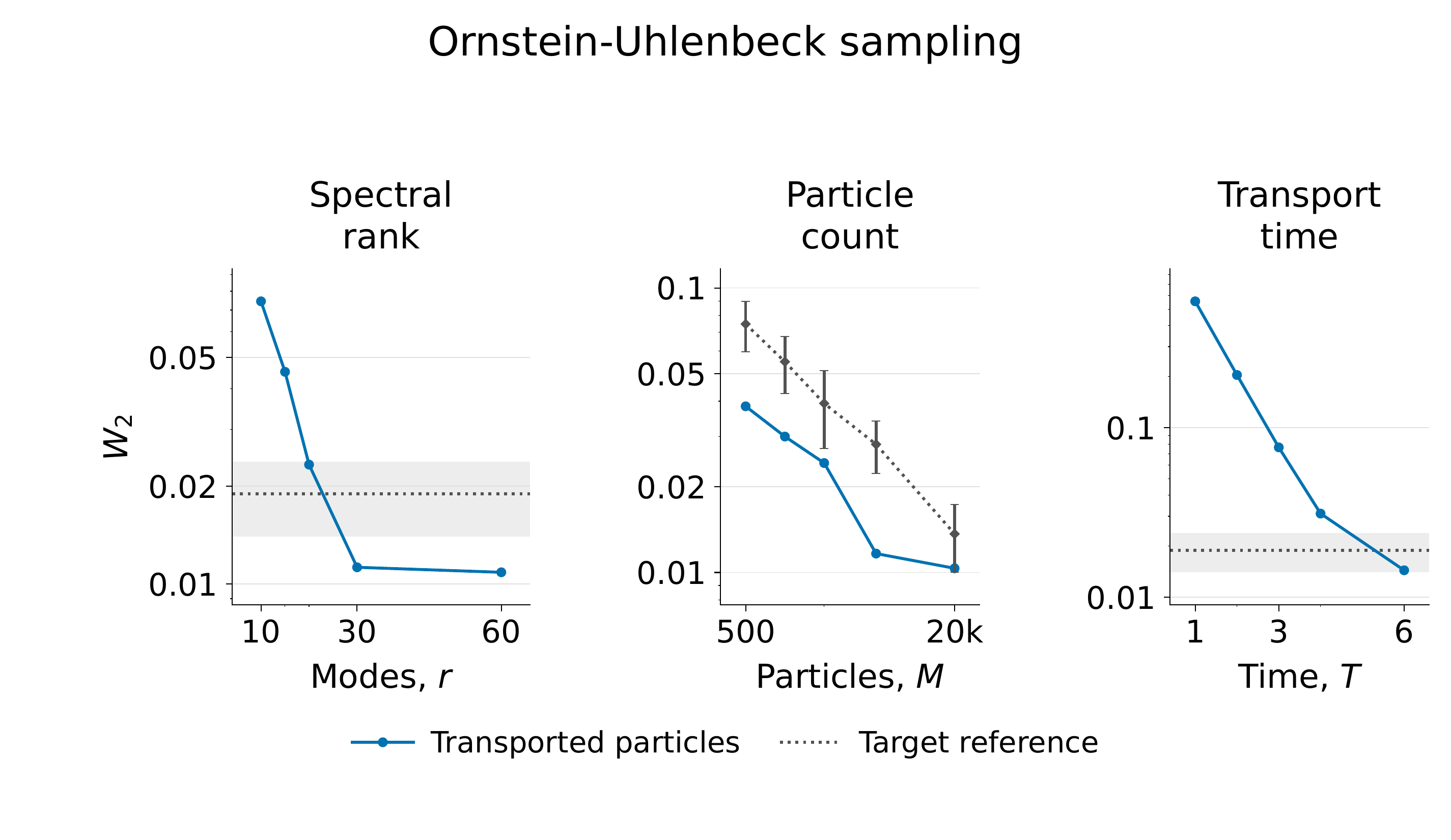}
\caption{Ornstein-Uhlenbeck process with the analytical spectrum and particle coefficients. Error against the number of modes ($M=10^4$, $T=6$, left), against the number of particles ($r=40$, $T=6$, middle, with the reference for each $M$) and against the horizon ($r=40$, $M=10^4$, right). The dotted line, band and bars give the reference mean $\pm$ sd over $10$ repetitions.}\label{fig:ou}
\end{figure}

\paragraph{Dependence on the parameters.} Figure~\ref{fig:ou} gives the three variations quoted in Section~\ref{sec:exp}. The error equals $0.0744$, $0.0451$, $0.0233$, $0.0112$ and $0.0109$ at $r=10,15,20,30,60$. For $r=40$ it equals $0.0384$, $0.0300$, $0.0242$, $0.0117$ and $0.0103$ at $M=500,10^3,2\cdot10^3,5\cdot10^3,2\cdot10^4$, against references $0.0747\pm0.0151$, $0.0550\pm0.0124$, $0.0393\pm0.0120$, $0.0282\pm0.0059$ and $0.0137\pm0.0037$, and $0.556$, $0.204$, $0.076$, $0.031$ and $0.014$ at $T=1,2,3,4,6$. The error of BKT lies below the reference at every $M$ in this variation. The coefficients are computed from the transported initial particles (regime (S)), and the reduction below the reference is the cancellation of Proposition~\ref{prop:cancel}.

\subsubsection*{A ten-dimensional process with nearest-neighbour coupling}\label{app:ou_hd}

\paragraph{Model and analytical eigenpairs.} Let $V_\gamma(x)=\frac12|x|^2+\frac\gamma2\sum_{j=1}^{9}(x_{j+1}-x_j)^2$, $A_\gamma=I+\gamma E^\top E$ and $\pi_\gamma=N(0,A_\gamma^{-1})$, with $\gamma=0.15$, for which the correlations of neighbouring coordinates lie between $0.1170$ and $0.1235$. For $\gamma=0$ the target is the product Gaussian. Write $A_\gamma=Q\,\mathrm{diag}(a_1,\dots,a_{10})Q^\top$ with $Q$ orthogonal and $z_j=\sqrt{a_j}\,q_j^\top x$. The eigenfunctions and decay rates of $L$ are $\varphi_{\boldsymbol n}(x)=\prod_jh_{n_j}(z_j)$ and $\lambda_{\boldsymbol n}=\sum_jn_ja_j$, with $h_k=He_k/\sqrt{k!}$. The isotropic source $N(0.5\mathbf1,0.25I)$ and the target remain products in the coordinates $z$, so the ratio factorises, $\hat\rho_t(x)=\prod_j\hat\rho_{j,t}(z_j)$ with $\hat\rho_{j,t}(z)=1+\sum_{k\le30}e^{-ka_jt}\hat c_{jk}h_k(z)$ and $\hat c_{jk}=M^{-1}\sum_ih_k(z^i_{j,0})$, the velocity is $\dot z_j=-a_j\partial_{z_j}\log\hat\rho_{j,t}$, and the particles are mapped back to the coordinates $x$ for the evaluation. The coupling is thus removed by the rotation, and the experiment concerns a correlated ten-dimensional target with analytical eigenpairs.

\paragraph{Protocol and results.} The computations use $M=5000$ particles, ten realisations of the source, with the same initial cloud across cases, step $0.05$, threshold $10^{-3}$, a componentwise velocity bound $20$ in the coordinates $z$ and no domain restriction. The error is the sliced $W_2$ ($64$ fixed directions) of the transported particles to the analytical projected Gaussian laws, and the reference is the same distance for $10$ independent target samples of size $M$ ($0.0235\pm0.0018$ at $\gamma=0.15$ and $0.0262\pm0.0026$ at $\gamma=0$). The error decreases at the rate $e^{-T}$ and then approaches the finite-particle level, from $0.643$ at $T=0$ to $0.0195\pm0.0005$ at $T=6$ (Figure~\ref{fig:ou_hd}, and $0.0219\pm0.0004$ against $0.0262\pm0.0026$ at $\gamma=0$). The relative covariance error at $T=6$ is $0.042$ against $0.047$ for the reference, and the neighbouring correlations and the $(x_1,x_2)$ marginal agree with the target (Figures~\ref{fig:ou_hd} and~\ref{fig:ou_marginals}). The paths of $0.26\pm0.10\%$ of the particles meet the threshold or the componentwise velocity bound at $\gamma=0.15$ ($0.38\pm0.10\%$ at $\gamma=0$), and the error of the remaining particles is $0.0198\pm0.0006$.

\begin{figure}[t]\centering\includegraphics[width=\textwidth]{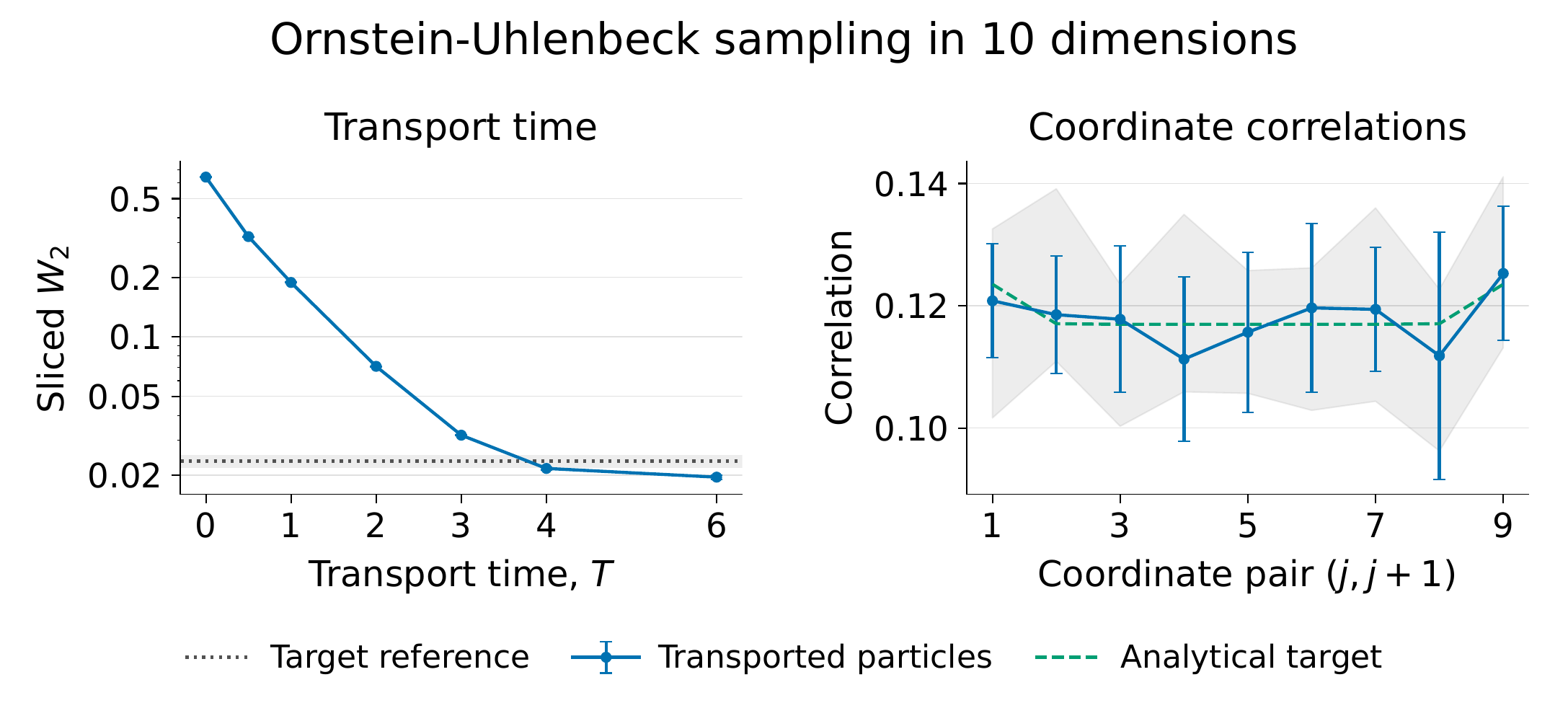}
\caption{Ten-dimensional Ornstein-Uhlenbeck process with coupling $\gamma=0.15$. Sliced $W_2$ against the transport time (left) and neighbouring-coordinate correlations of the transported particles against the target (right), mean $\pm$ sd over ten realisations. The reference is direct sampling from the analytical Gaussian.}\label{fig:ou_hd}
\end{figure}

\begin{figure}[t]\centering\includegraphics[width=\textwidth]{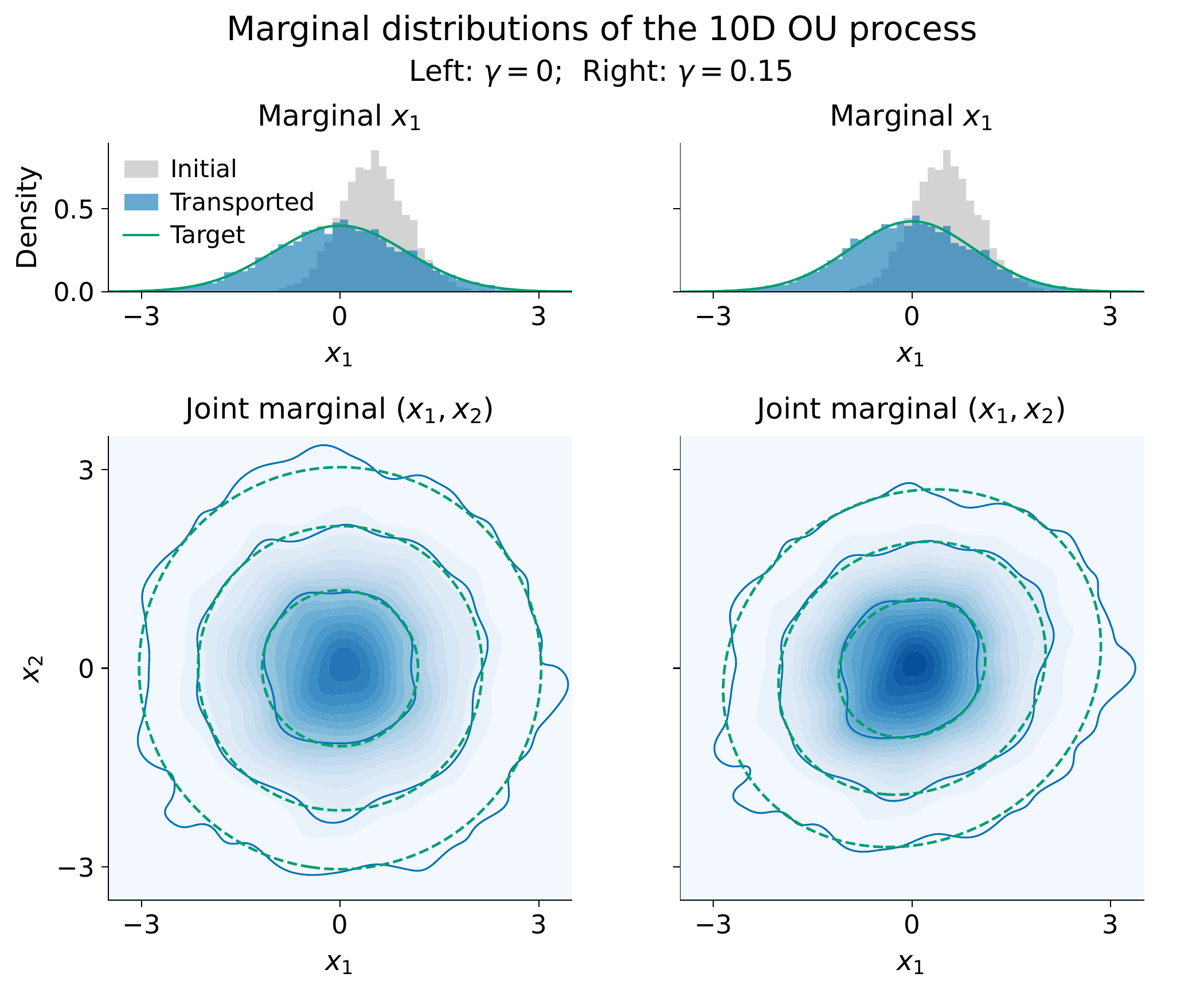}
\caption{Marginals of the transported particles in ten dimensions at $T=6$ (first realisation of the source), $\gamma=0$ (left) and $\gamma=0.15$ (right). The top row shows the $x_1$ marginal of the initial particles, of the transported particles and of the analytical target, the bottom row the $(x_1,x_2)$ joint marginal of the transported particles (kernel density estimate) against the analytical contours enclosing $50\%$, $90\%$ and $99\%$ of the probability.}\label{fig:ou_marginals}
\end{figure}

\paragraph{Verification of the expansion.} The truncated expansion $1+\sum_{k\le r}e^{-kt}c_k\varphi_k$ with population coefficients is compared with the closed-form ratio $\rho_t=N(1.5e^{-t},1-0.75e^{-2t})/N(0,1)$ on the grid, and the closed-form $\chi^2$ sum $\sum_ke^{-2kt}c_k^2$ with quadrature. The differences at finite $r$ are truncation and quadrature errors and decrease as $r$ grows.

\paragraph{Positivity and stability diagnostics (particle coefficients).} On the domain, the minimum of the truncated ratio is $-4.89$, $-0.44$, $-0.40$ at $t=0$ for $r=10,20,40$, becomes positive at $t\approx0.70$, $0.45$, $0.30$, and is $0.031$ at $t=1$ for every $r$. Truncation produces negative densities at early times, which is why Section~\ref{sec:theory} confines the analysis to a validity region. The one-sided Lipschitz integral $\int_0^T(L_t^+)_+dt$ of the approximate velocity is $213$, $97$, $48$ over the whole domain and $36.2$, $4.9$, $4.5$ over a screened region, against the exact value $\log(1/s_0)=0.69$. The screening thus removes almost all of the growth, which is concentrated where the truncated ratio is small and where the analysis uses only the bound of the field. With $\epsilon_t\coloneqq\max_{\rm grid}|\hat\rho_t-\rho_t|$, the grid statistics use the points where $\rho_t\ge\sqrt{\epsilon_t}$, and the particle statistics the particles with $|x|\le b$ and $\rho_t(x)\ge2\epsilon_t$. Along the particle paths, $99.6\%$ of the particle steps fall in the screened region for $r=20$ and $40$, with no activation of the threshold or of the velocity bound inside it. Counted per particle, the paths of $3.2\%$, $0.9\%$ and $0.02\%$ of the particles meet the threshold or the velocity bound for $r=10$, $15$ and $20$, none for $r=30$ and $0.02\%$ for $r=40$.

\subsection{Double wells, multi-wells and products}\label{app:comp}

\paragraph{Protocol.} The repeated experiments of this section have ten realisations, and the rank and dictionary scans use the first one. A realisation consists of a new source sample and, for the estimated spectra, of a new trajectory from which the eigenpairs are estimated. For the double-well, multi-well and product systems the transported particles are compared with one fixed target sample, and the reference compares two independent target samples. For the ten-dimensional Ornstein-Uhlenbeck process both are compared with the analytical projected Gaussian laws (Appendix~\ref{app:ou_hd}), and in one dimension both the transported and the reference particles are compared with the target quantiles. The main dictionary has $J=145$ functions in two dimensions and $J=657$ in ten, and a one-dimensional factor spectrum (FD-1401) is used for the separable systems. The diffusion $dX=-\nabla V\,dt+\sqrt2\,dW$ is simulated with step $2\cdot10^{-3}$ after $20$ time units of equilibration, with the drift-implicit Euler scheme for the polynomial multi-wells, and pairs are taken at lag $0.1$. The sample sizes $10^4$ and $10^5$ are initial segments of the trajectory of $2\cdot10^5$ pairs of each realisation. The eigenpairs are computed by reversible gEDMD from the matrices $G$ and $D$, with whitening of $G$ at tolerance $10^{-8}$ and normalisation in $L^2(\pi)$. In two dimensions the dictionaries are $12\times12$ Gaussians plus the constant ($J=145$, width $7.5\%$ of the domain) and tensor Legendre polynomials of total degree at most $16$ ($J=153$). In ten dimensions they are eight Gaussians per coordinate with products over neighbouring pairs ($J=657$) and Legendre polynomials of total degree at most $3$ ($J=286$). For the products they are $80$ Gaussians plus the constant on $[-2.8,2.8]$ (width $0.14$, whitening $10^{-10}$) or Legendre polynomials of degree $32$ per coordinate ($J=33$). BKT uses fourth-order Runge-Kutta with step $0.05$ (double wells) or $0.02$ (multi-wells and products), $\varepsilon_0=10^{-3}$, $v_{\max}=20$, horizon $T=8/\lambda_1$ with $\lambda_1$ from the FD spectrum or from the fitted generator. In ten dimensions the candidate modes are those whose generator and lag eigenvalues agree within a relative tolerance of $0.5$, and among them the $r$ modes with the largest coefficients, computed from an independent source sample of size $10^5$, are used, so that the selected modes do not depend on the transported particles. The comparisons across sample sizes, dictionaries and coefficient regimes are paired within a realisation. For the two-dimensional double wells the source is a bump of centre $1$ and width $0.6$ in $x_1$ times $N(0,0.3)$ in $x_2$, for which $\rho_0$ is bounded when $\beta\le1$ (the quadratic coefficient of $\log\rho_0$ in $x_2$ is $-(\tfrac23-\tfrac\beta2)<0$). In the ten-dimensional double well the source has the same bump in $x_1$ and nine independent $N(0,0.5)$ harmonic coordinates, which are at equilibrium. The domain $B=[-2.8,2.8]\times[-2.5,2.5]$ of the two-dimensional double wells is enforced by componentwise projection of the trajectory data, of the reference samples and of the transported particles, the latter at the arguments of the Runge-Kutta stages and at the end of each step, so that the finite-step reference law is that of a projected Euler scheme. The paths of fewer than $0.3\%$ of the particles meet the threshold, the velocity bound or the projection ($0.06$ to $0.28\%$ over the spectra and couplings), and the error of the remaining particles differs from that of all particles by at most $0.003$. Halving the transport step changes the $\beta=0.5$ error from $0.0305$ to $0.0309$ (RBF) and by at most $0.0010$ for the other spectra. The FD-201 eigenpairs of the multi-wells use Neumann tail truncation at the points where $U=30$. The polynomial targets are defined on the whole plane and the particles are never clipped or reflected. Their normalisation, distribution function, basin masses (separated by the saddles) and barriers are obtained by direct integration of $e^{-U}$ on $40\,001$ points up to $U=40$. The sample size is $n=2\cdot10^5$ for the two-dimensional systems and $10^5$ for the ten-dimensional ones, and Table~\ref{tab:samplesize} gives the error at all three sizes.

\begin{table}[t]\centering\footnotesize\setlength{\tabcolsep}{5pt}
\begin{tabular}{lcccl}\toprule
system & $n$ & $M$ & $r$ & well masses (target)\\\midrule
2D double well, $\beta=0,0.25,0.5,1$ & $2\cdot10^5$ & 2000 & 64 & --\\
2D four-well product & $2\cdot10^5$ & 2000 & 64 & $0.23$ to $0.27$ ($0.25$ each)\\
2D nine wells & $2\cdot10^5$ & 2000 & 64 & basins of Table~\ref{tab:cells_app}\\
10D double-well product, $\beta=0$ & $10^5$ & 1000 & 16 & left $0.55\pm0.01$ ($0.50$)\\
10D and 50D double-well products & $2\cdot10^5$ & $2\cdot10^4$ & 16 per coordinate & --\\\bottomrule
\end{tabular}
\caption{Settings of the experiments of Table~\ref{tab:cells}, with ten realisations each. The columns give the sample size $n$, the number $M$ of transported particles, the number $r$ of modes and the well masses of the transported particles (RBF, range or mean $\pm$ sd over the realisations) against the target.}\label{tab:settings}
\end{table}

\begin{figure}[t]\centering\includegraphics[width=0.8\textwidth]{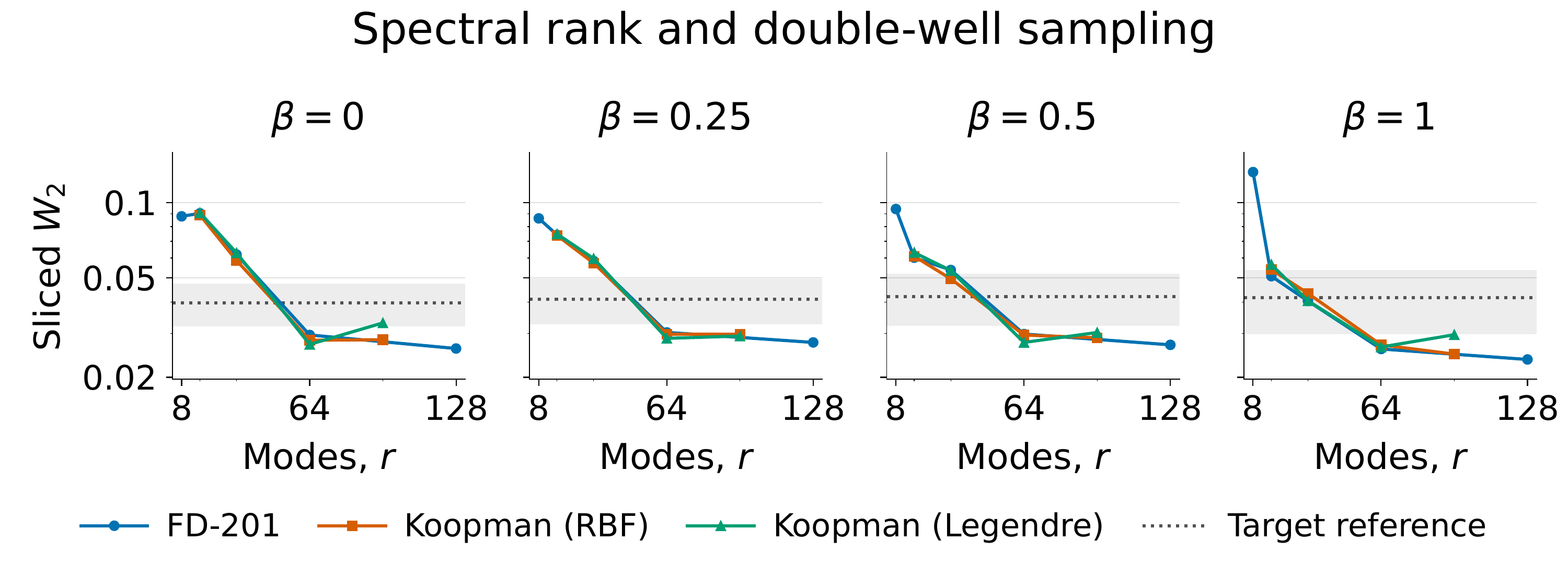}
\caption{Two-dimensional double well. Error against the number of modes for one realisation of the source ($n=2\cdot10^5$), with the FD-201, Koopman (RBF) and Koopman (Legendre) eigenpairs, for $\beta=0,0.25,0.5,1$. The dotted line and band show the reference mean $\pm$ sd over $10$ repetitions.}\label{fig:cells}
\end{figure}

Here \emph{product} refers to the invariant density, since an additive potential $V(x)=\sum_i V_i(x_i)$ gives $\pi(x)=\prod_i\pi_i(x_i)$. The uncoupled double well has one double-well factor and Gaussian factors, whereas the four-well target and the ten- and fifty-dimensional double-well products have two, ten and fifty double-well factors, respectively.

\paragraph{Dictionary sensitivity.} Table~\ref{tab:dict}, Table~\ref{tab:cells_app} and Figure~\ref{fig:cells} compare the two dictionaries at the sample sizes of Table~\ref{tab:settings}. Since BKT itself is unchanged, the comparison measures the sensitivity of the spectral estimation to the dictionary. On the two-dimensional double wells and the four-well the two dictionaries are equivalent within the fluctuation between realisations, and both are below the reference. On the nine-well the Legendre dictionary fails in five of the ten realisations (errors $0.90$ to $2.71$, against $0.036$ to $0.052$ in the other five), with a non-monotone dependence on $r$ ($0.9$ to $2.3$ for $r=8$ to $96$ in the first realisation). The polynomials grow without bound outside the domain, where the estimated eigenfunctions are unreliable, and a few particles are driven far into the tails. The paths of $0.2\%$ of the particles meet a safeguard, and the error of the remaining particles is $0.042\pm0.007$. The same instability appears on the four-well at $r=32$, in nine of ten realisations (median $2.51$), and not at $r=64$ ($0.026\pm0.002$). The RBF dictionary, whose functions are bounded, is stable in all realisations. In ten dimensions it gives the smaller sliced error ($0.061$ against $0.072$). The instability is therefore a property of the unbounded polynomial dictionary, and BKT inherits the stability of the bounded one.

\begin{table}[t]\centering\scriptsize\setlength{\tabcolsep}{3pt}
\begin{tabular}{lcccc}\toprule
system & FD & Koopman (RBF) & Koopman (Legendre) & reference\\\midrule
2D double-well product, $\beta=0$ & $0.027\pm0.002$ & $0.028\pm0.003$ & $0.028\pm0.003$ & $0.040\pm0.008$\\
2D double well, $\beta=0.25$ & $0.028\pm0.002$ & $0.029\pm0.003$ & $0.030\pm0.003$ & $0.041\pm0.008$\\
2D double well, $\beta=0.5$ & $0.028\pm0.002$ & $0.031\pm0.004$ & $0.029\pm0.003$ & $0.042\pm0.010$\\
2D double well, $\beta=1$ & $0.027\pm0.002$ & $0.027\pm0.003$ & $0.028\pm0.003$ & $0.042\pm0.012$\\
2D four-well product & $0.024\pm0.003$ & $0.028\pm0.002$ & $0.026\pm0.002$ & $0.043\pm0.008$\\
2D nine wells & $0.033\pm0.003$ & $0.055\pm0.006$ & fails in $5$ of $10$ & $0.034\pm0.008$\\
10D double-well product, $\beta=0$ & $0.053\pm0.002$ & $0.061\pm0.002$ & $0.072\pm0.008$ & $0.059\pm0.004$\\
10D double-well product & $0.0151\pm0.0003$ & $0.0195\pm0.0021$ & $0.0195\pm0.0021$ & $0.0156\pm0.0014$\\
50D double-well product & $0.0158\pm0.0002$ & $0.0188\pm0.0012$ & $0.0188\pm0.0012$ & $0.0162\pm0.0008$\\\bottomrule
\end{tabular}
\caption{Dictionary comparison (settings of Table~\ref{tab:settings}), sliced $W_2$ with the RBF and the Legendre dictionaries, mean $\pm$ sd over ten realisations, and the reference (mean $\pm$ sd over $10$ repetitions). The nine-well Legendre entry is discussed in the text.}\label{tab:dict}
\end{table}

\begin{table}[t]\centering\scriptsize\setlength{\tabcolsep}{3pt}
\begin{tabular}{llcccccc}\toprule
system & spectrum & $r=8$ & $16$ & $32$ & $64$ & $96$ & $128$\\\midrule
4 wells & FD-201 & 0.129 & 0.124 & 0.051 & 0.024 & -- & 0.022\\
 & Koopman (RBF) & -- & 0.125 & 0.050 & 0.029 & 0.029 & --\\
 & Koopman (Legendre) & -- & 0.128 & 2.44 & 0.027 & 0.027 & --\\
9 wells & FD-201 & 0.325 & 0.199 & 0.049 & 0.036 & -- & 0.032\\
 & Koopman (RBF) & 0.309 & 0.188 & 0.063 & 0.063 & 0.062 & --\\
 & Koopman (Legendre) & 0.91 & 1.56 & 2.34 & 1.64 & 1.88 & --\\\bottomrule
\end{tabular}
\caption{Dependence on the number of modes for the multi-wells, for one realisation of the source and the sample sizes of Table~\ref{tab:settings}, in sliced $W_2$ against $Y_1$. At $r=3$ the four-well gives $0.32$ with every spectrum, and the reference levels are $0.043$ and $0.034$. The double wells are shown in Figure~\ref{fig:cells}, and at $\beta=0.5$ the RBF errors are $0.061$, $0.050$, $0.030$ and $0.029$ at $r=16,32,64,96$.}\label{tab:cells_app}
\end{table}

\paragraph{Coefficients from an independent sample.} On the double wells with the RBF eigenpairs, coefficients from an independent source sample of size $M'=M=2000$ (regime (I)) give $0.049\pm0.014$, $0.049\pm0.014$, $0.048\pm0.011$ and $0.043\pm0.008$ at $\beta=0,0.25,0.5,1$, against $0.028$, $0.029$, $0.031$ and $0.027$ with the coefficients of the transported particles (regime (S)). The paired difference is $0.015$ to $0.021$ and is positive in $39$ of the $40$ pairs (Figure~\ref{fig:regimes}). Relative to the reference mean, regime (S) gives $0.65$ to $0.73$ and regime (I) $1.02$ to $1.23$, against $2^{-1/2}\approx0.71$ for an exact representation of $\pi$ and $(3/2)^{1/2}\approx1.22$ for a doubled sampling error of the retained modes (Appendix~\ref{app:resolution}), in agreement with Proposition~\ref{prop:cancel}, by which the sampling error of the retained modes cancels in (S) and adds to the fluctuation of the coefficients in (I).

\begin{figure}[t]\centering\includegraphics[width=0.95\textwidth]{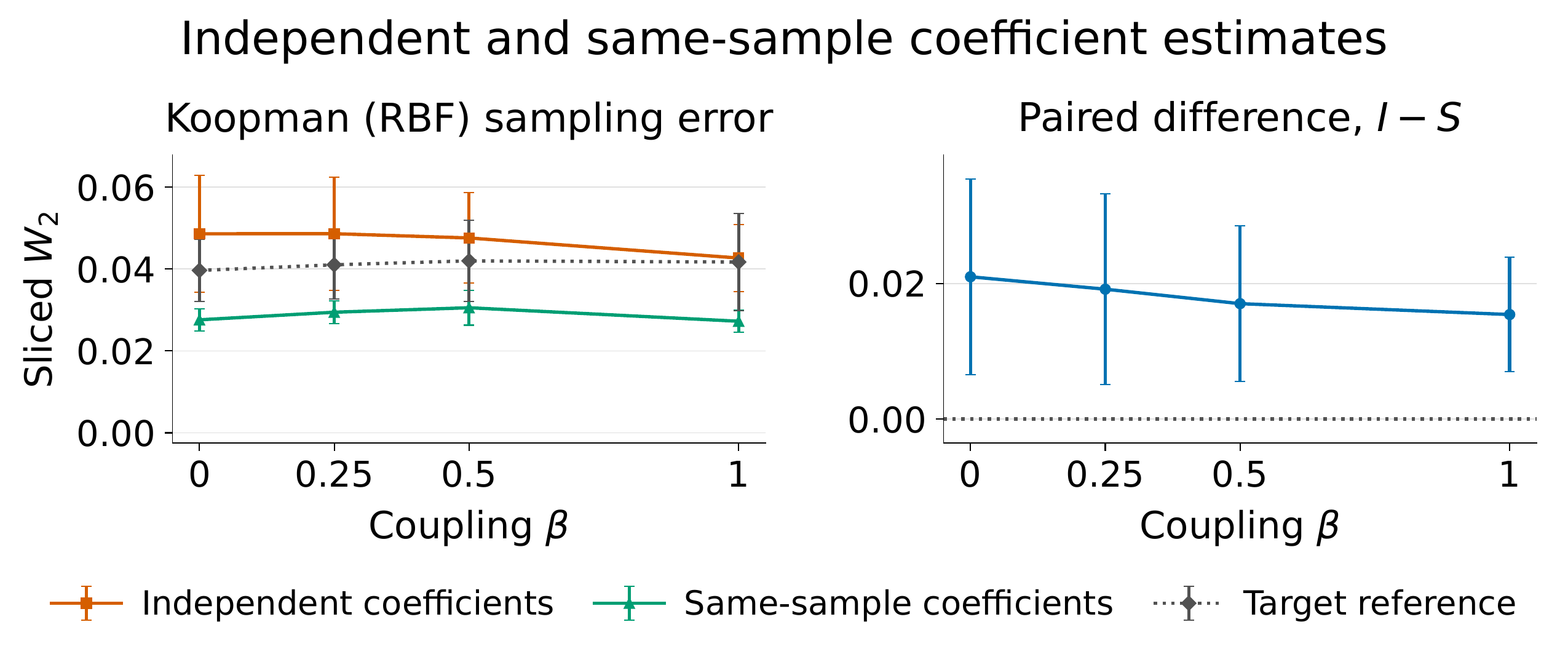}
\caption{Two-dimensional double well. Error with coefficients from the transported particles (regime (S)) and from an independent source sample (regime (I)), mean $\pm$ sd over ten realisations, and reference mean $\pm$ sd over $10$ repetitions.}\label{fig:regimes}
\end{figure}

\paragraph{Sample size.} Table~\ref{tab:samplesize} gives the RBF error at the three sample sizes. It decreases with $n$ in two and in ten dimensions, and the experiments of Table~\ref{tab:cells} use $n=2\cdot10^5$ in two dimensions and $10^5$ in ten.

\begin{table}[t]\centering\footnotesize
\begin{tabular}{lccc}\toprule
system & $n=10^4$ & $n=10^5$ & $n=2\cdot10^5$\\\midrule
2D double-well product, $\beta=0$ & $0.035\pm0.004$ & $0.029\pm0.003$ & $0.028\pm0.003$\\
2D double well, $\beta=0.25$ & $0.036\pm0.004$ & $0.031\pm0.004$ & $0.029\pm0.003$\\
2D double well, $\beta=0.5$ & $0.037\pm0.008$ & $0.031\pm0.004$ & $0.031\pm0.004$\\
2D double well, $\beta=1$ & $0.053\pm0.020$ & $0.028\pm0.002$ & $0.027\pm0.003$\\
2D four-well product & $0.048\pm0.013$ & $0.029\pm0.004$ & $0.028\pm0.002$\\
2D nine wells & $0.087\pm0.024$ & $0.056\pm0.006$ & $0.055\pm0.006$\\
10D double-well product, $\beta=0$ & $0.068\pm0.004$ & $0.061\pm0.002$ & $0.060\pm0.004$\\
\bottomrule
\end{tabular}
\caption{Koopman (RBF) error (mean $\pm$ sd over ten realisations) against the sample size $n$ of the trajectory data.}\label{tab:samplesize}
\end{table}

\paragraph{Comparison with a kernel particle method.} On the double wells at $\beta=0$ and $0.5$, with $n=2\cdot10^5$ pairs and $2000$ output particles (ten realisations, source as above), BKT is compared with the deterministic particle method for diffusion \citep{degond1990,chertock2017}, an interacting particle flow along $\hat b-\nabla\log\hat\mu_t$, where $\hat b$ is the drift estimated by regression of the increments on the gradients of the dictionary and $\hat\mu_t$ a Gaussian kernel density estimate of the current particles (Silverman bandwidth). Table~\ref{tab:comparison} and Figure~\ref{fig:comparison} give the errors and the computation times on one core, estimation included. With the coupling $\beta=0.5$ the kernel flow has a larger mean error and twice the spread ($0.048\pm0.027$ against $0.039\pm0.013$ at $\beta=0$), while the error of BKT is unchanged within the fluctuation between realisations ($0.031\pm0.004$ against $0.028\pm0.003$). At the prescribed observation time the error falls below the reference mean plus one standard deviation in ten realisations out of ten for BKT, and in nine and eight for the kernel flow at $\beta=0$ and $0.5$. Unlike a subsample of stationary data, BKT provides a map from any admissible source to the target, which also applies to non-stationary data.

\begin{table}[t]\centering\footnotesize
\begin{tabular}{lcccc}\toprule
& \multicolumn{2}{c}{sliced $W_2$} & \multicolumn{2}{c}{time (s)}\\
& $\beta=0$ & $\beta=0.5$ & $\beta=0$ & $\beta=0.5$\\\midrule
BKT & $0.028\pm0.003$ & $0.031\pm0.004$ & $9.5$ & $8.7$\\
kernel particle flow & $0.039\pm0.013$ & $0.048\pm0.027$ & $53.6$ & $45.9$\\
reference & $0.040\pm0.008$ & $0.042\pm0.010$ & -- & --\\\bottomrule
\end{tabular}
\caption{Two-dimensional double well, BKT and the kernel particle flow, mean $\pm$ sd over ten realisations, and the reference (mean $\pm$ sd over $10$ repetitions). The times include the estimation of the eigenpairs ($3.1$ and $3.2$ s) and of the drift ($1.3$ s).}\label{tab:comparison}
\end{table}

\begin{figure}[t]\centering\includegraphics[width=0.85\textwidth]{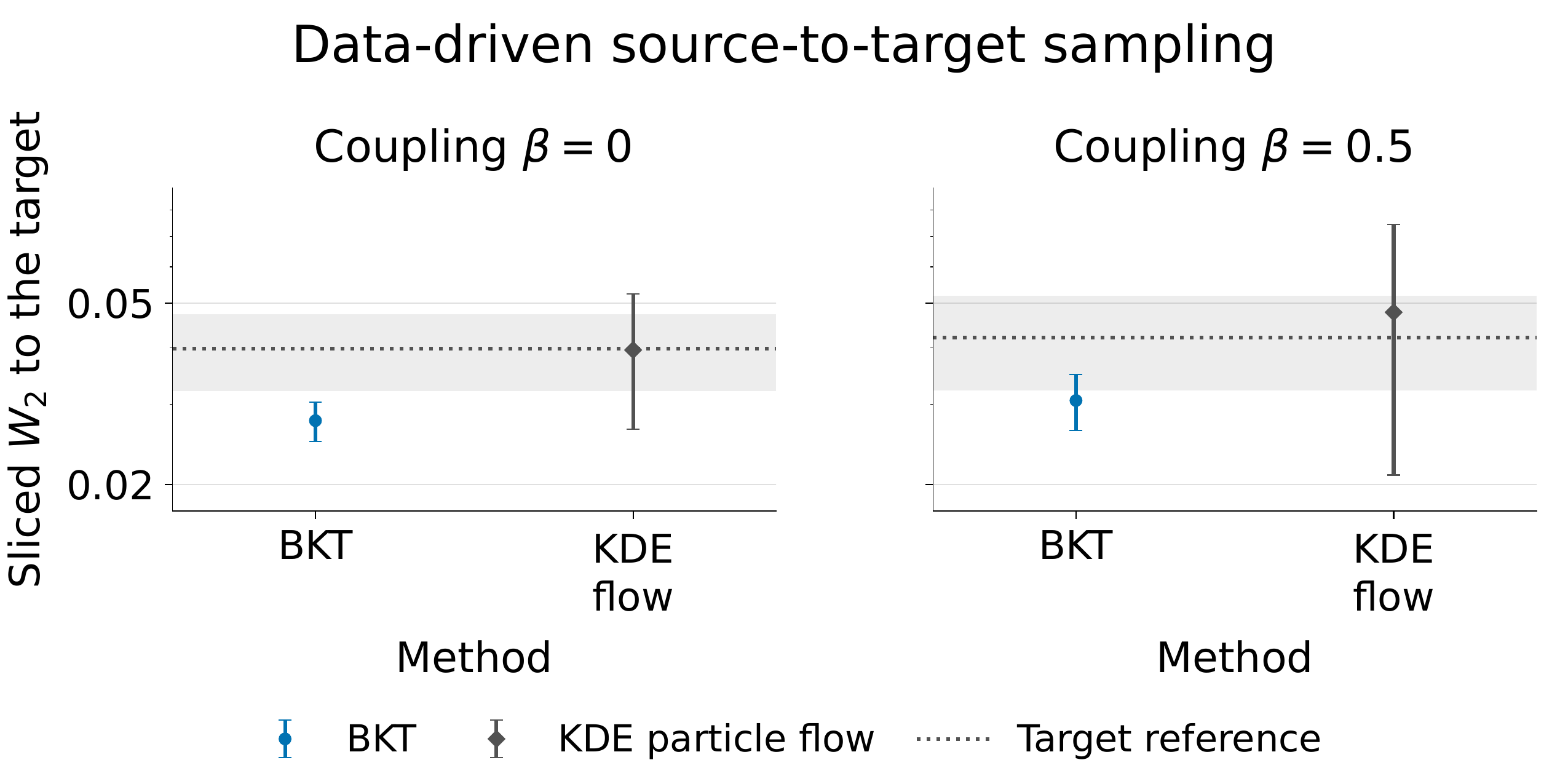}
\caption{Two-dimensional double well at $\beta=0$ and $0.5$. Sliced $W_2$ of BKT and of the kernel particle flow, mean $\pm$ sd over ten realisations. The dotted line and the band show the reference mean $\pm$ sd over $10$ repetitions.}\label{fig:comparison}
\end{figure}

\paragraph{Nine wells.} On this system BKT reaches the reference with the FD spectrum, and the comparison with the estimated spectra isolates the spectral estimation. BKT with the FD-201 eigenpairs gives $0.033\pm0.003$ at $r=64$ ($0.032$ at $r=128$ in the first realisation) against a reference of $0.034\pm0.008$, while Koopman (RBF) gives $0.055\pm0.006$, does not improve beyond $r=32$ (Table~\ref{tab:cells_app}), and the Legendre dictionary fails in half of the realisations (Table~\ref{tab:dict}). A source concentrated in one narrow well needs many modes, and the estimated spectrum supplies about $30$ accurate ones. With accurate eigenpairs BKT thus reaches the reference, and the remaining error is that of the spectral estimation.

\paragraph{Separable products.} For $V=\sum_i(x_i^2-1)^2$ BKT acts coordinate by coordinate with the FD-1401 factor spectrum or with coordinate-wise gEDMD eigenpairs ($n=2\cdot10^5$, $r=16$ per coordinate, $M=2\cdot10^4$, $T=8/\lambda_1=10.69$, step $0.02$), the source and the target being products (Figure~\ref{fig:hd50} for $d=50$). The sliced distances are $0.0151\pm0.0003$ (FD), $0.0195\pm0.0021$ (RBF) and $0.0195\pm0.0021$ (Legendre) against $0.0156\pm0.0014$ for $d=10$, and $0.0158\pm0.0002$, $0.0188\pm0.0012$ and $0.0188\pm0.0012$ against $0.0162\pm0.0008$ for $d=50$. The energy distances are $0.0050\pm0.0010$ for both dictionaries against $0.0052\pm0.0013$ in ten dimensions and $0.0116\pm0.0009$ against $0.0120\pm0.0010$ in fifty. The total variation between the distribution of the number of negative coordinates and its target is $0.0092\pm0.0021$ against $0.0095\pm0.0035$ and $0.0191\pm0.0028$ against $0.0165\pm0.0032$ for the RBF dictionary. The largest marginal errors are $0.0356\pm0.0056$ (RBF) and $0.0354\pm0.0057$ (Legendre) in ten dimensions and $0.0387\pm0.0047$ and $0.0387\pm0.0048$ in fifty, against references of $0.0260\pm0.0055$ and $0.0297\pm0.0045$, with the FD factor at $0.0188\pm0.0014$ and $0.0201\pm0.0009$, so that the excess of the estimated factors comes from the one-dimensional spectral estimation. In each coordinate the paths of $0.12$ to $0.15\%$ of the particles meet the velocity bound. The two dictionaries give the same results, and the number of modes is additive over the coordinates.

\begin{figure}[t]\centering\includegraphics[width=0.9\textwidth]{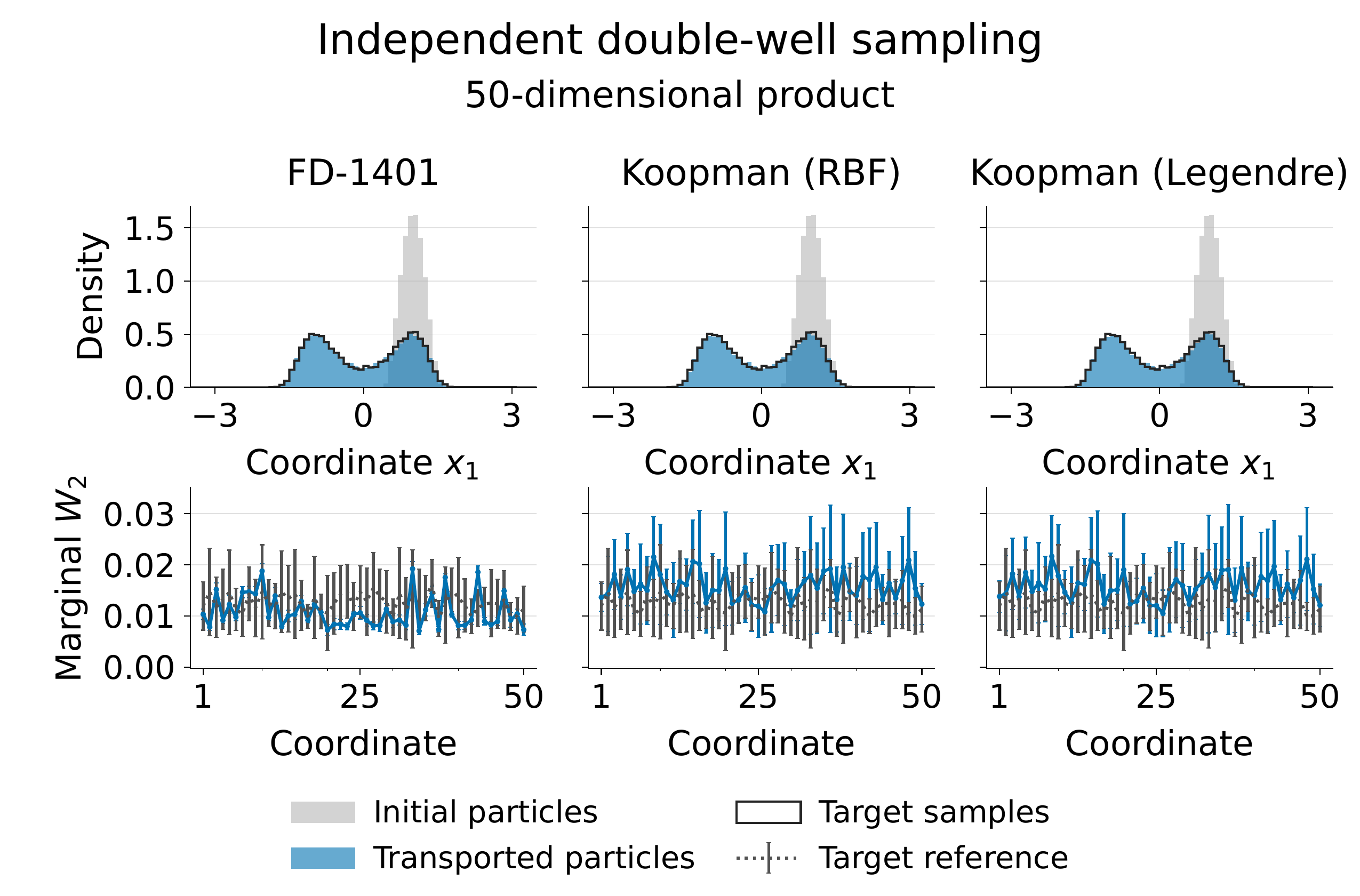}
\caption{Fifty-dimensional double-well product, histograms of the first coordinate of the initial particles, the transported particles and a target sample (top, first realisation) and marginal errors per coordinate, mean $\pm$ sd over ten realisations, with the reference mean $\pm$ sd (bottom), for the FD-1401 factor spectrum and the Koopman (RBF) and Koopman (Legendre) factor spectra.}\label{fig:hd50}
\end{figure}

\subsection{The dihedral marginal of alanine dipeptide}\label{app:ala}

\paragraph{Data and target.} The three public $250$\,ns trajectories of alanine dipeptide in explicit solvent (backbone dihedrals $(\phi,\psi)$ at $1$\,ps, $250{,}000$ frames each, available through \href{https://markovmodel.github.io/mdshare/ALA2/}{\texttt{mdshare}}, \citealp{nuske2017}) are assigned one each to estimation, validation and evaluation. The target is the smoothed empirical marginal of $(\phi,\psi)$ on the torus. The eigenpairs are those of the unit-mobility reversible diffusion with this invariant law, estimated by the Gram and Dirichlet matrices of the estimation frames (smoothed by a wrapped Gaussian of bandwidth $0.10$\,rad) with a real tensor-product Fourier dictionary of degree $28$ in each angle ($3{,}249$ functions), a relative Gram cutoff of $10^{-12}$ and $r=256$ nonconstant modes. The construction uses the Gram and Dirichlet forms of the frames, and the experiment transports one prescribed local source to the smoothed angular distribution.

\paragraph{Transport and evaluation.} The source is a wrapped Gaussian localised in one basin, represented by $1000$ initial points from a scrambled Sobol design. The coefficients are the empirical moments of these points (regime (S)). The equations are integrated to $s=\hat\lambda_1T=8$ with an adaptive eighth-order Runge-Kutta method (DOP853, relative and absolute tolerances $10^{-8}$ and $10^{-10}$, maximum step $0.05$ in $s$). The density threshold is $10^{-3}$ and the velocity bound is $20$ in unscaled time, or $20/\hat\lambda_1$ in $s$. The error is the sliced $W_2$ ($32$ fixed directions) in the embedding $(\cos\phi,\cos\psi,\sin\phi,\sin\psi)$ between the transported particles and $1000$ frames of the evaluation trajectory, and the total variation between the masses of the four Ramachandran basins. The reference is the same distance between two further disjoint clouds of $1000$ evaluation frames, that is, the error of a subsample of the trajectory. Three paired source and reference designs are used with the eigenpairs fixed. Figure~\ref{fig:alanine} shows the final distributions of these runs, which are also those of the comparison below.

\begin{table}[t]\centering\footnotesize\setlength{\tabcolsep}{4pt}
\begin{tabular}{lcccc}\toprule
& draws & sliced $W_2$ & basin-mass TV & reference $W_2$, TV\\\midrule
initial source & 3 & $0.615\pm0.003$ & $0.604\pm0.019$ & --\\
Fourier degree $28$, $r=256$ & 3 & $0.060\pm0.019$ & $0.033\pm0.018$ & $0.051\pm0.021$, $0.023\pm0.014$\\\bottomrule
\end{tabular}
\caption{Alanine dipeptide, sliced $W_2$ and basin-mass total variation of $1000$ transported particles against $1000$ evaluation frames, at $\hat\lambda_1T=8$, mean $\pm$ sd over the three paired designs. The last column gives the reference distances between two further disjoint clouds of evaluation frames.}\label{tab:alanine}
\end{table}

\begin{figure}[t]\centering\includegraphics[width=0.9\textwidth]{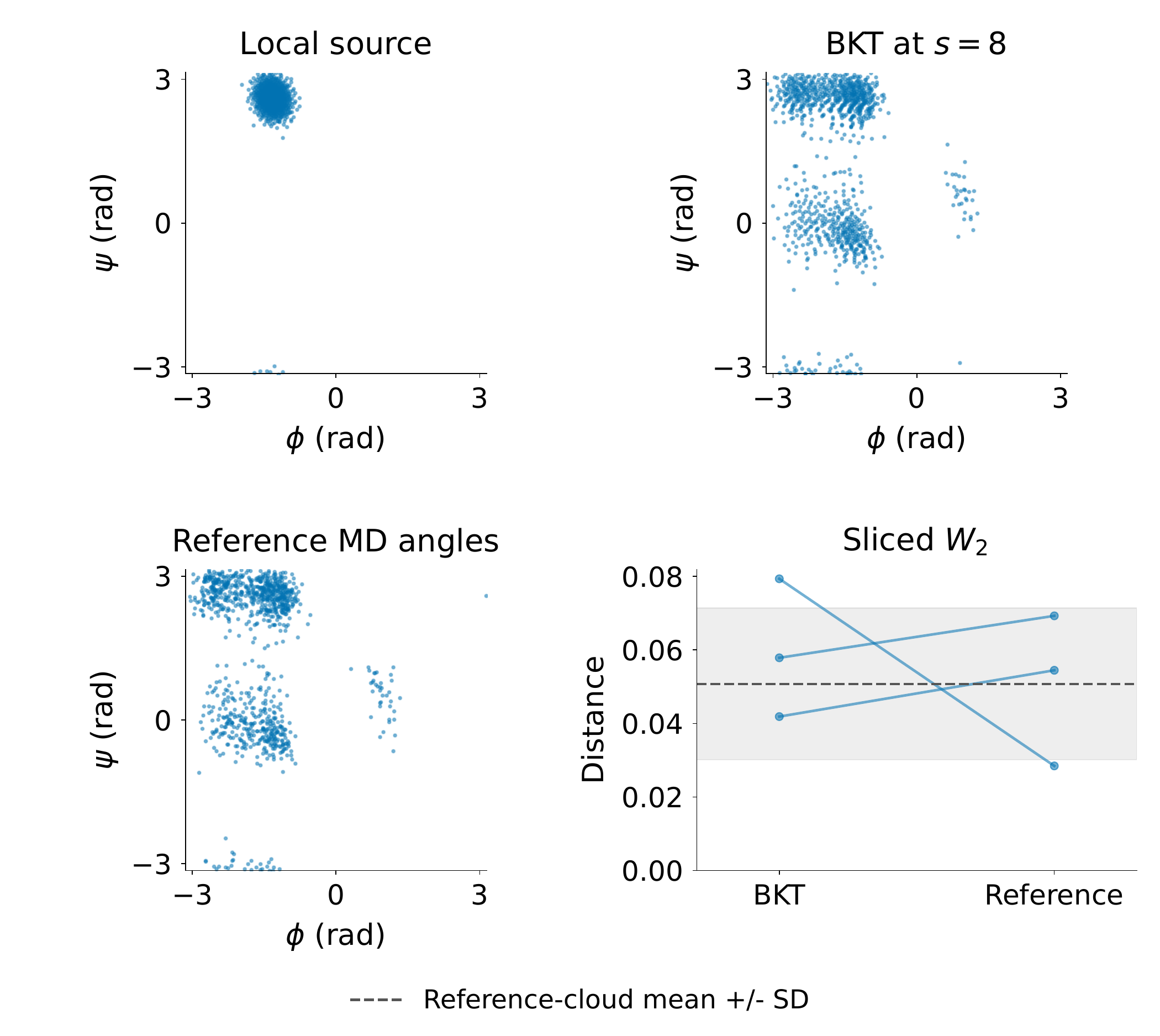}
\caption{Alanine dipeptide with $256$ Fourier modes, the local source, the transported particles at $s=8$ and $1000$ evaluation frames in the $(\phi,\psi)$ plane for the first design, followed by the sliced $W_2$ of the final particles and the paired reference for each of the three designs. The reference band denotes its mean $\pm$ sd.}\label{fig:alanine}
\end{figure}

\paragraph{Results.} The final sliced error is $0.060\pm0.019$ against $0.051\pm0.021$ between reference clouds, and two of the three paired errors are at most their reference. The total variation of the four-basin masses is $0.033\pm0.018$ against $0.023\pm0.014$, and the mean mass of the rare positive-$\phi$ basin is $0.029$ against $0.042$, a difference smaller than the total variation between two reference clouds. The local-width ratio is $1.17\pm0.09$, computed from the median minor-axis standard deviation of twelve-neighbour local clouds, relative to the evaluation cloud (Figure~\ref{fig:alanine}, Table~\ref{tab:alanine}), consistent with the smoothing of the target at bandwidth $0.10$\,rad. With i.i.d.\ initial points in place of the Sobol design the error is $0.058\pm0.021$. The paths of $0.4$ to $0.8\%$ of the particles meet the velocity bound and $0.1$ to $0.2\%$ the threshold.

\paragraph{Comparison with interacting particle methods.} BKT and the Laplacian-adjusted Wasserstein gradient descent (LAWGD) of \citet{chewi2020} are both run with the $256$ nonconstant eigenpairs of the Fourier dictionary of degree $28$, $1000$ initial particles and their empirical coefficients (regime (S)). BKT keeps these coefficients fixed, whereas LAWGD recomputes the current moments in its velocity $-\sum_k\hat\lambda_k^{-1}\nabla\hat\varphi_k\,M^{-1}\sum_i\hat\varphi_k(X^i_t)$. The third method, KDE, adapts the classical diffusion-velocity particle method \citep{degond1990,chertock2017} to periodic kernels. Its target density $\hat\pi$ is fitted once to the $250{,}000$ estimation frames with bandwidth $0.10$\,rad, and its current density $\hat\mu_s$ uses all $1000$ particles, including self terms, with the same bandwidth. In normalised time $s=\hat\lambda_1t$, with $\hat\lambda_1=0.0013776$, its velocity is $(\nabla\log\hat\pi-\nabla\log\hat\mu_s)/\hat\lambda_1$. KDE thus follows the same gradient flow with both densities estimated by kernels and without the generator, whereas BKT and LAWGD take the velocity from the eigenpairs of the generator, which carry the relaxation rates and the metastable structure of the dynamics. The discretisations and integration tolerances of the Fourier and kernel representations are specified in the accompanying code.

Figure~\ref{fig:ala_main} and Table~\ref{tab:alanine_conv} show three paired realisations of the source and of the reference with common estimation frames. BKT reaches the late-time error level by $s=1$ ($0.068$, against $0.210$ for LAWGD and $0.148$ for KDE) and then remains near $0.060$. LAWGD reaches a comparable level at $s=4$, while KDE remains near $0.148$. The two methods built on the spectrum of the generator thus end at $0.06$, and the method built on density estimates at more than twice this error. The kernel method converges to a stationary point of its own velocity $(\nabla\log\hat\pi-\nabla\log\hat\mu_s)/\hat\lambda_1$, at which the logarithmic gradients of the two kernel estimates agree at the particles. The final particles therefore retain the smoothing error of both estimates and the fluctuation of the estimate from $1000$ particles, which are largest where the target density is small, and a longer flow does not reduce them. BKT estimates no density during the transport, since the law of the particles at time $s$ is represented by the propagated coefficients $e^{-\hat\lambda_kt}\hat c_k$. The velocity of BKT uses only the fixed initial coefficients, whereas LAWGD and KDE recompute empirical quantities from the evolving particles. The BKT error reaches its late-time level before that of LAWGD and varies little thereafter. On one thread the estimation of the eigenpairs takes $25$ seconds and the integration to $s=8$ takes $72\pm3$ seconds for BKT, $70\pm4$ for LAWGD and $1056\pm145$ for KDE, and each method first reaches its own late-time level after $34\pm6$, $54\pm5$ and $831\pm90$ seconds.

\begin{table}[t]\centering\footnotesize
\begin{tabular}{lcccc}\toprule
$s=\hat\lambda_1t$ & $0.1$ & $1$ & $4$ & $8$\\\midrule
BKT & $0.091\pm0.011$ & $0.068\pm0.017$ & $0.060\pm0.019$ & $0.060\pm0.019$\\
LAWGD & $0.578\pm0.003$ & $0.210\pm0.004$ & $0.054\pm0.021$ & $0.059\pm0.020$\\
KDE & $0.204\pm0.005$ & $0.148\pm0.004$ & $0.148\pm0.004$ & $0.148\pm0.004$\\\bottomrule
\end{tabular}
\caption{Alanine dipeptide, sliced $W_2$ to the evaluation frames along the normalised flow time, mean $\pm$ sd over three paired realisations, for BKT and the Laplacian-adjusted Wasserstein gradient descent with $256$ eigenpairs, and for the kernel-density particle method with bandwidth $0.10$\,rad.}\label{tab:alanine_conv}
\end{table}

\end{document}